\providecommand{\tabularnewline}{\\}
\theoremstyle{definition}
\newtheorem{defn}{\protect\definitionname}
\theoremstyle{definition}
 \newtheorem{example}{\protect\examplename}
\theoremstyle{plain}
\newtheorem{thm}{\protect\theoremname}
\theoremstyle{plain}
\newtheorem{prop}{\protect\propositionname}
\theoremstyle{definition}
\newtheorem{condition}{\protect\conditionname}
\theoremstyle{remark}
\theoremstyle{plain}
\newtheorem{lem}{\protect\lemmaname}
\theoremstyle{plain}
\newtheorem{assumption}{\protect\assumptionname}
\theoremstyle{remark}
\providecommand{\assumptionname}{Assumption}
\providecommand{\claimname}{Claim}
\providecommand{\conditionname}{Condition}
\providecommand{\definitionname}{Definition}
\providecommand{\examplename}{Example}
\providecommand{\lemmaname}{Lemma}
\providecommand{\notationname}{Notation}
\providecommand{\propositionname}{Proposition}
\providecommand{\theoremname}{Theorem}
\providecommand{\assumptionname}{Assumption}
\providecommand{\claimname}{Claim}
\providecommand{\conditionname}{Condition}
\providecommand{\definitionname}{Definition}
\providecommand{\examplename}{Example}
\providecommand{\lemmaname}{Lemma}
\providecommand{\notationname}{Notation}
\providecommand{\propositionname}{Proposition}
\providecommand{\theoremname}{Theorem}
\begin{document}
\title{\vspace{-60bp}
Misspecified Learning and Evolutionary Stability\thanks{We thank Cuimin Ba, Thomas Chaney, Sylvain Chassang, In-Koo Cho, Krishna Dasaratha, Andrew Ellis, Ignacio
Esponda, Mira Frick, Drew Fudenberg, Alice Gindin, Ryota Iijima, Yuhta
Ishii, Philippe Jehiel, Pablo Kurlat, Elliot Lipnowski, Jonny Newton, Filippo Massari,
Andy Postlewaite, Philipp Sadowski, Alvaro Sandroni, Grant Schoenebeck,
Joshua Schwartzstein, Philipp Strack, Carl Veller, and various conference and seminar
participants for helpful comments.  Byunghoon Kim provided excellent research assistance. Kevin He thanks the California
Institute of Technology for hospitality when some of the work on this
paper was completed, and also acknowledges the University Research Foundation Grant at the University of Pennsylvania for financial support. Jonathan Libgober thanks Yale University and the Cowles Foundation for their hospitality.}}
\author{Kevin He\thanks{University of Pennsylvania. Email: \texttt{\protect\protect\protect\href{mailto:hesichao\%5C\%5C\%5C\%40gmail.com}{hesichao@gmail.com}}}
\and Jonathan Libgober\thanks{University of Southern California. Email: \texttt{\protect\protect\protect\href{mailto:libgober\%5C\%5C\%5C\%40usc.edu}{libgober@usc.edu}}}}
\date{{\normalsize{}{}{}}%
\begin{tabular}{rl}
First version: & December 20, 2020\tabularnewline
This version: & \today\tabularnewline
\end{tabular}}

\maketitle
\vspace*{-20bp}

\begin{abstract}
{\normalsize{}{}{}\thispagestyle{empty} \setcounter{page}{0}}{\normalsize\par}

\noindent We extend the indirect evolutionary approach to the selection of (possibly
misspecified) models. Agents with different models match in pairs to play
a stage game, where models define feasible beliefs about game parameters
and about others' strategies. In equilibrium, each agent adopts the feasible
belief that best fits their data and plays optimally given
their beliefs. We define the stability of the resident model by comparing
its equilibrium payoff with that of the entrant model, and provide
conditions under which the correctly specified resident model can
only be destabilized by misspecified entrant models that contain multiple
feasible beliefs (that is, entrant models that permit inference).
We also show that entrants may do well in their matches against the
residents only when the entrant population is large, due to the endogeneity
of misspecified beliefs. Applications include the selection of demand-elasticity
misperception in Cournot duopoly and the emergence of analogy-based
reasoning in centipede games.   \medskip{}

\noindent \textbf{Keywords}: misspecified Bayesian learning, endogenous misspecifications, evolutionary stability, analogy classes\newpage{}
\end{abstract}

\newpage

\section{Introduction}

In many economic settings, people draw \emph{misspecified inferences}
about the world: they learn from data but exclude the true data-generating process from consideration. \citet{esponda2016berk} introduce Berk-Nash equilibrium to accommodate this observation --- a  solution
concept where agents use data to infer the best-fitting mapping from
actions to outcomes, out of a set of  mappings that are all wrong.
A line of research (discussed below) uses this and related solution
concepts to study the implications of Bayesian learning under particular
misspecifications, with most papers treating misspecifications as
exogenously given.

When should we expect misspecified inference to take hold in a rational
society, as assumed in much of this literature? A defining feature that distinguishes misspecified inference from other kinds of errors and biases is the use of data to form beliefs  --- how does this \emph{belief endogeneity} affect its viability? 
We develop a framework to answer these questions from an evolutionary perspective, studying the objective equilibrium payoffs of different agents. Unlike contemporaneous work in single-agent settings \citep*{FL_mutation,FII_welfare_based}, we focus on strategic interactions where payoffs depend on equilibrium play.

\subsection{Summary of the Setup}

Each agent is endowed with a \emph{model} that contains free parameters; the parameter values  correspond to feasible beliefs about the stage
game and about others' strategies. The model's adherents think that, for some parameter value, 
the instantiated model describes the true
stage game and  opponents' behavior. They estimate the best-fitting parameter
value, which determines their subjective preference. Models rise and
fall in prominence based on  objective equilibrium payoffs of their adherents,
as higher payoffs confer greater evolutionary success.

Society consists of the adherents of multiple competing models, who
match up to play the stage game every period. Agents can identify
which subpopulation their opponent belongs to, and (correctly) know that
the game they play does not depend on opponent type.\footnote{If the players thought that the stage game could change  with the 
opponent, then this would give additional channels for biases to invade
a rational society. Our framework focuses on how the belief endogeneity
that plays a distinctive role in misspecified learning affects the
viability of errors.} Our framework assumes that agents may face one of several possible
stage games, so richer models can in principle help agents by allowing
them to adapt their behavior to the true game. Conditional on a realized stage game,  each agent forms in equilibrium a Bayesian belief
about the game and about others' strategies using data from all their interactions, and plays  a subjective best response to each opponent
subpopulation given this belief.

We say model A is  \emph{evolutionarily stable}  against model
B if, for all sufficiently small population shares of model B, model A's 
equilibrium payoff (averaged across the distribution of stage games) is weakly higher than that of model B. This criterion is familiar from the literature on the \emph{indirect
evolutionary approach}, which considers evolution acting on some trait
that determines agents' strategy choices, as opposed to acting on
these choices directly. Our stability concepts reduce to standard
notions under this approach when models stipulate a dogmatic belief about the stage game. Our main contribution
is to study flexible models that contain multiple feasible beliefs
and, hence,  the role of belief endogeneity in stability. 

\subsection{Belief Endogeneity and the Viability of Misspecifications}

Consider an agent with a flexible model that contains multiple feasible
beliefs. Equilibrium belief is endogenous because the agent infers model parameters from data, and this data depends on population composition and opponents' strategies. By contrast, equilibrium belief is exogenous
for an agent whose model  contains only one feasible belief --- by construction,
data plays no role in shaping this (possibly distorted) belief. Our
formal results identify two novel stability phenomena that can only
arise with belief endogeneity: 
\begin{enumerate}
\item Endogenous beliefs allow agents to adopt different subjective best
responses in different games, so misspecified inference may confer
greater strategic benefits than dogmatic beliefs. 
\item Agents with a fixed misspecification may be weak when rare but strong when common. 
\end{enumerate}
Section \ref{subsec:When-Is-the} discusses  the former point by characterizing
environments where the correctly specified model is only evolutionarily
fragile against invading models that contain multiple feasible beliefs.
One part of our argument constructs an optimal misspecified model
for invading a rational society. This misspecification resembles an
``illusion of control'' bias, where agents think the game's outcome only depends on their own strategy and not on the opponent's
strategy. Adherents of this model end up adopting the optimal commitment against a correctly specified opponent, game by game.

But in some environments, there exists a single commitment strategy
that is beneficial in every game, so belief endogeneity is not necessary
for the invading model. The second part of our argument in Section
\ref{subsec:When-Is-the} identifies a geometric condition that ensures any model with a fixed dogmatic belief across all stage games cannot outperform the correctly specified model. Putting everything together: when the geometric condition holds and the rational model fails to achieve the best commitment payoff in some stage game, misspecified inference is necessary for the rational model's fragility. 

Section \ref{subsec:Stability-Reversals} then turns to the latter point and identifies  a type
of fluidity in a misspecified model's performance based on the population
proportions. Two models are said to exhibit \emph{stability reversal}
if: 
\begin{itemize}
\item Whenever model A is dominant, its adherents strictly outperform model
B's adherents not only on average, but even conditional on the opponent's
type; and
\item Whenever model B is dominant, its adherents strictly outperform model
A's adherents on average.
\end{itemize}
In the absence of belief endogeneity, the first condition would imply
that A outperforms B regardless of the two subpopulations' sizes.
But this no longer holds when belief is endogenous due to misspecified
inference. The reason is that the data from the B-vs-B matches  may induce more evolutionarily
advantageous beliefs than the data from the B-vs-A matches. 

\subsection{Applications}

Extending the indirect evolutionary approach to accommodate models
with belief endogeneity lets us analyze new applications. Our companion paper, \citet{LQNPaper}, illustrates this point by studying the selection of misspecified higher-order beliefs in an incomplete-information Cournot duopoly game. In the present paper, Section \ref{sec:example} presents a simpler complete information duopoly game, while Section \ref{sec:ABEE} considers
the selection of \emph{analogy classes} in extensive-form games \citep{jehiel2005analogy},
based on the payoffs for players with different analogy classes. Under
analogy-based reasoning, players (incorrectly) believe opponents choose the same action distribution at all nodes within an analogy class and infer that distribution from  empirical frequencies. Our approach predicts not only
that analogy-based reasoning may invade a correctly-specified society,
but also that the two can coexist. By solving for the corresponding stable population composition, we obtain sharp predictions on the relative
prominence of analogy-based reasoning as a function of the underlying
stage game.

\subsection{\label{subsec:literature}Related Literature}

Our paper contributes to the literature on misspecified Bayesian learning
by proposing a framework to assess which specifications are more likely
to persist based on their objective payoffs. Misspecified inference in such a framework leads to endogenous beliefs, which in turn generate new  phenomena in  payoff-based selection of biases.  Most prior works on
misspecified Bayesian learning, by contrast, take the misspecification as exogenous, studying the subsequent implications in both single-agent decision problems\footnote{See \citet*{nyarko1991learning,fudenberg2017,heidhues2018unrealistic,he_gambler,cho2025learning}. Also related is \cite{Fudenberg2024} who show how memory limitations yield inferences resembling those in misspecified learning models.}
and multi-agent games.\footnote{See \citet*{bohren2016informational,bohren2017bounded,jehiel2018investment,molavi2019macro,dasaratha2020network,ba2020overconfidence,frick2019misinterpreting,Murooka2023}.}
Several papers establish general convergence properties of misspecified
learning.\footnote{See \citet*{esponda2016berk,esponda2019asymptotic,frick2019stability,FLS_general_conv}.}

Our interest in endogenizing misspecifications using objective payoffs\footnote{A separate line of work that has used objective payoffs to endogenize misspecified inference, restricting attention to financial markets \citep{sandroni2000markets,massari2020under}, while our approach applies to general strategic environments.} contrasts with alternatives using subjective expectations of payoffs\footnote{See \citet*{olea2019competing,Levyetal2020,gagnon2018channeled}}
or goodness-of-fit tests.\footnote{See \citet{cho2015learning,cho2017gresham,ba2020,schwartzstein2020using,Lanzani2022}.}  There are several other papers that also use objective payoffs as the selection criterion. Our work differs in that we assume  agents update beliefs using Bayes' rule and focus on the selection between various misspecifications under Bayesian learning. \citet*{massari2020learning} show that learning rules departing from Bayes' rule can achieve higher objective payoffs, while \citet*{massari2024rational} show that maximizing a combination of accuracy and payoffs can improve performance along \emph{both} dimensions. Similarly, \citet*{heller2020biased} and \citet*{berman2020naive} study the evolution of different belief-formation processes under a reduced-form (and possibly non-Bayesian) approach,
considering arbitrary inference rules.

Two independent and contemporaneous papers,
\citet*{FL_mutation} and \citet*{FII_welfare_based}, also consider
payoff-based criteria under Bayesian inferences for selecting misspecifications, but restrict attention to single-agent
decision problems.\footnote{\citet*{FL_mutation} study a framework where a
continuum of agents with heterogeneous misspecifications arrive each
period and learn from their predecessors' data. \citet*{FII_welfare_based}
assign a \emph{learning efficiency index} to every misspecified signal
structure and conduct a robust comparison of welfare under different
misspecifications.} We differ in highlighting that belief endogeneity can \emph{strictly} expand the possibility for misspecifications to invade rational societies in strategic settings (relative to biased invaders who do not draw inferences).

Our framework of competition between different specifications for
Bayesian learning is inspired by the evolutionary game theory literature. Relative to this literature, our contribution is to accommodate misspecified inference. 
We follow past work that also
uses objective payoffs as the selection criterion for subjective preferences in games and decision problems
(e.g., \citet*{Dekeletal2007}, see also the surveys \citet*{robson2011evolutionary}
and \citet*{Alger2019survey}) and the evolution of constrained strategy
spaces \citep*{heller2015three,heller2016rule}. Like us, \citet*{GuthNapel2006} allow for stage-game heterogeneity, studying the ability to discriminate between these games.

\section{Environment and Stability Concept \label{sec:Environment-and-Stability}}

We start with our formal stability concept, defining \emph{equilibrium
zeitgeist} to determine the evolutionary fitness of specifications
that coexist in a society. Our general setup allows agents to both
learn about the fundamentals and draw inferences about others' strategies:
indeed, misinference about opponent's strategy is central to our application
in Section \ref{sec:ABEE}. But most of our results and applications
concern a special case of the setup where agents correctly know others'
strategies in equilibrium, so the focus is on misinference about fundamental
uncertainty and the role of such misinference on evolutionary selection.
In the main text we primarily focus on the steady-state characterization
of equilibrium zeitgeists, but we provide a learning foundation for
this solution concept in Appendix \ref{sec:Learning-Foundation}. 

\subsection{Objective Primitives}\label{subsec:Objective-Primitives}

Agents in a population repeatedly match to play a stage game, which
is a symmetric two-player game with a common, metrizable strategy
space $\mathbb{A}$. There is a set of possible states of nature $G\in\mathcal{G}$,
called \emph{situations}. The strategy choices $a_{i},a_{-i}\in\mathbb{A}$
of $i$ and $-i$, together with the situation, stochastically generate
consequences $y_{i},y_{-i}\in\mathbb{Y}$ from a metrizable space
$\mathbb{Y}$. Each agent $i$'s consequence $y_{i}$ determines their
utility, according to a common utility function $\pi:\mathbb{Y}\to\mathbb{R}$, which
we take to be Borel measurable with respect to the sigma algebra generated
by the topology on $\mathbb{Y}$. The objective distribution over
consequences is $F^{\bullet}(a_{i},a_{-i},G)\in\Delta(\mathbb{Y}),$
with an associated density or probability mass function denoted by
$f^{\bullet}(a_{i},a_{-i},G),$ where $f^{\bullet}(a_{i},a_{-i},G)(y)\in\mathbb{R}_{+}$
for each $y\in\mathbb{Y}$. We suppress $G$ from $f^{\bullet}$ and
$F^{\bullet}$ when $|\mathcal{G}|=1$. We allow for $\mathbb{Y}$
to be general outside of the previous technical restrictions.

This setup captures mixed strategies (if $\mathbb{A}$ is the set
of mixtures over some pure actions), incomplete-information games
(if $S$ is a space of private signals, $A$ a space of actions, and
$\mathbb{A}=A^{S}$ is the set of signal-contingent actions), and
even asymmetric games. For the latter, we consider the ``symmetrized''
version where each player is placed into each role with equal probability
(see Section \ref{sec:ABEE} for one application where agents play
an asymmetric game).

In addition, at the end of a match where the strategy profile $(a_{i},a_{-i})$
is played, each agent $i$ observes a \emph{monitoring signal} $m_{i}$
about the opponent's strategy which only depends on $a_{-i}$ and not
on $a_{i}$ or the situation. Let $\mathbb{M}$ be the space of monitoring
signals, and let the objective distribution over monitoring signals when the opponent plays $a_{-i}$
be given by the density or probability mass function $\varphi^{\bullet}(a_{-i})$,
where $\varphi^{\bullet}(a_{-i}):\mathbb{M} \to \mathbb{R}_{+}$. The monitoring
signal $m_{i}$ is payoff irrelevant and is generated independently
of the consequences. Our framework separately defines the monitoring signal for the expositional simplicity of introducing the special case of environments with strategic certainty (where the monitoring signal is perfect) and for discussing the learning foundation of equilibrium in such environments (where we make the monitoring signal ``almost perfect'').

\subsection{Models and Parameters}

Throughout this paper, we will take the strategy space $\mathbb{A},$
the set of consequences $\mathbb{Y},$ the utility function over consequences
$\pi$, the set of monitoring signals $\mathbb{M}$ and the strategy
monitoring structure $\varphi^{\bullet}$ to be
common knowledge among the agents. But, agents are unsure about how
play in the stage game translates into consequences: that is, they
have \emph{fundamental uncertainty} about the function $(a_{i},a_{-i})\mapsto F^{\bullet}(a_{i},a_{-i},G).$ While we assume that the situation $G$ is unobserved, we allow agents
to draw inferences about it by observing the consequences from the
matches they face. Agents may also be unsure about which strategies
others use (\emph{strategic uncertainty}), but they could get some information about others' play  through their consequences and monitoring signals. 

We focus on the case where society consists of two\footnote{We view the case of two groups of agents with different models as
the natural starting point, though it is straightforward to generalize
Definition \ref{def:Zeitdef} to the case of more than two groups.} observably distinguishable groups of agents, A and B, who may behave
differently in the stage game due to different beliefs about how $y$
is generated and about the strategies of their opponents. The two
groups of agents entertain different \emph{models} of the world that
help resolve their fundamental uncertainty and strategic uncertainty.
A model $\Theta$ is a collection of \emph{parameters} $(a_{A},a_{B},F)$
with $a_{A},a_{B}\in\mathbb{A}$ and $F:\mathbb{A}^{2}\to\Delta(\mathbb{Y})$. 
So, each parameter specifies conjectures $a_{A},a_{B}$ about how
group A and group B opponents will act when playing against the agent. It also contains a conjecture
$F$ about how strategy profiles translate into consequences for the
agent. So, we can view each model as a subset of $\mathbb{A}^{2}\times(\Delta(\mathbb{Y}))^{\mathbb{A}^{2}}$.
We assume the marginal of the model on $(\Delta(\mathbb{Y}))^{\mathbb{A}^{2}}$
is indexed by some $\gamma\in\Gamma$ for a metric space $\Gamma$
acting as an indexing set, so this marginal can be written as $\{F_{\gamma};\gamma\in\Gamma\}.$
For each $F_{\gamma}$ that is part of some parameter and for every
$(a_{i},a_{-i})\in\mathbb{A}^{2}$, we suppose $F_{\gamma}(a_{i},a_{-i})$
is a Borel measure on $\mathbb{Y}$ and it has associated with it
a density or probability mass function $f_{\gamma}(a_{i},a_{-i}):\mathbb{Y}\to\mathbb{R}_{+}$.
We also suppose that for every $(a_{i},a_{-i}),$ the map $\gamma\mapsto\mathbb{E}_{y\sim F_{\gamma}(a_{i},a_{-i})}[\pi(y)]$
is Borel measurable.\footnote{Note that this measurability property would follow from the measurability
of the mapping $\gamma\mapsto f_{\gamma}(a_{i},a_{-i})(y)$ for each
fixed $(a_{i},a_{-i},y)$, under some further restrictions necessary
to apply Fubini's theorem to the function $(a_{i},a_{-i},y)\mapsto\pi(y)f(a_{i},a_{-i})(y)$.}

Each agent enters society with a persistent model, which depends entirely
on whether she is from group A or group B. We refer to the agents
who are endowed with a given model  the \emph{adherents} of that
model. We call a model \emph{correctly specified }if it is a superset
of $\mathbb{A}^{2}\times\{F^{\bullet}(\cdot,\cdot,G):G\in\mathcal{G}\}$
, so the agent can make unrestricted inferences about others' strategies
and does not rule out the correct data-generating process $F^{\bullet}(\cdot,\cdot,G)$
for any situation $G$. We call $\Theta=\mathbb{A}^{2}\times\{F^{\bullet}(\cdot,\cdot,G):G\in\mathcal{G}\}$
the \emph{minimal correctly specified }model. A model may exclude
the true $F^{\bullet}(\cdot,\cdot,G)$ that produces consequences,
at least in some situation $G$, or it may exclude some strategies
as feasible conjectures of others' play. In this case, the model is
\emph{misspecified}.

An important special case of the setup focuses purely on misinference
about fundamental uncertainty. 
\begin{defn}
An environment has \emph{strategic certainty} if
\end{defn}

\begin{itemize}
\item $\mathbb{M}=\mathbb{A}$ and $\varphi^{\bullet}(a_{-i})$ puts probability
1 on $a_{-i}$ for every $a_{-i}\in\mathbb{A}$,
\item The model of each group $g\in\{A,B\}$ is of the form $\mathbb{A}^{2}\times\mathcal{F}_{g}$
for some $\mathcal{F}_{g}\subseteq(\Delta(\mathbb{Y}))^{\mathbb{A}^{2}}$,
and 
\item Every $y\in\mathbb{Y}$ with the property that $f^{\bullet}(a_{i}',a_{-i}',G)(y)>0$
for some $a_{i}',a_{-i}'\in\mathbb{A}$ and some $G\in\mathcal{G}$
also satisfies $f(a_{i}',a_{-i}'')(y)>0$ for every $a_{-i}''\in\mathbb{A}$
and every $f$ that is the density or probability mass function of
some conjecture $F\in\mathcal{F}_{A}\cup\mathcal{F}_{B}$. 
\end{itemize}

In environments with strategic certainty, monitoring signals perfectly
reveal opponent's strategy and all agents can make unrestricted strategic
inferences. Combined with the assumption that all consequences that
agents can observe when they play $a_i'$ have positive likelihood under any of their feasible
conjectures about fundamental uncertainty, this will imply that agents
hold correct beliefs about their opponents' strategies in the equilibrium
concept that we define below. In such environments, the focus is on
different groups' feasible conjectures about the fundamental uncertainty,
$\mathcal{F}_{A}$ and $\mathcal{F}_{B}$. We will therefore sometimes
omit mention of the monitoring signal when analyzing environments
with strategic certainty. 

In environments with strategic certainty, a model $\Theta=\mathbb{A}^{2}\times\mathcal{F}$
with $|\mathcal{F}|=1$ is called a \emph{singleton} model. This terminology
refers to the fact that such models stipulate a single dogmatic belief
about the fundamental uncertainty (though agents still make flexible
inferences about others' strategies). Adherents of singleton models
do not draw inferences about the game from data.  Since the situation is itself
unobserved, this implies such agents also do not change their preferences with the situation, since this is only possible through
drawing different inferences in different situations.

\subsection{Zeitgeists}\label{subsec:ZeitDef}

To study competition between two models, we must describe the social
composition and interaction structure in the society where learning
takes place. We have in mind a setting where each agent plays the
stage game with a uniformly random opponent in every period and uses
their personal experience in these matches to calibrate the most accurate
parameter within their model. A zeitgeist describes the corresponding
landscape. 
\begin{defn}
\label{def:Zeitdef}Fix models $\Theta_{A}$ and $\Theta_{B}$. A
\emph{zeitgeist $\mathfrak{Z}=(\mu_{A}(G),\mu_{B}(G),p,a(G))_{G\in\mathcal{G}}$
}consists of: (1) for each situation $G,$ a belief over parameters
for each model, $\mu_{A}(G)\in\Delta(\Theta_{A})$ and $\mu_{B}(G)\in\Delta(\Theta_{B})$;
(2) relative sizes of the two groups in the society, $p=(p_{A},p_{B})$
with $p_{A},p_{B}\ge0,$ $p_{A}+p_{B}=1$; (3) for each situation
$G,$ each group's strategy when matched against each other group,
$a=(a_{AA}(G),a_{AB}(G),a_{BA}(G),a_{BB}(G))$ where $a_{g,g^{'}}(G)\in\mathbb{A}$
is the strategy that an adherent of $\Theta_{g}$ plays against an
adherent of $\Theta_{g^{'}}$ in situation $G$. 
\end{defn}

A zeitgeist outlines the beliefs and interactions among agents with
heterogeneous models living in the same society. Part (1) captures
the belief of each group. Part (2) determines the relative prominence
of each model. Agents are matched uniformly at random across the entire
society, so an agent from group $g$ has probability $p_{g}$ of being
matched with an opponent from their own group and a complementary chance
of being matched with an opponent from the other group.\footnote{\cite{LQNPaper} contains an example that varies the matching assortativity
and discusses how this affects the selection of biases.} Part (3) describes behavior in the society. Note that a zeitgeist
describes each group's situation-contingent belief and behavior, since
agents may infer different parameters and thus adopt different subjective
best replies in different situations. However, we emphasize
that since  situations are not directly observed, they only influence
strategies by changing the distribution of the agents'  consequences (and hence
their beliefs).

\subsection{Equilibrium Zeitgeists}\label{subsec:EZDef}

A model's fitness corresponds to the equilibrium payoff of its adherents.
An equilibrium zeitgeist (EZ) requires behavior to be optimal given beliefs and beliefs to 
best fit the data given behavior. As we  make clear in the learning foundation for
EZs in Appendix \ref{sec:Learning-Foundation}, this equilibrium concept relates to steady states
in a  society of long-lived Bayesian learners who use consequences and monitoring signals
to make Bayesian inferences among parameters in their model, assuming there is convergence in
beliefs and  behavior.  In the steady state, agents choose subjectively optimal strategies
given their beliefs about others' strategies and about the stage game.

We now formalize this criterion. For two distributions over consequences
and monitoring signals, $\Phi,\Psi\in\Delta(\mathbb{Y}\times\mathbb{M})$
with density or probability mass functions $\phi,\psi$, define the Kullback-Leibler divergence
(KL divergence) from $\Psi$ to $\Phi$ as $D_{KL}(\Phi\parallel\Psi):=\int\phi(y,m)\ln\left(\frac{\phi(y,m)}{\psi(y,m)}\right)d(y,m)$.
Recall that every data-generating process $F$, like the true fundamental
$F^{\bullet}(\cdot,\cdot,G)$, outputs a distribution over consequences
for every strategy profile, $(a_{i},a_{-i})\in\mathbb{A}^{2}$. 
\begin{defn}
\label{def:EZ}A zeitgeist $\mathfrak{Z}=(\mu_{A}(G),\mu_{B}(G),p,a(G))_{G\in\mathcal{G}}$
is an\emph{ equilibrium zeitgeist (EZ)} if, for every $G\in\mathcal{G}$
and $g,g^{'}\in\{A,B\},$ $a_{g,g^{'}}(G)\in\underset{\hat{a}\in\mathbb{A}}{\arg\max}\ \mathbb{E}_{(a_{A},a_{B},F)\sim\mu_{g}(G)}\left[\mathbb{E}_{y\sim F(\hat{a},a_{g^{'}})}(\pi(y))\right]$
and, for every $g\in\{A,B\},$ the belief $\mu_{g}(G)$ is supported
on 
\begin{align*}
\hspace{-12mm} \underset{(\hat{a}_{A},\hat{a}_{B},\hat{F})\in\Theta_{g}}{\arg\min}\left\{ \begin{array}{c}
(p_{g})\cdot D_{KL}(F^{\bullet}(a_{g,g}(G),a_{g,g}(G),G)\times\varphi^{\bullet}(a_{g,g}(G))\parallel\hat{F}(a_{g,g}(G),\hat{a}_{g})\times\varphi^{\bullet}(\hat{a}_{g}))\\
+(1-p_{g})\cdot D_{KL}(F^{\bullet}(a_{g,-g}(G),a_{-g,g}(G),G)\times\varphi^{\bullet}(a_{-g,g}(G))\parallel\hat{F}(a_{g,-g}(G),\hat{a}_{-g})\times\varphi^{\bullet}(\hat{a}_{-g}))
\end{array}\right\} 
\end{align*}
where $-g$ means the group other than $g$ and $\times$ indicates
the product between a distribution on $\mathbb{Y}$ and a distribution
on $\mathbb{M}$. When $p_g=0$ or $(1-p_g)=0$ but it is multiplied by infinity, we use the convention that $0\cdot\infty=\infty$. 
\end{defn}

This definition requires agents from each group $g$ to choose a subjective
best response against their opponents, given the belief $\mu_{g}$
about the fundamental uncertainty and strategic uncertainty. No matter
which group the agent is matched against, these choices are always
made to selfishly maximize their (individual) subjective utility function.
Each agent's belief $\mu_{g}$ is supported on the parameters in their
model that minimize a weighted KL-divergence objective in situation
$G$, with the data from each type of match weighted by the probability
of confronting this type of opponent. The use of KL-divergence minimization
as the inference procedure is standard in the misspecified Bayesian
learning literature (such as in \cite{esponda2016berk}) and goes back to the basic result from 
\cite{berk1966limiting} that the Bayesian posteriors under misspecification
concentrate in the long run on the KL-divergence minimizers. 
We assume inference occurs separately across situations. This reflects
situation persistence, with agents having enough data to establish
new beliefs and behavior before the situation  changes. Our learning
foundation in Appendix \ref{sec:Learning-Foundation} justifies this
situation-by-situation updating, but we omit the details here as it
otherwise plays no role in our results.

In general, agents choose their best-fitting model parameters based
on two kinds of data: consequences and monitoring signals. In
environments with strategic certainty, there is no equilibrium zeitgeist
where the equilibrium belief $\mu_{g}(G)$ for any group $g$ in any
situation $G$ puts positive weight on any parameter $(\hat{a}_{A},\hat{a}_{B},\hat{F})$
where $\hat{a}_{g'}\ne a_{g',g}(G)$ for any groups $g'\in\{A,B\}$.  
This is because any such parameter has a weighted KL divergence of
infinity (given that $\varphi^{\bullet}$ is perfectly informative
about the opponent's strategy), whereas parameters with $\hat{a}_{A}=a_{A,g}(G)$
and $\hat{a}_{B}=a_{B,g}(G)$ have finite weighted KL divergence.
So, in environments of strategic certainty, we can view beliefs in
equilibrium zeitgeists $\mu_{g}(G)$ as simply beliefs over the fundamental
uncertainty $\mathcal{F}_{g}$, with $\mu_{g}(G)$ supported on
\begin{align*}
\underset{\hat{F}\in\mathcal{F}_{g}}{\arg\min}\left\{ \begin{array}{c}
(p_{g})\cdot D_{KL}(F^{\bullet}(a_{g,g}(G),a_{g,g}(G),G)\parallel\hat{F}(a_{g,g}(G),a_{g,g}(G)))\\
+(1-p_{g})\cdot D_{KL}(F^{\bullet}(a_{g,-g}(G),a_{-g,g}(G),G)\parallel\hat{F}(a_{g,-g}(G),a_{-g,g}(G)))
\end{array}\right\} .
\end{align*}
In environments with strategic certainty, we will therefore omit reference
to beliefs about others' strategies in describing zeitgeists and simply
view $\mu_{g}(G)$ as an element in $\Delta(\mathcal{F}_{g}).$ 

\subsection{Evolutionary Stability of Models}

Given a distribution $q\in\Delta(\mathcal{G})$ and an EZ, we define
the \emph{fitness} of each model as the expected objective payoff
of its adherents in the EZ when $G$ is drawn according to $q$. We
have in mind an evolutionary story where the relative success of the
two models depends on their relative fitness: for instance, agents
may play a large number of games in different periods possibly facing
different situations over time, and models of those agents with higher
total objective payoffs are more likely to be adopted in the next
generation.\footnote{One subtlety is that fitness maximization may require not maximizing
expected payoffs, but rather some other function of the distribution
of payoffs, if shocks can be correlated \citep{robson1996biological}.
However, our microfoundation in Appendix \ref{sec:Learning-Foundation}
posits that situations are fixed for long stretches of time, with
no correlated shocks across matches, making the expectation an appropriate
measurement of fitness.} Given this notion of fitness, our question of interest is: Can the
adherents of a \emph{resident model} $\Theta_{A}$, starting at a
position of social prominence, always repel an invasion from a small
 mass of agents who adhere to an \emph{entrant model} $\Theta_{B}$?

Evolutionary stability depends on the fitness of models $\Theta_{A},\Theta_{B}$
in EZs with $p_{A}=1-\epsilon,p_{B}=\epsilon$ for small $\epsilon>0$.
\begin{defn}
\label{def:stability} Say $\Theta_{A}$ is \emph{evolutionarily stable
{[}fragile{]} }against $\Theta_{B}$ if there exists some $\bar{\epsilon}>0$
so that for every $0<\epsilon\le\bar{\epsilon}$, there is at least
one EZ with models $\Theta_{A},\Theta_{B}$, $p=(1-\epsilon,\epsilon)$
and in all such EZs, $\Theta_{A}$ has a weakly higher {[}strictly
lower{]} fitness than $\Theta_{B}$. 
\end{defn}

Evolutionary stability is when $\Theta_{A}$ has higher fitness than
$\Theta_{B}$ in all EZs, and evolutionary fragility is when $\Theta_{A}$
has lower fitness in all EZs.\footnote{If the set of EZs is empty, then $\Theta_{A}$ is neither evolutionarily
stable nor evolutionarily fragile against $\Theta_{B}.$} These two cases give sharp predictions about whether a small share
of entrant-model invaders might grow in size, across all equilibrium
selections. We fix these rather stringent definitions of stability
and fragility, and focus on showing in Section \ref{sec:new_stability_phenomena}
how the belief endogeneity in models can generate new stability
/ fragility phenomena. A third possible case, where $\Theta_{A}$
has lower fitness than $\Theta_{B}$ in some but not all EZs, corresponds
to a situation where the entrant model may or may not grow in the society,
depending on the equilibrium selection.

\subsection{Discussion}

We clarify some important aspects of our framework before proceeding further.

\subsubsection{Equilibrium Zeitgeist Existence} \label{sec:nexistencediscussio}

Implicit in our definition of evolutionary stability and fragility is that we do not have to worry about whether EZs exist in the first place. However, this is not a given, as we have not imposed the continuity and integrability conditions necessary to ensure existence and the well-definedness of expected utilities, KL divergences, and best responses. These issues are familiar from past work, so we do not belabor them in the main text. Appendix \ref{sec:Existence-and-Continuity} provides sufficient conditions that guarantee existence, as well as that the set of EZs is upper hemicontinuous in population shares. This latter result is particularly useful for moving between the $\epsilon >0$ case and the limit as $\epsilon \rightarrow 0$ in the definition of stability. Throughout, we focus on the case where expected utilities, KL divergences, and best responses are well defined and EZs  exist, and refer interested readers to the appendix for the technical conditions that guarantee these.

\subsubsection{Comparison with Other Evolutionary Frameworks}

\label{sec:comparison}

We apply the ``indirect evolutionary approach'' (see \cite{robson2011evolutionary})
to settings where agents can draw inferences (especially misspecified
inferences). In environments with strategic certainty with singleton
models and $|\mathcal{G}|=1$, our framework reduces to the setup
studied by the literature on preference evolution \cite{Alger2019survey},
since singleton models are equivalent to subjective preferences. But
in general, models with multiple parameters allow agents to adapt
their beliefs (which determine their subjective preferences) endogenously.
Allowing for multiple situations is the most direct way for inference
to be beneficial. With only a single situation, any steady-state outcome
that emerges for some model can also emerge with a singleton model.
That said, one could also study settings with multiple situations
without inference (see \cite{GuthNapel2006} for an example of such
an exercise).

\subsubsection{Framework Assumptions}

An important assumption is that agents (correctly) believe the economic
fundamentals (represented by $G$) do not vary depending on which
group they are matched against. That is, the mapping $(a_{i},a_{-i})\mapsto\Delta(\mathbb{Y})$
describes the stage game that they are playing, and agents know that
they always play the same stage game even though opponents from different
groups may use different strategies in the game. As a result, the
agent's experiences in games against both groups of opponents jointly
resolve the same fundamental uncertainty about the environment.\footnote{We note that play between two groups $g$ and $g^{'}$ is not a Berk-Nash
equilibrium \cite{esponda2016berk}, since adherents from one group
draw inferences about the game's parameters from the matches against
the other group, which may adopt a different strategy. A Berk-Nash
equilibrium between groups $g$ and $g'$ would require inferences
to \emph{only} be made from data generated in the match between $g$
and $g'$.} If adherents could believe that the fundamentals can change depending
on their opponent, then this would give a trivial way for in-group
preferences to emerge and also trivialize the question of which errors
could invade. For expositional simplicity, we do not consider this
elaboration.

Our framework
assumes that agents can identify which group their matched opponent
belongs to, though we do not assume that agents know the data-generating
processes contained in other models or that they are capable of making
inferences using other models.  (In other words,  models are primitives and cannot be changed even after agents
see their opponents' actions. Players do not ``read
into'' what others do when learning.) Some other works in the  literature on the indirect evolutionary approach (e.g., \cite{Dekeletal2007})
consider a more general setup where agents only observe their opponent's
group membership with some probability in each match, and receive no information
about their opponent with the complementary probability. We expect the main insights to carry
through when the  probability of observation is high enough. At the other end of the spectrum,
if agents never observe whether their matched opponent is from group A or group B, then results can change dramatically. For instance, consider an environment of strategic certainty with only one situation. There is no EZ where the minimal correctly specified 
resident model has strictly lower fitness than an entrant model. This is because both models face the same distribution of
opponents' strategies and must play a single strategy against this distribution. If the correctly specified model has strictly lower
fitness, then its adherents are not playing an objective best response, which contradicts
the fact that they correctly know the game and have correct beliefs about others' strategies in EZ under strategic certainty.

Even as agents update their beliefs and optimize their behavior,
population proportions $p_{A}$ and $p_{B}$ remain fixed. We imagine
a world where the relative prominence of models changes much more
slowly than the rate of convergence to an EZ. This assumption about
the relative rate of change in the population sizes follows the previous
work on evolutionary game theory (See \cite{Sandholm2001Preference}
or \cite{Dekeletal2007}).

\section{Illustrative Example}
\label{sec:example}

In this section, we use an example to illustrate our framework, explain
the different stability implications of entrant models that expand
the resident model versus entrant models that shift the resident model,
and preview an ``illusion of control'' entrant model that will play
a key role in our general results. 

Consider an environment with strategic certainty and only one situation (so we omit mention of $G$). Suppose each player $i$ is a Cournot duopolist with constant marginal
cost $c$. Each player $i$ simultaneously chooses a quantity $a_{i}$.
A random market price $P=\beta^{\bullet}-r^{\bullet}(a_{1}+a_{2})+\varepsilon$
realizes and is observed by both players, where $r^{\bullet}>0$,
$\beta^{\bullet}>c$ are constants and $\varepsilon$ is a mean-zero
random variable with full support on $\mathbb{R}$. The utility of
player $i$ is $a_{i}\cdot(P-c).$ 

Mapping back into the formalism from Section \foreignlanguage{american}{\ref{sec:Environment-and-Stability},
we can let $i$'s consequence be $y_{i}=(a_{i},P)$ so that $i$'s
utility is a function of the consequence with $\pi(y_{i})=\pi(a_{i},P)=a_{i}\cdot(P-c)$.
The true distribution over consequences $F^{\bullet}(a_{i},a_{-i})$
given the strategy profile $(a_{i},a_{-i})$ is such that the first
dimension of $y_{i}$ is always $a_{i}$, while the second dimension
is distributed according to $\beta^{\bullet}-r^{\bullet}(a_{i}+a_{-i})+\varepsilon$. }

For $r>0$ and $\beta\in\mathbb{R},$ let $F_{r,\beta}$ represent
the conjecture about the stage game where for each strategy profile
$(a_{i},a_{-i}),$ $F_{r,\beta}(a_{i},a_{-i})$ is the distribution
over consequences with the first dimension always being $a_{i}$ and
the second dimension being distributed according to $\beta-r(a_{i}+a_{-i})+\varepsilon$.
The residents (group A) are correctly specified, so their model has
$\mathcal{F}_{A}=\{F_{r^{\bullet},\beta}:\beta\in\mathbb{R}\}$. We
say the entrants (group B) have a \emph{slope perception} of $\hat{r}$
if their model has $\mathcal{F}_{B}=\{F_{\hat{r},\beta}:\beta\in\mathbb{R}\}$.
The idea is that all agents hold dogmatic beliefs about the slope
of the demand curve and use market-price data to make inferences about
the intercept of the demand curve. The residents' belief about the
slope is correct, while the entrants' belief is possibly wrong.

When entrants misperceive the slope to be $\hat{r}\ne r^{\bullet}$,
they misunderstand how quantity choices affect market prices and thus
misinfer the intercept of the demand curve. This in turn distorts
their behavior: we can show that an agent who believes the slope and
intercept of the demand curve to be $r$ and $\beta$ has a subjective
best response of $\frac{\beta-c-ra_{-i}}{2r}$ when their opponent
plays $a_{-i}$. Proposition \ref{prop:cournot_example} summarizes
how the entrants' slope misperception affects their equilibrium behavior
and welfare in the equilibrium zeitgeist with $p_{A}=1,$ $p_{B}=0$. 
\begin{prop}
\label{prop:cournot_example}Suppose $p_{A}=1,p_{B}=0,$ and the residents
are correctly specified. 
\begin{enumerate}
\item In every equilibrium zeitgeist, we have $a_{AA}=\frac{\beta^{\bullet}-c}{3r^{\bullet}}$
and the fitness of the residents is $\frac{(\beta^{\bullet}-c)^{2}}{9r^{\bullet}}.$ 
\item In every equilibrium zeitgeist, the fitness of the entrants is a function
of their equilibrium strategy: $\frac{1}{2}[(a_{BA})\cdot(\beta^{\bullet}-c)-(a_{BA})^{2}\cdot(r^{\bullet})]$.
In particular, the entrant's fitness strictly decreases in the distance
between $a_{BA}$ and the Stackelberg strategy,\footnote{This strategy is the one chosen by the first-mover in the game where each player moves sequentially, with the second-mover observing the action of the first mover---in other words, in cases where one player can commit to an action and the other player chooses a best reply to that action.} $a_{\text{stack}}=\frac{\beta^{\bullet}-c}{2r^{\bullet}}$. 
\item Suppose the entrants have slope perception $\hat{r}$. Then in every
equilibrium zeitgeist, we have $a_{BA}=\frac{\beta^{\bullet}-c}{2\hat{r}+r^{\bullet}}$. 
\end{enumerate}
\end{prop}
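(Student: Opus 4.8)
The plan is to exploit strategic certainty together with the degenerate population shares to reduce each group's inference to a one-dimensional mean-matching problem, then solve the resulting best-response fixed points and substitute into objective payoffs. Throughout, the quantity coordinate of the consequence $y_i=(a_i,P)$ is deterministic and common to the truth and every conjecture $F_{r,\beta}$, so each group's KL objective reduces to the price coordinate, which is a location shift of the common noise $\varepsilon$.

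First I would record what the limit $p=(1,0)$ forces. For the residents ($p_A=1$) the B-match term in the KL objective carries weight $1-p_A=0$; by the $0\cdot\infty=\infty$ convention, any parameter that misconjectures an opponent's strategy yields an infinite objective, so $\mu_A$ must assign the correct strategy conjectures $a_{AA},a_{BA}$ and then minimize the A-match KL over the intercept. Because residents carry the true slope $r^\bullet$ and the conjectured price mean is $\hat\beta-2r^\bullet a_{AA}$, the divergence from a location family is uniquely minimized at the intercept matching the true mean, namely $\hat\beta=\beta^\bullet$ (full support of $\varepsilon$ gives strict positivity off the minimizer). Plugging $r=r^\bullet,\beta=\beta^\bullet$ into the stated best reply $\frac{\beta-c-ra_{-i}}{2r}$ and imposing the symmetric condition $a_{AA}=\frac{\beta^\bullet-c-r^\bullet a_{AA}}{2r^\bullet}$ gives $a_{AA}=\frac{\beta^\bullet-c}{3r^\bullet}$; since best replies have slope $-\tfrac12$, this fixed point is unique across all EZs. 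The objective fitness is $\mathbb{E}[a_{AA}(P-c)]=a_{AA}(\beta^\bullet-2r^\bullet a_{AA}-c)=\frac{(\beta^\bullet-c)^2}{9r^\bullet}$, proving part 1.

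For part 2, the same convention forces residents to correctly conjecture $a_{BA}$ when meeting an entrant, so their strategy against B is $a_{AB}=\frac{\beta^\bullet-c-r^\bullet a_{BA}}{2r^\bullet}$. Since an entrant meets a resident with probability one, the entrant's fitness is the objective B-vs-A payoff $a_{BA}(\beta^\bullet-r^\bullet(a_{BA}+a_{AB})-c)$; substituting $a_{AB}$ collapses this to $\frac{1}{2}[a_{BA}(\beta^\bullet-c)-r^\bullet a_{BA}^2]$. This is a strictly concave quadratic whose first-order condition gives maximizer $\frac{\beta^\bullet-c}{2r^\bullet}=a_{\text{stack}}$, so fitness strictly decreases in $|a_{BA}-a_{\text{stack}}|$. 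The content is that, because the resident best-responds to the entrant's observed quantity, the entrant is effectively a Stackelberg leader.

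Finally, for part 3 I would compute the entrant's endogenous intercept. With slope perception $\hat r$ and strategic certainty, the entrant's conjectured mean price in a B-vs-A match is $\hat\beta-\hat r(a_{BA}+a_{AB})$, so mean-matching against the truth yields $\hat\beta=\beta^\bullet+(\hat r-r^\bullet)(a_{BA}+a_{AB})$. Feeding this and the correct conjecture $a_{AB}$ into the subjective best reply $a_{BA}=\frac{\hat\beta-c-\hat r a_{AB}}{2\hat r}$ and simplifying gives the linear relation $(\hat r+r^\bullet)a_{BA}=\beta^\bullet-c-r^\bullet a_{AB}$; combining with $a_{AB}=\frac{\beta^\bullet-c-r^\bullet a_{BA}}{2r^\bullet}$ and solving the resulting $2\times2$ system yields $a_{BA}=\frac{\beta^\bullet-c}{2\hat r+r^\bullet}$. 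The main obstacle throughout is bookkeeping the degenerate limit: one must verify that the $0\cdot\infty$ convention really pins down correct opponent-strategy conjectures (so monitoring delivers strategic certainty even for a measure-zero group) and that, for the location family generated by $\varepsilon$, KL minimization is exactly mean-matching; the remaining steps are linear algebra whose uniqueness follows from the $|\text{slope}|<1$ property of Cournot best replies.
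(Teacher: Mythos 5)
Your proposal is correct and takes essentially the same route as the paper's proof: pin down equilibrium beliefs by KL minimization (which, given that the intercept can perfectly match the true location of the price distribution, reduces to mean-matching $\hat{\beta}=\beta^{\bullet}+(\hat{r}-r^{\bullet})(a_{BA}+a_{AB})$), derive the subjective best reply $\frac{\beta-c-ra_{-i}}{2r}$, and solve the resulting linear fixed points, with part 3 handled as a $2\times 2$ system rather than the paper's direct substitution. Your added explicitness about the $0\cdot\infty=\infty$ convention forcing correct opponent-strategy conjectures for the measure-zero matches is a more careful rendering of what the paper's proof leaves implicit, not a different argument.
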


We conclude with the following three observations, motivated by this result: 

\medskip 

\noindent \textbf{(1) Local vs. global mutations of the correctly specified model}.
Since $a_{BA}=\frac{\beta^{\bullet}-c}{2\hat{r}+r^{\bullet}}$, entrants
who correctly perceive $\hat{r}=r^{\bullet}$ will play $a_{BA}=\frac{\beta^{\bullet}-c}{3r^{\bullet}}$
and have the same fitness as the residents. As this is lower than
the Stackelberg strategy $a_{\text{stack}}=\frac{\beta^{\bullet}-c}{2r^{\bullet}}$,
the fitness of the entrants strictly decreases as $\hat{r}$ increases
above $r^{\bullet}$ and strictly increases as $\hat{r}$ decreases
below $r^{\bullet}$ up until $\hat{r}=r^{\bullet}/2$. Suppose we
consider the set of possible entrants that have a ``local'' mutation
relative to the correctly specified model: that is, entrants who have
a slope perception $\hat{r}$ with $|\hat{r}-r^{\bullet}|\le\delta$
for some small $\delta>0$ with $\delta<r^{\bullet}/2$. Among this
family of possible entrants, the entrant with the slope perception
$\hat{r}=r^{\bullet}-\delta$ has the highest fitness in equilibrium.
But, if we do not limit the entrants to only have these local mutations,
then the entrant who has the slope perception $\hat{r}=r^{\bullet}/2$
would have an even higher fitness. 

\medskip

\noindent \textbf{(2) Shifts vs. expansions of the correctly specified model}. The
model of the entrants who have a slope misperception $\hat{r}\ne r^{\bullet}$
can be viewed as a ``shift'' of the correctly specified model. Indeed,
the entrants have $\mathcal{F}_{B}=\{F_{\hat{r},\beta}:\beta\in\mathbb{R}\}$
while the residents have $\mathcal{F}_{A}=\{F_{r^{\bullet},\beta}:\beta\in\mathbb{R}\}$,
two disjoint sets of conjectures about how consequences are generated
in the stage game. \cite{FL_mutation} study the stability of models under local
mutations, but their notion of local mutation is that of a local \emph{expansion},
where the mutated model contains all the parameters of the resident
models and also some new parameters. In this Cournot example, if the
entrants simply entertain more possible values for the slope of the
demand curve than the residents, that is to say the entrants' model
has $\mathcal{F}_{B}=\{F_{r,\beta}:|r-r^{\bullet}|<\delta,\beta\in\mathbb{R}\}$,
then there is an EZ where the entrants and the residents come to the
same (correct) beliefs about both the slope and intercept of the demand
curve, play the same strategies, and have the same fitness. 

Going beyond this example, it is not difficult to see that in any
environment with strategic certainty, the correctly specified model
is never evolutionarily fragile against any expansions of it for the
same reason. If the resident model is evolutionarily fragile against
an entrant model, some of the feasible beliefs under the former must
be impossible under the latter. 

\medskip

\noindent \textbf{(3) An ``illusion of control'' entrant model that maximizes
entrant fitness}. There is another entrant model outside of the slope
misperception class discussed so far  that also maximizes
entrant fitness across all possible entrant models. This is a singleton
model with $\mathcal{F}_{B}=\{F^{*}\}$, where the conjectured distribution
of consequences $F^{*}(a_{i},a_{-i})$ only depends on $a_{i}$ and
not on $a_{-i}.$ In particular, $F^{*}(a_{i},a_{-i})$ specifies
that the first dimension of $y_{i}$ is always $a_{i}$, and the second
dimension is distributed according to $\beta^{\bullet}-r^{\bullet}(a_{i}+\frac{\beta^{\bullet}-c-r^{\bullet}a_{i}}{2r^{\bullet}})+\varepsilon$.
The conjecture $F^{*}$ features ``illusion of control,'' for it
stipulates that the distribution of market prices only depends on
$i$'s strategy and not on that of $-i$. In particular, it says that
whatever strategy the opponent actually chooses, the realized market
price distribution when $i$ chooses $a_{i}$ is the objective market
price distribution when $-i$ plays the rational best response against
it. It is easy to see that under the belief $F^{*}$, $i$'s strictly
dominant strategy is to choose the Stackelberg strategy $a_{BA}=a_{\text{stack}}=\frac{\beta^{\bullet}-c}{2r^{\bullet}}$
in every EZ. The rational residents must play the rational best response
against it, so the entrants obtain the Stackelberg payoff as their
fitness. In the next section, we construct a similar illusion of control
model for more general environments to find the highest possible fitness
among all entrants when the residents are correctly specified.

\section{Stability Implications of Belief Endogeneity}\label{sec:new_stability_phenomena}

In this section, we focus on environments with strategic certainty
to illustrate some stability phenomena that distinguish misspecified
inference from dogmatic beliefs in our framework. The main novelty
of our framework relative to past work on the indirect evolutionary
approach is that agents' beliefs about the game (and hence, subjective preferences)
are endogenously determined. We showcase some
of the unique implications of this belief endogeneity. 

Belief endogeneity adds new ways for biased individuals to develop
strategic commitments in games. First, unlike agents with fixed subjective
preferences, misspecified learners with a fixed model can develop situation-specific
commitments that are better tailored to the stage game. We show this
mechanism expands the scope for invading rational societies. Second,
misinference can induce different beliefs for a misspecified agent depending on who they  most frequently interact with. This leads
to new stability phenomena and adds nuance to extrapolations of the
welfare implications of a misspecified model across different societies,
relative to that of a distorted subjective preference.

\subsection{When Is Misinference Necessary to Defeat Rationality? }\label{subsec:When-Is-the}

Our first result characterizes when misspecified models can \emph{only}
invade a rational society when inference is possible. More precisely,
when does there exist a distribution over situations such that the
correctly specified model is not evolutionarily fragile against any
singleton model, but it is evolutionarily fragile against some models
with multiple parameters? 

When the stage game is fixed, the preference evolution literature
has long recognized that commitment to the game's Stackelberg strategy
can allow entrants to outperform rational residents (see, for example, Section 2.5 of \citet{robson2011evolutionary}).
In our setting, if there is only one situation and the highest symmetric
Nash equilibrium payoff is lower than the game's Stackelberg payoff,
it is straightforward to show that the correctly specified model is
evolutionarily fragile against any singleton model that misperceives
the Stackelberg strategy to be strictly dominant. Adherents of this
entrant model play the Stackelberg strategy against every opponent
and enjoy strictly higher fitness than the rational residents when
the entrant population share is close to zero. 

When there are multiple situations, the analogous conclusion that
the rational residents must have the Stackelberg payoff situation-by-situation
to avoid invasion by a misspecified entrant requires the existence
of an entrant who can behave differently in different situations,
since the Stackelberg strategy can vary by situation. The proof of
the first part of Theorem \ref{thm:theoryneeded} constructs such
an entrant, using the same ``illusion of control'' idea from Section
\ref{sec:example}. The entrant makes inferences among multiple parameters in their
model, where the different parameters are different illusion-of-control
beliefs meant to be adopted in different situations. Under the suitable
beliefs for situation $G$, the agent thinks their consequence is
solely controlled by their own action and views the Stackelberg strategy
for situation $G$ as strictly dominant. Also, the entrant's model
is constructed so that there is no equilibrium zeitgeist where entrants
adopt a belief meant for situation $G'$ in a different situation
$G\ne G'$, as inference from data would cause the entrants to revise
their beliefs in favor of a better-fitting parameter in such a scenario. 

The second part of Theorem \ref{thm:theoryneeded} characterizes distributions
over situations so that no singleton entrant model can invade the
correctly specified residents. In some environments with multiple
situations, singleton models that are unable to make inferences and
adapt to the different situations can nevertheless still obtain higher fitness
than the rational residents. In the Cournot duopoly setup from Section
\ref{sec:example}, for instance, we know that entrants with the slope misperception
$\hat{r}=r^{\bullet}/2$ can outperform the residents. But now suppose
there are two situations with true slope coefficients of $r^{\bullet}=1$
and $r^{\bullet}=1.001$, equally likely. Then the singleton entrant
model with the slope misperception $\hat{r}=1/2$ continues to outperform
the residents on average, across the two situations. The content of
the second part of Theorem \ref{thm:theoryneeded} is to provide a
condition that ensures the multiple situations are ``sufficiently
different'' strategically so that the correctly specified resident
model is not evolutionarily fragile against any singleton model. 

Theorem \ref{thm:theoryneeded} is stated for an environment with a finite strategy space. We require some
notation. For $F:\mathbb{A}^{2}\to\Delta(\mathbb{Y})$, let $U_{i}(a_{i},a_{-i},F)$
represent $i$'s expected payoff under the strategy profile $(a_{i},a_{-i})$
if consequences are generated by $F$, that is $U_{i}(a_{i},a_{-i},F):=\mathbb{E}_{y\sim F(a_{i},a_{-i})}[\pi(y)]$.
In each situation $G$, let $v_{G}^{\text{NE}}\in\mathbb{R}$ be the
highest symmetric Nash equilibrium payoff in $G$, when agents choose
strategies from $\mathbb{A}$. For each $a_{i}\in\mathbb{A}$, let
$\underline{\text{BR}}(a_{i},G)$ be a rational best response against
the strategy $a_{i}$ in situation $G,$ breaking ties \emph{against}
the user of $a_{i}$. Let $\bar{v}_{G}\in\mathbb{R}$ be the Stackelberg
equilibrium payoff in situation $G$, breaking ties against the Stackelberg
leader, i.e.,

\begin{equation}
\bar{v}_{G}:=\max_{a_{i}}U_{i}(a_{i},\underline{\text{BR}}(a_{i},G),F^{\bullet}(G)).\label{eq:stackleberg}
\end{equation}

\noindent Call the strategy $\bar{a}_{G}$ that maximizes Equation
(\ref{eq:stackleberg}) the Stackelberg strategy in situation $G$.
We assume the Stackelberg strategy is unique in each situation, and
furthermore that there is a unique rational best response to $\bar{a}_{G}$
in each situation $G',$ where possibly $G\ne G'$. Finally, for $b:\mathbb{A}\rightrightarrows\mathbb{A}$
a subjective best-response correspondence (which may be induced by
some belief about how consequences are distributed under different
strategy profiles in the stage game), let $v_{G}^{b}$ denote the
worst equilibrium payoff of an agent with the best-response correspondence
$b$ when she plays against a rational opponent in situation $G.$
That is, $v_{G}^{b}\in\mathbb{R}$ is $i$'s lowest payoff across
all strategy profiles $(a_{i},a_{-i})$ such that $a_{i}\in b(a_{-i})$
and $a_{-i}$ is a rational response to $a_{i}$ in situation $G.$\footnote{If no such profile exists, let $v_{G}^{b}=-\infty.$}

We impose two identifiability conditions: 
\begin{defn}
\emph{Situation identifiability} is satisfied if for every $a_{i},a_{-i}\in\mathbb{A}$
and $G\ne G',$ we have $F^{\bullet}(a_{i},a_{-i},G)\ne F^{\bullet}(a_{i},a_{-i},G').$
\emph{Stackelberg identifiability} is satisfied if whenever $G\ne G'$
and $a_{-i}$, $a_{-i}'$ are rational best responses to $\bar{a}_{G}$
in situations $G$ and $G'$, we have $F^{\bullet}(\bar{a}_{G},a_{-i},G)\ne F^{\bullet}(\bar{a}_{G},a_{-i}',G')$. 
\end{defn}
Under situation identifiability, a minimal correctly specified agent
can identify the true situation. Under Stackelberg identifiability,
playing the Stackelberg strategy $\bar{a}_{G}$ for any situation
$G$ generates consequence data that can statistically distinguish
whether the true situation is $G$ or not, provided the opponent chooses
a rational best response to the strategy for the true situation. 

The following result presents our characterization of when misinference
 is required for misspecified models to outperform rationality,
for some distribution over situations. The first part of the result
says, under identifiability assumptions and other regularity conditions,
the rational residents are always evolutionarily fragile against some
entrants unless they are already getting the Stackelberg payoff in
every situation. The second part of the result provides a condition
for the rational residents to not be evolutionarily fragile against
any singleton entrant. Whenever both conditions in Theorem \ref{thm:theoryneeded}
are satisfied, there is some distribution over situations so that
the minimal correctly specified model is evolutionarily fragile against
\emph{some} entrant model, but not evolutionarily fragile against
any \emph{singleton} entrant model. In these environments, the ability
to adapt preferences endogenously to the relevant situation (i.e.,
belief endogeneity) is a necessary condition for an invading entrant
to displace the rational resident. Hence, this result shows that entrants
with misspecified models cannot in general be represented simply as
entrants with fixed subjective preferences. 
\begin{thm}
\label{thm:theoryneeded} Suppose there are finitely many situations
and there is a symmetric Nash equilibrium in $\mathbb{A}\times\mathbb{A}$
for every situation $G$. 
\begin{enumerate}
\item If $v_{G}^{\text{NE}}<\bar{v}_{G}$ for some $G$, situation identifiability
and Stackelberg identifiability hold, and there are finitely many
strategies, then there exists a model $\hat{\Theta}$ such that the minimal 
correctly specified model is evolutionarily fragile against $\hat{\Theta}$
under any full-support distribution $q\in\Delta(\mathcal{G})$. 
\item If there is no point $(u_{G})_{G\in\mathcal{G}}$ in the convex hull
of $\{(v_{G}^{b})_{G\in\mathcal{G}}\mid b:\mathbb{A}\rightrightarrows\mathbb{A}\}$
with the property that $u_{G}\ge v_{G}^{\text{NE}}$ for every $G\in\mathcal{G},$
then there exists a full-support distribution $q\in\Delta(\mathcal{G})$
so that the minimal correctly specified model is not evolutionarily fragile
against any singleton model. 
\end{enumerate}
\end{thm}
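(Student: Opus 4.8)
The plan is to prove the two parts separately, reusing the illusion-of-control idea from Section \ref{sec:example} for the first part and a separating-hyperplane argument for the second.

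For Part 1 I would take the entrant model $\hat{\Theta} = \mathbb{A}^2 \times \{F_G^* : G \in \mathcal{G}\}$, where for each situation the parameter $F_G^*$ is the illusion-of-control belief defined by $F_G^*(a_i,a_{-i}) := F^\bullet(a_i, \underline{\text{BR}}(a_i,G), G)$ for all $(a_i,a_{-i})$ --- the agent believes consequences depend only on their own action and are generated as if the opponent best-responds in situation $G$. Under $F_G^*$ the expected payoff from $a_i$ equals $U_i(a_i, \underline{\text{BR}}(a_i,G), F^\bullet(G))$ independent of the opponent, so by uniqueness of the Stackelberg strategy, $\bar{a}_G$ is strictly dominant and yields $\bar{v}_G$. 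The crux is showing that in every EZ, situation-$G$ entrants adopt $F_G^*$ rather than a belief $F_{G'}^*$ meant for another situation. I would argue that if the belief concentrated on $F_{G'}^*$, entrants would play $\bar{a}_{G'}$, residents would respond with the unique rational best reply $\underline{\text{BR}}(\bar{a}_{G'},G)$, and the resulting data $F^\bullet(\bar{a}_{G'}, \underline{\text{BR}}(\bar{a}_{G'},G), G)$ would be fit perfectly (zero KL) by $F_G^*$ but only imperfectly by $F_{G'}^*$, since Stackelberg identifiability gives $F^\bullet(\bar{a}_{G'}, \underline{\text{BR}}(\bar{a}_{G'},G'), G') \neq F^\bullet(\bar{a}_{G'}, \underline{\text{BR}}(\bar{a}_{G'},G), G)$. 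Hence $F_{G'}^*$ is not a KL-minimizer, a contradiction, and entrants adopt $F_G^*$, play $\bar{a}_G$, and earn $\bar{v}_G$ against residents.

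To close Part 1, the correctly specified residents, facing only other residents in the limit $\epsilon \to 0$, play a symmetric Nash equilibrium of payoff at most $v_G^{\text{NE}}$. Using $\bar{v}_G \geq v_G^{\text{NE}}$ in every situation, with strict inequality in the situation guaranteed by the hypothesis, the entrants' fitness $\sum_G q(G)\bar{v}_G$ strictly exceeds the residents' fitness in every EZ under any full-support $q$. I would establish this strict comparison at $\epsilon = 0$ and then invoke the upper hemicontinuity of the EZ correspondence (Appendix \ref{sec:Existence-and-Continuity}) to extend it to all small $\epsilon > 0$, which is exactly what evolutionary fragility requires.

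For Part 2 I would first note that under strategic certainty a singleton model is equivalent to a subjective best-response correspondence $b : \mathbb{A} \rightrightarrows \mathbb{A}$; since $\mathbb{A}$ is finite there are finitely many such $b$, so $C := \mathrm{conv}\{(v_G^b)_G : b\}$ is a compact polytope. The hypothesis says $C$ is disjoint from $(v_G^{\text{NE}})_G + \mathbb{R}_+^{\mathcal{G}}$. Strictly separating these convex sets yields a nonzero functional $q$; boundedness of the functional over the orthant forces $q \geq 0$, and the strict gap gives $\sum_G q(G) v_G^{\text{NE}} > \sum_G q(G) v_G^b$ for every $b$. A small perturbation of $q$ toward the uniform distribution preserves these finitely many strict inequalities while rendering $q$ full support. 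For this $q$ and each singleton $b$, I would exhibit an EZ in which residents coordinate on the highest symmetric Nash equilibrium (payoff $v_G^{\text{NE}}$) while entrants settle on their worst equilibrium payoff $v_G^b$ --- feasible because at $\epsilon = 0$ the two groups' behaviors decouple --- so residents attain weakly (indeed strictly) higher fitness; pushing to small $\epsilon > 0$ by upper hemicontinuity shows residents are not fragile against $b$.

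The main obstacle I anticipate is the belief-selection step in Part 1: ruling out equilibrium zeitgeists where entrants apply a wrong-situation belief. The clean resolution above relies on the entrant playing a pure Stackelberg strategy when its belief concentrates on a single $F_{G'}^*$, so I would need to confirm that the identifiability and uniqueness assumptions also preclude mixed beliefs over several $F_{G'}^*$ from being self-consistent, and to verify that $\bar{v}_G \geq v_G^{\text{NE}}$ holds in every situation despite the adverse tie-breaking built into $\bar{v}_G$ and $\underline{\text{BR}}$. The separating-hyperplane argument in Part 2, by contrast, is routine once singleton models are identified with best-response correspondences and finiteness of $\mathbb{A}$ is used to make $C$ compact.
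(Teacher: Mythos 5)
Your overall architecture matches the paper's --- Part 1 via a situation-indexed illusion-of-control model, Part 2 via separating $v^{\text{NE}}$ from the free-disposal hull of $\{(v_G^b)_G\}$, perturbing $q$ toward the uniform distribution, and exhibiting decoupled EZs --- and your Part 2 is essentially the paper's proof (you should also cover the degenerate case $v_G^b=-\infty$ for some $G$, which the paper handles by noting the EZ set is then empty, so non-fragility holds vacuously). The genuine gap is in Part 1, precisely at the belief-selection step you flag but do not resolve: the paper does \emph{not} use the raw conjectures $F_G^*(a_i,a_{-i})=F^\bullet(a_i,\underline{\text{BR}}(a_i,G),G)$. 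It perturbs each into a \emph{full-support} distribution $F_G$ chosen so that $\min_{\hat G\in\mathcal{G}} D_{KL}\bigl(F^\bullet(a_i,a_{-i},G)\parallel F_{\hat G}(a_i,a_{-i})\bigr)$ has a \emph{unique} solution for every profile and situation (feasible because strategies and situations are finite). This device does two things your construction cannot. First, it kills mixed beliefs: situation and Stackelberg identifiability compare conjectures to the truth only at particular profiles and say nothing about ties between two wrong parameters $F_{G'}^*$ and $F_{G''}^*$ against the equilibrium data; with such a tie there can be an EZ in which entrants hold a mixed belief whose subjective best response is no Stackelberg strategy, and since fragility quantifies over \emph{all} EZs, the fitness comparison then fails. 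You cannot ``confirm'' this is precluded from the stated assumptions --- it requires modifying the model itself. Second, your KL comparison uses only the data from matches against residents, but the EZ criterion weights in entrant-vs-entrant data with weight $\epsilon>0$; under the unperturbed $F_{G}^*$'s this data, generated by $(\bar a_G,\bar a_G)$, can have infinite KL divergence under \emph{every} parameter (support mismatch --- note the unperturbed model also violates the support condition in the definition of strategic certainty), in which case every parameter is a minimizer, beliefs are unrestricted, and the wrong-belief EZs you want to exclude exist. Full support bounds the entrant-match contribution, so as $\epsilon\to 0$ the resident-match data decides, which is how the paper completes your zero-versus-positive-KL argument.

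Two smaller points. Upper hemicontinuity takes you from a strict comparison in limit EZs to a strict comparison in all EZs for small $\epsilon$, but fragility also requires that at least one EZ \emph{exists} at each small $\epsilon>0$; u.h.c. does not deliver existence, and the appendix existence theorem needs a convex $\mathbb{A}$, unavailable with finitely many strategies --- so you must construct an EZ explicitly, as the paper does (residents believe the truth and play the best symmetric Nash among themselves; entrants believe $F_G$ and play $(\bar a_G,\underline{\text{BR}}(\bar a_G,G))$ against residents in situation $G$). Finally, your flagged worry that $\bar v_G\ge v_G^{\text{NE}}$ might fail under adverse tie-breaking is a legitimate concern, but it is not a gap relative to the paper: the paper's proof invokes the same inequality silently (it is automatic when rational best responses are unique, as in the paper's illustrating example).
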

One environment where $v_{G}^{\text{NE}}=\bar{v}_{G}$ for every situation
$G$ is when agents face decision problems --- that is, a player's
payoff in every situation $G$ is independent of the action of the
matched opponent. When all situations are decision problems, the condition
in the first part of Theorem \ref{thm:theoryneeded} is violated:
in fact, the correctly specified model is not evolutionarily fragile
against any other model, regardless of whether such invaders infer
from data. 

For the second part of the theorem, if there is no subjective best-response
correspondence $b$ such that $v_{G}^{b}\ge v_{G}^{\text{NE}}$ for
every $G\in\mathcal{G},$ then for every singleton entrant model there
exists some distribution over situations so that the correctly specified
model is not evolutionarily fragile against it. But the stronger condition
we impose ensures that there exists one distribution over situations
for which the correctly specified model is not evolutionarily fragile
against \emph{any} singleton entrant model. 

Next, we use a numerical example to illustrate how we might verify
the two sets of conditions from Theorem \ref{thm:theoryneeded}. 
\begin{example}
\label{exa:only_theory_invades} Suppose $\mathbb{A}=\{a_{1},a_{2},a_{3}\}$,
the consequences are $\text{\ensuremath{\mathbb{Y}}}=\{g,b\}$ with
$u(g)=1$ and $u(b)=0.$ Suppose there are two situations, $G_{A}$
and $G_{B}$, and the probability a given player obtains $g$ given
a strategy profile and situation is determined by the table below. 
\begin{center}
{\small{}%
\begin{tabular}{|c|c|c|c|}
\hline 
{\small$G_{A}$} & {\small$a_{1}$} & {\small$a_{2}$} & {\small$a_{3}$}\tabularnewline
\hline 
\hline 
{\small$a_{1}$} & {\small 0.1, 0.1} & {\small 0.1, 0.1} & {\small 0.1, 0.11}\tabularnewline
\hline 
{\small$a_{2}$} & {\small 0.1, 0.1} & {\small 0.3, 0.3} & {\small 0.1, 0.1}\tabularnewline
\hline 
{\small$a_{3}$} & {\small 0.11, 0.1} & {\small 0.1, 0.1} & {\small 0.2, 0.2}\tabularnewline
\hline 
\end{tabular}}{\small{} \qquad{}}{\small{}%
\begin{tabular}{|c|c|c|c|}
\hline 
{\small$G_{B}$} & {\small$a_{1}$} & {\small$a_{2}$} & {\small$a_{3}$}\tabularnewline
\hline 
\hline 
{\small$a_{1}$} & {\small 0.11, 0.11} & {\small 0.5, 0.5} & {\small 0.12, 0.4}\tabularnewline
\hline 
{\small$a_{2}$} & {\small 0.5, 0.5} & {\small 0.12, 0.12} & {\small 0.14, 0.55}\tabularnewline
\hline 
{\small$a_{3}$} & {\small 0.4, 0.12} & {\small 0.55, 0.14} & {\small 0.4, 0.4}\tabularnewline
\hline 
\end{tabular}}{\small\par}
\par\end{center}
It is easy to verify that the payoff-maximizing symmetric Nash equilibria
in $\mathbb{A}\times\mathbb{A}$ are $(a_{2},a_{2})$ for $G_{A}$
and $(a_{3},a_{3})$ for $G_{B}$, so $v_{G_{A}}^{\text{NE}}=0.3$
and $v_{G_{B}}^{\text{NE}}=0.4$. The unique Stackelberg strategy
in $G_{A}$ is $a_{2}$ and the unique Stackelberg strategy in $G_{B}$
is $a_{1}$, and we have $\bar{v}_{G_{A}}=0.3$ and $\bar{v}_{G_{B}}=0.5.$
There is a unique rational best response to every strategy in every
situation. 

Since $v_{G_{B}}^{\text{NE}}<\bar{v}_{G_{B}}$, the first set of conditions
of Theorem \ref{thm:theoryneeded} will be satisfied if we have situation
identifiability and Stackelberg identifiability. By inspection, the
probability of the $g$ outcome under every strategy profile differs
across the two situations, so situation identifiability holds. For
Stackelberg identifiability, note that the unique rational best response
to $a_{2}$ is $a_{2}$ in $G_{A}$ and $a_{3}$ in $G_{B}$, and
we have $F^{\bullet}(a_{2},a_{2},G_{A})\ne F^{\bullet}(a_{2},a_{3},G_{B}).$
The unique rational best response to $a_{1}$ is $a_{2}$ in $G_{B}$
and $a_{3}$ in $G_{A}$, and we have $F^{\bullet}(a_{1},a_{2},G_{B})\ne F^{\bullet}(a_{1},a_{3},G_{A})$.
So, Stackelberg identifiability also holds. 

To check the second set of conditions of Theorem \ref{thm:theoryneeded},
we consider three cases for the subjective best-response correspondence
$b$. 
\begin{itemize}
\item If $a_{1}\in b(a_{3})$, then since $a_{3}$ is a rational best response
to $a_{1}$ in $G_{A}$ and the highest possible payoff in $G_{B}$
is $0.55$, we get $v^{b}\le(0.1,0.55)$. 
\item If $a_{2}\in b(a_{3}),$ then since $a_{3}$ is a rational best response
to $a_{2}$ in $G_{B}$ and the highest possible payoff in $G_{A}$
is $0.3$, we get $v^{b}\le(0.3,0.14).$ 
\item If $a_{3}\in b(a_{3})$, then since $a_{3}$ is a rational best response
to $a_{3}$ in both $G_{A}$ and $G_{B}$, we get $v^{b}\le(0.2,0.4)$. 
\end{itemize}
These three cases are exhaustive since $b(a_{3})$ cannot be empty.
The half space in $\mathbb{R}^{2}$ below the line that runs through
$(0.1,0.55)$ and $(0.2,0.4)$ contains all three points $(0.1,0.55),$
$(0.3,0.14)$, and $(0.2,0.4).$ So, the convex hull of $\{(v_{G}^{b})_{G\in\mathcal{G}}\mid b:\mathbb{A}\rightrightarrows\mathbb{A}\}$
is contained in this half space. But we have $v^{\text{NE}}=(0.3,0.4)$,
which is outside of the half space. Thus, the second set of conditions
of Theorem \ref{thm:theoryneeded} are satisfied. 

We conclude, by Theorem \ref{thm:theoryneeded}, that there exists some
full-support distribution over the two situations $G_{A}$ and $G_{B}$
such that the correctly specified model is evolutionarily fragile
against some entrant model, but it is not evolutionarily fragile against
any singleton model. In fact, we can take this distribution to be
the one where the two situations are equally likely, and we can construct
the invading entrant model as one that features illusion of control.
This model has $\mathcal{F}=\{F_{A},F_{B}\},$ where both $F_{A}$
and $F_{B}$ stipulate that the consequence only depends on the agent's
own strategy and not on the opponent's strategy. Under $F_{A},$ $a_{1},a_{2}$,
and $a_{3}$ lead to consequence $g$ with probabilities 0.1, 0.3,
and 0.2 respectively (which are the probabilities of $g$ if opponent
plays a rational best response to these strategies in situation $G_{A}$).
Under $F_{B},$ playing $a_{1},a_{2}$, and $a_{3}$ lead to consequence
$g$ with probabilities 0.5, 0.14, and 0.4 respectively (which are
the probabilities of $g$ if opponent plays a rational best response
to these strategies in situation $G_{B}$).
\end{example}

\subsection{Stability Reversals}\label{subsec:Stability-Reversals}

We now highlight another consequence of the endogeneity of misspecified
beliefs: the potential for a greater indeterminacy in the emergence
of stable biases. For expositional simplicity, we assume that $|\mathcal{G}|=1$
throughout this section. We will refer to a model's \emph{conditional
fitness against group $g$}, i.e., the expected payoff of the model's
adherents in matches against group $g.$ 
\begin{defn}
Two models $\Theta_{A},\Theta_{B}$ exhibit \emph{stability reversal}
if (i) in every EZ with $(p_{A},p_{B})=(1,0),$ $\Theta_{A}$ has
strictly higher conditional fitness than $\Theta_{B}$ against group
A opponents and against group B opponents, but also (ii) in every
EZ with $(p_{A},p_{B})=(0,1),$ $\Theta_{B}$ has strictly higher
fitness than $\Theta_{A}$. 
\end{defn}
When $p_{B}=0$, how $\Theta_{A}$ performs against $\Theta_{B}$
does not actually affect group A's fitness. Condition (i) encodes
the strong requirement that $\Theta_{A}$ outperforms $\Theta_{B}$
even on the zero-probability event of being matched against a $\Theta_{B}$
opponent. A stability reversal occurs if this stronger requirement
holds (when $\Theta_{A}$ dominates in society), and yet $\Theta_{B}$
still strictly outperforms $\Theta_{A}$ if $\Theta_{B}$ starts from
a position of prominence.

We begin with two general results on when stability reversals \emph{cannot}
emerge. First, it cannot emerge without belief endogeneity:
\begin{prop}
\label{prop:no_reversal}Suppose $|\mathcal{G}|=1$. Two singleton
models (i.e., two subjective preferences in the stage game) cannot
exhibit stability reversal. 
\end{prop}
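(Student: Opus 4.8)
The plan is to prove the contrapositive's spirit: show that for two singleton models, condition (i) of stability reversal already forces $\Theta_A$ to weakly outperform $\Theta_B$ when $p_B=1$, contradicting condition (ii). The central fact I would exploit is that a singleton model stipulates a \emph{single} dogmatic belief $\mathcal{F}_g=\{F_g\}$ about fundamentals, so its adherents' subjective best-response correspondence is \emph{fixed} and independent of the population share $p$. Because $|\mathcal{G}|=1$ and the model is singleton, the belief $\mu_g$ carries no information and plays no role — each group's behavior is pinned down purely by a subjective best response induced by $F_g$, with no endogeneity. This is exactly the channel that the proposition claims is necessary for reversal.

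**Main argument.**

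First I would record, for each singleton model $g\in\{A,B\}$, the induced subjective best-response correspondence $b_g:\mathbb{A}\rightrightarrows\mathbb{A}$ derived from $F_g$ via $b_g(a_{-i})=\arg\max_{\hat a}U_i(\hat a,a_{-i},F_g)$; this correspondence does not vary with $p$. Next, I would examine the two extreme EZs. In the $(p_A,p_B)=(1,0)$ EZ, the relevant strategies are $a_{AA}$ (A against A) and $a_{BA}$ (B against A), with $a_{AA}\in b_A(a_{AA})$ and $a_{BA}\in b_B(a_{AA})$. Condition (i) says that at these strategies, A's \emph{objective} conditional payoff strictly exceeds B's, both against an A-opponent and against a B-opponent; in particular, against an A-opponent, $\pi$-expected payoff under $F^\bullet$ satisfies $U_i(a_{AA},a_{AA},F^\bullet) > U_i(a_{BA},a_{AA},F^\bullet)$.

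**Closing the loop via the $p_B=1$ EZ.**

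The crux is the symmetric EZ when $(p_A,p_B)=(0,1)$. Here group $B$ plays $a_{BB}\in b_B(a_{BB})$ against itself, and group $A$ (the vanishing group) plays $a_{AB}\in b_A(a_{BB})$ against a $B$-opponent. The hard part — and the main obstacle I anticipate — is relating B's self-match payoff $U_i(a_{BB},a_{BB},F^\bullet)$ at $p_B=1$ to its against-A payoff $U_i(a_{BA},a_{AA},F^\bullet)$ at $p_B=0$, since the two involve different opponent strategies. The key is that because $b_B$ is the \emph{same fixed} correspondence in both EZs (no belief endogeneity), I can compare B's play against any fixed opponent strategy directly, whereas A's best-response $b_A$ is likewise fixed. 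I would argue that condition (i)'s strict dominance of $a_{AA}$ over $a_{BA}$ \emph{against the same opponent} $a_{AA}$, combined with $a_{AA}\in b_A(a_{AA})$ and the fact that $A$ at $p_B=1$ can deviate to play a best response against $a_{BB}$, yields $U_i(a_{AB},a_{BB},F^\bullet)\ge U_i(\text{B's own play against }a_{BB})=U_i(a_{BB},a_{BB},F^\bullet)$ precisely when $A$'s subjective optimum coincides with the objective optimum — but $A$ need not be correctly specified. So instead I would isolate the one-situation feature: with $|\mathcal G|=1$ and singleton beliefs, \emph{both} groups face an unchanging environment, and the comparison in (i) and the comparison in (ii) are governed by the same pair of fixed correspondences evaluated at a common opponent action. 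I would set up the inequality chain showing that (i) implies A weakly outperforms B at every population share (because nothing in either group's behavior responds to $p$), and therefore at $p_B=1$ as well, directly contradicting the strict reversal in (ii). Formalizing that ``nothing responds to $p$'' claim — i.e., that the set of candidate EZ strategy profiles is the same at both endpoints — is the step requiring the most care, and I would lean on Definition~\ref{def:EZ} specialized to $|\mathcal{F}_g|=1$ to show the KL-minimization constraint is vacuous and so the only binding conditions are the fixed best-response inclusions.
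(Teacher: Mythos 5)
Your final argument is correct and is essentially the paper's own proof: since singleton models make the KL-minimization step vacuous (and, under strategic certainty, equilibrium strategic beliefs are pinned to actual play at any population share, via the $0\cdot\infty=\infty$ convention), any EZ at $p=(0,1)$ yields an EZ at $p=(1,0)$ with the same beliefs and strategy profile, so condition (i) gives $u_{AA}>u_{BA}$ and $u_{AB}>u_{BB}$ for that profile, whence $\Theta_A$'s fitness $u_{AB}$ strictly exceeds $\Theta_B$'s fitness $u_{BB}$ at $p=(0,1)$, contradicting condition (ii). The direct payoff-comparison detour in your middle paragraph (comparing $a_{BB}$-matches to $a_{BA}$-matches and invoking deviations by $A$) was rightly abandoned, and your closing population-share-invariance argument is all that is needed.
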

The reason is that for two singleton models, the conditional fitness
of group $g$ against group $g'$ does not depend on the relative
sizes of the groups. The subjective preference associated with a singleton
model never changes with the social composition, so a strategy profile
between groups $g$ and $g'$ that can be sustained in an EZ with
$(p_{A},p_{B})=(1,0)$ can also be sustained in an EZ with $(p_{A},p_{B})=(0,1)$. 

Stability reversals also cannot emerge in decision problems.
We show this by introducing a class of models where agents always
believe that strategic interactions do not matter:
\begin{defn}
A model $\Theta$ is \emph{strategically independent} if for all $\mu\in\Delta(\Theta)$,
$\underset{a_{i}\in\mathbb{A}}{\arg\max}\ U_{i}(a_{i},a_{-i};\mu)$
is the same for every $a_{-i}\in\mathbb{A}.$ 
\end{defn}
\noindent The adherents of a strategically independent model believe
that while an opponent's action may affect their utility, it does
not affect their best response. 
\begin{prop}
\label{prop:reversal_inference_channel}Suppose $|\mathcal{G}|=1$,
suppose $\Theta_{A},\Theta_{B}$ exhibit stability reversal and $\Theta_{A}$
is the correctly specified singleton model. Then, the beliefs that
the adherents of $\Theta_{B}$ hold in all EZs with $p=(1,0)$ and
the beliefs they hold in all EZs with $p=(0,1)$ form disjoint sets.
Also, $\Theta_{B}$ is not strategically independent. 
\end{prop}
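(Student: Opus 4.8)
The plan is to translate stability reversal into inequalities among objective payoffs $U_i(\cdot,\cdot,F^{\bullet})$ and to exploit that, under strategic certainty with $|\mathcal{G}|=1$, group A (correctly specified and singleton) always plays an objective best response while group B's behavior is pinned down entirely by its belief. Write $b^{\mu}$ for B's subjective best-response map induced by a belief $\mu\in\Delta(\mathcal{F}_B)$. The crucial bookkeeping fact is that when $p_B=0$ group B's belief best fits only the B-vs-A data $F^{\bullet}(a_{BA},a_{AB})$, whereas when $p_B=1$ it best fits only the B-vs-B data $F^{\bullet}(a_{BB},a_{BB})$; in either regime the zero-probability match imposes no restriction on the belief. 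In terms of equilibrium strategies, condition (i) against group B reads $U_i(a_{AB},a_{BA},F^{\bullet})>U_i(a_{BB},a_{BB},F^{\bullet})$, while condition (ii) reads $U_i(a_{BB},a_{BB},F^{\bullet})>U_i(a_{AB},a_{BA},F^{\bullet})$, each evaluated in its own equilibrium.

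For the disjointness claim I would argue by contradiction, supposing a common belief $\mu^{*}$ appears both in an EZ $\mathfrak{Z}^{1}$ with $p=(1,0)$ and in an EZ $\mathfrak{Z}^{0}$ with $p=(0,1)$. Since $\mu^{*}$ induces the same map $b^{\mu^{*}}$ in both, I splice the two equilibria. First, build a $p=(0,1)$ equilibrium $\tilde{\mathfrak{Z}}$ that keeps the B-vs-B block of $\mathfrak{Z}^{0}$ (so $\mu^{*}$ still best fits the own-match data) but imports the cross-match profile $(a_{AB}^{1},a_{BA}^{1})$ from $\mathfrak{Z}^{1}$; this is legitimate because the cross match carries zero weight in B's inference when $p_B=1$, $a_{BA}^{1}\in b^{\mu^{*}}(a_{AB}^{1})$, and $a_{AB}^{1}$ remains A's objective best response. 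Condition (ii) applied to $\tilde{\mathfrak{Z}}$ yields $U_i(a_{BB}^{0},a_{BB}^{0},F^{\bullet})>U_i(a_{AB}^{1},a_{BA}^{1},F^{\bullet})$. Symmetrically, build a $p=(1,0)$ equilibrium $\mathfrak{Z}^{1'}$ that keeps the cross-match block of $\mathfrak{Z}^{1}$ (so $\mu^{*}$ still best fits the B-vs-A data) but replaces the zero-probability B-vs-B action by the fixed point $a_{BB}^{0}$ of $b^{\mu^{*}}$ taken from $\mathfrak{Z}^{0}$. Condition (i) against group B applied to $\mathfrak{Z}^{1'}$ yields $U_i(a_{AB}^{1},a_{BA}^{1},F^{\bullet})>U_i(a_{BB}^{0},a_{BB}^{0},F^{\bullet})$, contradicting the previous inequality.

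For the second claim I would again argue by contradiction, assuming $\Theta_B$ is strategically independent and working inside any EZ with $p=(0,1)$ (which exists under the maintained existence assumptions). Strategic independence makes B's subjective best-response set $\mathcal{B}$ independent of the opponent's action, so I modify the equilibrium to have B play its own-match action $a_{BB}$ against group A as well; A, being correct, then best responds to $a_{BB}$. This reassignment is a valid EZ because B's belief depends only on the unchanged B-vs-B data and $a_{BB}\in\mathcal{B}$ is still a subjective best response against A's new action. But then A's fitness is $\max_{a}U_i(a,a_{BB},F^{\bullet})\ge U_i(a_{BB},a_{BB},F^{\bullet})$, which is exactly B's fitness, contradicting condition (ii).

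The main obstacle, and the crux of both parts, is verifying that the spliced strategy profiles are genuine equilibrium zeitgeists: one must check that reassigning actions in a zero-probability match never disturbs the KL-minimizing belief (because such matches enter B's inference objective with weight $p_B$ or $1-p_B$ equal to zero), and that the imported actions remain best responses---objective for A and subjective under the fixed belief for B. Once this verification is in place, the payoff inequalities extracted from conditions (i) and (ii) collide immediately, and both conclusions follow.
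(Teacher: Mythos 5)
Your proposal is correct and follows essentially the same route as the paper's proof: for the disjointness claim, the identical splicing of the cross-match block and the own-match block across the $p=(1,0)$ and $p=(0,1)$ equilibria (justified, exactly as in the paper, by the fact that zero-weight matches do not enter the KL-minimization and that the fixed belief $\mu_B$ keeps the imported actions subjectively optimal while A's correct singleton belief keeps its actions objectively optimal), yielding the same pair of contradictory inequalities from conditions (i) and (ii); and for the second claim, the same redirection of B's own-match action against group A with A objectively best responding, contradicting condition (ii). No gaps beyond those the paper itself implicitly assumes (e.g., existence of the relevant EZs).
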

\noindent The first claim of Proposition \ref{prop:reversal_inference_channel}
underscores that stability reversal requires inference---it cannot
happen if group B agents merely have a different subjective preference.
The second claim shows that stability reversal can only happen if
the misspecified agents respond differently to different rival play,
immediately implying they cannot emerge in decision problems. The
idea is that when the group B agents are prominent in the society,
their misperception that the stage game is a decision problem implies
that they will always choose the same strategy (say, $\hat{a}_{i}$)
against both group A and group B opponents. But this means their fitness
cannot be strictly higher than that of the rational group A agents,
who play a rational best response against $\hat{a}_{i}$ when they
match up against group B opponents. 

We now show by example that stability reversal can emerge with models
that allow for inference. Consider a two-player investment game where
player $i$ chooses an investment level $a_{i}\in\{1,2\}.$ A random
productivity level $P$ is realized according to $b^{\bullet}(a_{i}+a_{-i})+\epsilon$
where $\epsilon$ is a zero-mean noise term, $b^{\bullet}>0$. Player
$i$'s payoffs are $a_{i}\cdot P- (a_i - 1) \cdot c$. Consequences
are $y=(a_{i},a_{-i},P).$ We record the payoff matrix of this investment
game: 
\begin{center}
\begin{tabular}{|c|c|c|}
\hline 
\foreignlanguage{american}{} & 1 & 2\tabularnewline
\hline 
\hline 
1 & $2b^{\bullet},2b^{\bullet}$ & $3b^{\bullet},6b^{\bullet}-c$\tabularnewline
\hline 
2 & $6b^{\bullet}-c,3b^{\bullet}$ & $8b^{\bullet}-c,8b^{\bullet}-c$\tabularnewline
\hline 
\end{tabular}
\par\end{center}

\begin{condition} \label{cond:medium_cost}$5b^{\bullet}<c<6b^{\bullet}$.
\end{condition} In words, we assume that $a_{i}=1$ is a strictly
dominant strategy in the stage game, but the investment profile (2,2)
Pareto dominates the investment profile (1,1) (so that the corresponding
game is a prisoner's dilemma). Consider two models in the society.
Take $\Theta_{A}$ to be a correctly specified singleton (thus knowing
the true mapping from actions to payoffs), while $\Theta_{B}$ wrongly
stipulates $P=b(a_{i}+a_{-i})-m+\epsilon$, where $m>0$ is fixed,
while $b\in\mathbb{R}$ is a parameter that the adherents infer. We
impose a condition on $\Theta_{B}$, which holds whenever $m>0$ is
large enough: \begin{condition} \label{cond:large_misspec} $c<4b^{\bullet}+\frac{1}{3}m$
and $c<5b^{\bullet}+\frac{1}{4}m.$ \end{condition} We show that
in this example models $\Theta_{A}$ and $\Theta_{B}$ exhibit stability
reversal. 
\begin{example}
\label{exa:stability_reversal_example}In the investment game, under
Condition \ref{cond:medium_cost} and Condition \ref{cond:large_misspec},
$\Theta_{A}$ and $\Theta_{B}$ exhibit stability reversal. 
\end{example}
The idea is that the $\Theta_{B}$ adherents hold endogenous beliefs
about the value of $b$. They overestimate the complementarity of
investments, and this overestimation is more severe when they face
data generated from lower investment profiles. As a result, the match
between $\Theta_{A}$ and $\Theta_{B}$ plays out differently depending
on which model is resident: it results in the investment profile $(1,2)$
when $\Theta_{A}$ is resident, but results in $(1,1)$ when $\Theta_{B}$
is resident. (We relegate the formal argument to Appendix \ref{App:ExDetails}.)
Due to Propositions \ref{prop:no_reversal} and \ref{prop:reversal_inference_channel},
we conclude that this example is possible due to the non-trivial strategic
interactions and $\Theta_{B}$'s inference about $b$.

Stability reversals provide a clear demonstration of the endogeneity
of beliefs and hence the fluidity of conditional fitness in models
that permit inference. An entrant model may appear weak when present
in small proportions, doing worse than the resident model conditional
on every type of opponent. Yet, if the population share of the entrant
model reaches a critical mass, its adherents infer a more evolutionarily
advantageous model parameter based on their within-group interactions,
change their best-response correspondence, and hence outperform the
adherents of the resident model.

\section{Evolutionary Stability of Analogy Classes}\label{sec:ABEE}

We apply the stability notions introduced in this paper to study coarse
thinking in games. \cite{jehiel2005analogy} introduced analogy-based
expectation equilibrium (ABEE) in extensive-form games, where agents
group opponents' nodes into \emph{analogy classes} and only keep track
of aggregate statistics of opponents' average behavior within each
analogy class. An ABEE is a strategy profile where agents best respond
to the belief that at all nodes in every analogy class, opponents
behave according to the average behavior in the analogy class. The
ensuing literature typically treats analogy classes as exogenously
given, interpreted as arising from coarse feedback or agents' cognitive
limitations.\footnote{Section 6.2 of \cite{jehiel2005analogy} mentions that if players
could choose their own analogy classes, then the finest analogy classes
need not arise, but also says ``it is beyond the scope of this paper
to analyze the implications of this approach.'' In a different class
of games, \cite{jehiel1995limited} similarly observes that another
form of bounded rationality (having a limited forecast horizon about
opponent's play) can improve welfare.} We showcase the practical value of our approach by using the framework
from Section \ref{sec:Environment-and-Stability} to endogenize analogy
classes based on their objective expected payoffs in equilibrium.\footnote{Other approaches to endogenizing analogy classes are pursued in \cite{JehielMohlin2023,JehielWeber2023}.}

\subsection{Defining Stable Population Shares}\label{sect:StableShares}

In this section, we will focus on an environment where agents know
the stage game but may have misspecified beliefs about others' strategies. We will no longer work in the special case of strategic
certainty, and in fact we turn off the monitoring signals by assuming that $m_i$ is fully uninformative about the matched opponent's strategy $a_{-i}$. We will also be interested in stable population shares
in a society that contains positive fractions of both rational and
misspecified players. This is because the environment we analyze features a rational model and a misspecified model with neither model  being evolutionarily stable against the other (as we will see later in Proposition \ref{prop:abee}). 

We briefly introduce the following solution
concept.
\begin{defn}
\label{def:stability_interior}Call population share $(p,1-p)$ with
$p\in(0,1)$ a \emph{stable population share} if there is an EZ with
$(p,1-p)$ where both models have the same fitness, and there exists
$\bar{\epsilon}$ such that: 
\begin{enumerate}
\item For any $0<\epsilon<\bar{\epsilon}$, there is an EZ with population
share $(p+\epsilon,1-p-\epsilon)$ where $\Theta_{A}$ has strictly
lower fitness than $\Theta_{B}$ 
\item For any $0<\epsilon<\bar{\epsilon}$, there is an EZ with population
share $(p-\epsilon,1-p+\epsilon)$ where $\Theta_{A}$ has strictly
higher fitness than $\Theta_{B}$. 
\end{enumerate}
\end{defn}
\noindent Whereas Definition \ref{def:stability}'s stability notion
involves comparing the performance of the two models when one of them
is present in an arbitrarily small fraction, stability with an interior
population share as in Definition \ref{def:stability_interior} refers
to both models co-existing with equal fitness in a way that is robust
to local perturbations of population sizes.

Another difference between Definition   \ref{def:stability} and Definition \ref{def:stability_interior} is that the former requires a uniform welfare comparison across all EZs and the latter just requires a welfare comparison in one EZ. Indeed, we will select a particular focal EZ, because the environment has trivial EZs where misspecified agents always ``opt out'' of playing the game, receive no information about how others play, and hold beliefs about others' strategies that make opting out subjectively optimal. In such EZs, misspecified agents have the same fitness as the rational agents, but not for any interesting reasons that relate to their misspecified models. 

\subsection{Centipede Games and Analogy-Based Reasoning}

We now analyze analogy-based reasoning in the centipede game in Figure
\ref{fig:The-centipede-game} (there is only one situation, given
by the payoffs in this game). P1 and P2 take turns choosing Across
(A) or Drop (D). The non-terminal nodes are labeled $n^{k}$, $1\le k\le K$
where $K$ is an even number. P1 acts at odd nodes and P2 acts at
even nodes, where choosing Drop at $n^{k}$ leads to the terminal
node $z^{k}$. If Across is always chosen, then the terminal node
$z^{end}$ is reached. Every time a player $i$ chooses Across, the
sum of payoffs grows by $g>0.$ However, if the opponent chooses Drop next,
$i$'s payoff is $\ell>0$ smaller than $i$'s payoff had they chosen
Drop, with $\ell>g$. Thus, if $z^{end}$ is reached, both get $Kg/2;$
if $z^{k}$ is reached when $k$ is odd, both players obtain $\frac{g(k-1)}{2}$;
and if  $z^{k}$ is reached when $k$ is even, P1 obtains $\frac{k-2}{2}g-\ell$,
and P2 obtains $\frac{k}{2}g+\ell$.

\begin{figure}[h]
\begin{centering}
\includegraphics[scale=0.35]{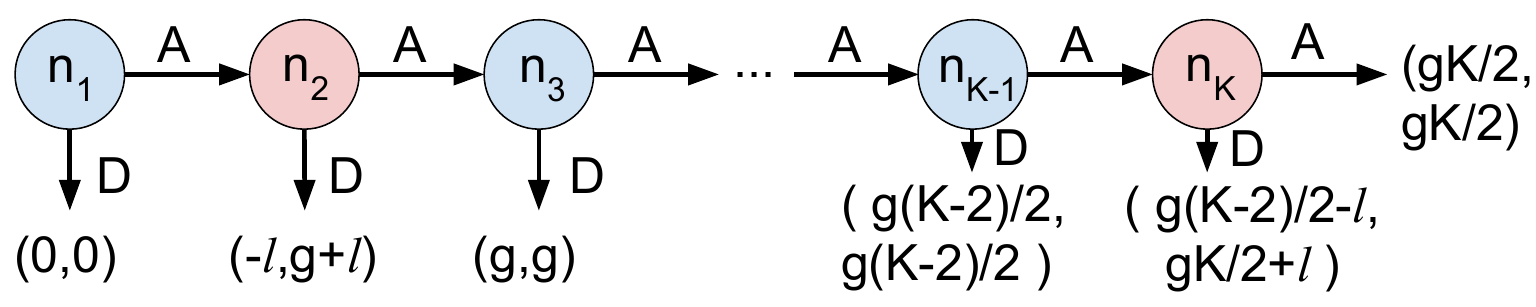}
\par\end{centering}
\vspace{-0bp}

\caption{The centipede game. P1 (blue) and P2 (red) alternate in choosing
Across (A) or Drop (D). Payoff profiles are shown at the terminal
nodes.}\label{fig:The-centipede-game}
\end{figure}

While this is an asymmetric stage game, we study a symmetrized version
where two matched agents are randomly assigned into the roles of P1
and P2. Let $\mathbb{A}=\{(d^{k})_{k=1}^{K}\in[0,1]^{K}\}$, so each
strategy is characterized by the probabilities of playing Drop at
various nodes in the game tree. When assigned into the role of P1,
the strategy $(d^{k})$ plays Drop with probabilities $d^{1},d^{3},...,d^{K-1}$
at nodes $n^{1},n^{3},...n^{K-1}$. When assigned into the role of
P2, it plays Drop with probabilities $d^{2},d^{4},...,d^{K}$ at nodes
$n^{2},n^{4},...n^{K}$. The set of consequences is $\mathbb{Y}=\{1,2\}\times(\{z_{k}:1\le k\le K\}\cup\{z_{end}\})$,
where the first dimension of the consequence returns the player role
that the agent was assigned into, and the second dimension returns
the terminal node reached. Let $F^{\bullet}:\mathbb{A}^{2}\to\Delta(\mathbb{Y})$
be the objective distribution over consequences.

All agents know the game tree (i.e., $F^{\bullet}$), but some might
adhere to a model which mistakenly assumes that their opponent plays
Drop with the same probabilities at all of their nodes. Formally,
define the restricted space of strategies $\mathbb{A}^{An}:=\{(d^{k})\in[0,1]^{K}:d^{k}=d^{k'}\text{ if }k\equiv k^{'}\text{(mod 2)}\}\subseteq\mathbb{A}$.
The correctly specified model is $\Theta^{\bullet}:=\mathbb{A}\times\mathbb{A}\times\{F^{\bullet}\}.$
The misspecified model of interest is $\Theta^{An}:=\mathbb{A}^{An}\times\mathbb{A}^{An}\times\{F^{\bullet}\}$,
reflecting a dogmatic belief that opponents play the same mixed action
at all nodes in the analogy class. We emphasize these restriction
on strategies only exists in the subjective beliefs of the model $\Theta^{An}$
adherents. All agents, regardless of their model, actually have the
strategy space $\mathbb{A}$.

\subsection{Results}

The next proposition provides a justification for why we might expect
agents with coarse analogy classes given by $\mathbb{A}^{An}$ to
persist in the society. 
\begin{prop}
\label{prop:abee}Suppose $K\ge4$ and $g>\frac{2}{K-2}\ell$. The
correctly specified  model $\Theta^{\bullet}$ is evolutionarily stable
against itself, but it is not evolutionarily stable against the misspecified
model $\Theta^{An}.$ Also, $\Theta^{An}$ is not evolutionarily stable
against $\Theta^{\bullet}$. 
\end{prop}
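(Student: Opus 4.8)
The plan is to reduce the proposition to three match-level payoff computations---the payoff when two correctly specified agents meet, when two $\Theta^{An}$ reasoners meet, and when a correctly specified agent meets a $\Theta^{An}$ reasoner---and then to read off the three claims from the fitness accounting in Definition~\ref{def:stability}. Since each agent's fitness is a $p$-weighted average of its own-group and cross-group match payoffs, and the cross-group payoff carries weight $\epsilon\to 0$ for whichever group is rare, the three comparisons reduce to comparing these match values. A useful simplification is that, because every agent's conjectured game is the fixed true tree $F^\bullet$, the KL objective in Definition~\ref{def:EZ} decouples across the two opponent-conjecture coordinates: a $\Theta^{An}$ reasoner's conjecture about group $g'$ is fit \emph{only} to the data from its matches against $g'$. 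Hence each cross-group match outcome is independent of $\epsilon$, and only the fitness weights move with $\epsilon$.

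Two building blocks about $\Theta^{An}$ reasoners drive everything. First, a backward-induction/single-crossing computation: an agent who believes the opponent Drops at each of its own nodes with a common (pooled) probability $d$ chooses Across at every one of its own decision nodes except its last, where Dropping grabs the $+\ell$ and is dominant, if and only if $d < g/(g+\ell)$. Second, since $\mathbb{A}^{An}$ forces a single Drop probability per parity class and $\hat F=F^\bullet$ is known, the KL-minimizing conjecture is exactly the opponent's pooled empirical Drop frequency at odd and at even nodes (the ABEE belief). The arithmetic crux is the identity that $g > \tfrac{2}{K-2}\ell$ is \emph{equivalent} to $\tfrac{2}{K} < \tfrac{g}{g+\ell}$; this is precisely what places the self-consistent analogy belief below the cooperation threshold.

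I then carry out the three computations. (i) Correct vs.\ correct: an unraveling argument starting from the deepest node reached with positive probability (its mover must Drop, so the previous mover strictly prefers to Drop, and so on) forces both agents to Drop at $n^1$ in every EZ, giving payoff $0$. (ii) $\Theta^{An}$ vs.\ $\Theta^{An}$: I exhibit the symmetric cooperative fixed point in which each reasoner goes Across everywhere except Dropping at its last node; then the opponent's pooled Drop frequencies are $0$ at odd nodes and $2/K$ at even nodes, both below $g/(g+\ell)$ by the identity, so cooperation is a best reply and the belief is self-consistent; play reaches $z^K$, and symmetrizing over roles the $\ell$-terms cancel to give each reasoner $\tfrac{(K-1)g}{2}$. (iii) Correct vs.\ $\Theta^{An}$: against the cooperating reasoner, the correctly specified agent best responds by Dropping one step early as P1 (reaching $z^{K-1}$, so both get $\tfrac{K-2}{2}g$) and by riding to the last node to grab $\ell$ as P2 (reaching $z^K$); symmetrizing, the correct agent earns $\tfrac{(K-1)g}{2}+\tfrac{\ell}{2}$ and the reasoner earns $\tfrac{(K-2)g}{2}-\tfrac{\ell}{2}$, the latter strictly positive by the condition.

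Assembling the fitness comparisons gives the three parts. For self-stability, both correct groups earn $0$, so $\Theta^\bullet$ has weakly higher (equal) fitness against itself in every EZ. For the second claim (correct resident, rare $\Theta^{An}$ entrant), the resident's fitness $\to 0$ while the entrant's $\to \tfrac{(K-2)g}{2}-\tfrac{\ell}{2}>0$, so $\Theta^\bullet$ is not evolutionarily stable against $\Theta^{An}$. For the third claim (analogy resident, rare correct entrant), the resident's fitness $\to \tfrac{(K-1)g}{2}$ while the correct entrant exploits to get $\to \tfrac{(K-1)g}{2}+\tfrac{\ell}{2}$, so $\Theta^{An}$ is not evolutionarily stable against $\Theta^\bullet$. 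The main obstacles I anticipate are: rigorously identifying the reasoner's KL-minimizing conjecture as the pooled frequency and verifying the best reply is \emph{full} (not partial) cooperation, together with confirming the $2/K$ fixed point; and, for self-stability, ensuring that on-path play is statistically identified from the consequence distribution so the reached-node beliefs are correct and the unraveling deviation argument is valid in \emph{every} EZ. The decoupling observation makes the extension to $\epsilon>0$ immediate, since the three match outcomes are $\epsilon$-independent and all limiting fitness inequalities are strict; where needed I invoke the existence and upper hemicontinuity results of Appendix~\ref{sec:Existence-and-Continuity}.
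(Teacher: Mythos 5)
Your proposal is correct and follows essentially the same route as the paper's proof: the unraveling argument forcing zero fitness in every correct-vs-correct EZ, the pooled-frequency KL minimizer $2/K$ together with the threshold identity $g>\tfrac{2}{K-2}\ell \iff \tfrac{2}{K}<\tfrac{g}{g+\ell}$, and explicit EZ constructions whose limiting fitness comparisons are strict. The only difference is equilibrium selection: for the second and third claims the paper exhibits EZs in which, respectively, the correct resident drops immediately as P1 against entrants (entrant fitness $\tfrac{1}{2}\bigl(\tfrac{(K-2)g}{2}-\ell\bigr)$) and $\Theta^{An}$-vs-$\Theta^{An}$ play unravels to zero fitness, whereas you use the maximal continuation EZ of Proposition \ref{prop:stable_pop_share} in both cases---equally valid, since exhibiting a single EZ with the strict fitness reversal suffices under Definition \ref{def:stability}.
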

Thus, the correctly specified model is not evolutionarily stable against
a coarse reasoner. Here, the conditional fitness of $\Theta^{An}$
against both $\Theta^{\bullet}$ and $\Theta^{An}$ can strictly improve
on the correctly specified residents' equilibrium fitness. This is
because the matches between two adherents of $\Theta^{\bullet}$ must
result in Dropping at the first move in equilibrium, while matches
where at least one player is an adherent of $\Theta^{An}$ either
lead to the same outcome or lead to a Pareto dominating payoff profile
as the misspecified agent misperceives the opponent's continuation
probability and thus chooses Across at almost all of the decision
nodes.

However, $\Theta^{An}$ is not evolutionarily stable against $\Theta^{\bullet}$
either. The correctly specified agents can exploit the analogy reasoners'
mistake and receive higher payoffs in matches against them than the
misspecified agents receive in matches against each other. Hence,
no homogeneous population can be stable, as the resident model would
have lower fitness than the entrant model in equilibrium. Thus we
determine stable shares as defined in Section \ref{sect:StableShares},
focusing on the EZs where Across is played as often as possible.

Suppose $K\ge4$ and $g>\frac{2}{K-2}\ell$. Consider the\emph{ maximal
continuation EZ}: (1) misspecified agents always play Across except
at node $K$ where they choose Drop, and (2) correctly specified agents
(i) when matched with misspecified agents, play Drop at nodes $K-1$
and $K$ and Across otherwise, and (ii) when matched with correctly
specified agents, always play Drop. We verify this indeed forms an
EZ.
\begin{prop}
\textup{\label{prop:stable_pop_share} }\emph{Suppose $K\ge4$ and
$g>\frac{2}{K-2}\ell$.}\textup{\emph{ }}\textup{The only stable population
share $(p_{A}^{*},p_{B}^{*})$ supported by the maximal continuation
EZ described above is $p_{B}^{*}=1-\frac{\ell}{g(K-2)}$. We have
$p_{B}^{*}$ is strictly increasing in $g$ and $K$, and strictly
decreasing in $\ell.$} 
\end{prop}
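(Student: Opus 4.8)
The plan is to compute fitnesses explicitly in the maximal continuation EZ as a function of $p_B$, impose the equal-fitness condition to pin down $p_B^*$, and then verify the local perturbation conditions of Definition~\ref{def:stability_interior}. Let me first set up the fitness bookkeeping. In the maximal continuation EZ, the outcomes of each match type are fully described by the proposition's verbal statement: a misspecified--vs--misspecified match plays Across until node $K$, where the P2-role player Drops; a correctly specified--vs--misspecified match plays Across until the correctly specified player (in P1 role) Drops at $K-1$; and a correctly specified--vs--correctly specified match Drops at the first move. I would translate each of these into the realized payoff pairs using the payoff formulas ($z^{end}$ gives $Kg/2$; $z^k$ for $k$ even gives P1 payoff $\tfrac{k-2}{2}g-\ell$ and P2 payoff $\tfrac{k}{2}g+\ell$; etc.), remembering the role symmetrization, so each agent's payoff in a match is the average over being assigned P1 or P2.

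**The equal-fitness equation.**

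With the per-match-type payoffs in hand, I would write the fitness of each model as a $p_B$-weighted average over opponent types. Writing $\Theta_A=\Theta^{\bullet}$ (share $p_A=1-p_B$) and $\Theta_B=\Theta^{An}$ (share $p_B$), the correctly specified fitness is $p_A\cdot(\text{payoff in }\bullet\text{-vs-}\bullet)+p_B\cdot(\text{payoff in }\bullet\text{-vs-}An)$, and similarly for the misspecified model using its within-group and cross-group payoffs. Both fitnesses are \emph{affine} in $p_B$, so the equal-fitness condition is a single linear equation. I expect that solving it yields exactly $p_B^*=1-\frac{\ell}{g(K-2)}$; the key cancellations should come from the fact that $\bullet$-vs-$\bullet$ matches give both models the same low (Drop-immediately) payoff, so the only difference in fitness is driven by the cross-group and $An$-vs-$An$ matches. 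The condition $g>\frac{2}{K-2}\ell$ guarantees $p_B^*\in(0,1)$, and the comparative statics (increasing in $g$ and $K$, decreasing in $\ell$) follow by inspecting $1-\frac{\ell}{g(K-2)}$ directly.

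**Verifying stability of the interior share.**

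The substantive work is checking the two perturbation conditions, and this is where belief endogeneity enters. The difficulty is that away from $p_B^*$ I must confirm that the maximal continuation EZ \emph{persists} as an EZ (i.e., that the misspecified agents' inferred continuation probability still makes their prescribed play a subjective best response) and that the sign of the fitness difference flips correctly. Because both fitnesses are affine in $p_B$ with the misspecified model's fitness having the steeper dependence on within-group ($An$-vs-$An$) success, the fitness difference $\text{fit}(\Theta_A)-\text{fit}(\Theta_B)$ is a nonconstant affine function vanishing at $p_B^*$; I would show its slope is negative, so that increasing $p_B$ past $p_B^*$ makes $\Theta_B$ strictly fitter (condition 1) and decreasing $p_B$ makes $\Theta_A$ strictly fitter (condition 2).

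**The main obstacle.**

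The hardest step is not the arithmetic but verifying that the prescribed strategy profile remains an EZ under the perturbations, which requires re-deriving the misspecified agents' endogenous belief about the opponents' common Drop probability from the KL-minimization in Definition~\ref{def:EZ} and checking it still rationalizes ``Across except at node $K$.'' Since the analogy reasoner pools all opponent nodes, their inferred Drop frequency is a $p$-weighted average of the empirical Drop rates against each opponent type, and I must confirm that for $p_B$ near $p_B^*$ this average stays low enough that Across remains subjectively optimal at every interior node (using $g>\frac{2}{K-2}\ell$ and $K\ge4$), while Drop at node $K$ stays optimal because it is the last move. Establishing this robustness on a neighborhood, together with the uniqueness claim (that no \emph{other} $p$ supported by this EZ equates fitnesses), completes the argument; uniqueness is immediate from the affine fitness difference having a single zero.
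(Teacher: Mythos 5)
Your overall route coincides with the paper's: translate the maximal continuation EZ's match outcomes into role-averaged payoffs, observe that both fitnesses are affine in the population share, solve the single linear equal-fitness equation, and read off uniqueness and the local perturbation conditions from the affine fitness difference (the paper computes this difference as $\tfrac{\ell}{2}-p_A\cdot\tfrac{g(K-2)}{2}$, where $p_A$ is the rational share, yielding $p_B^*=1-\tfrac{\ell}{g(K-2)}$). However, your verification of Definition \ref{def:stability_interior} has the signs backwards, and as stated that step would fail. Written in $p_B$, the difference $\mathrm{fit}(\Theta^{\bullet})-\mathrm{fit}(\Theta^{An})=\tfrac{\ell}{2}-(1-p_B)\cdot\tfrac{g(K-2)}{2}$ is strictly \emph{increasing} in $p_B$, not decreasing: conditional on a rational opponent, the analogy reasoner earns $\tfrac{g(K-2)}{2}-\tfrac{\ell}{2}$ while rational-vs-rational matches pay $0$, so the misspecified model thrives when \emph{rare}; conditional on a misspecified opponent, the rational model out-earns the misspecified model by exactly $\tfrac{\ell}{2}$, so the rational model is strictly fitter when $p_B>p_B^*$. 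Your claims --- ``increasing $p_B$ past $p_B^*$ makes $\Theta_B$ strictly fitter'' and ``decreasing $p_B$ makes $\Theta_A$ strictly fitter'' --- are exactly reversed; taken literally they describe positive feedback, i.e., an \emph{unstable} interior share, contradicting the proposition. The correct signs are precisely what Definition \ref{def:stability_interior} demands once you note that the ``$p$'' perturbed there is $\Theta_A$'s (the rational model's) share: raising the rational share strictly lowers its relative fitness, and lowering it does the opposite.

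Separately, your self-identified ``main obstacle'' --- re-deriving a $p$-dependent belief as ``a $p$-weighted average of the empirical Drop rates against each opponent type'' --- rests on a misreading of the model, and in fact dissolves. The model $\Theta^{An}=\mathbb{A}^{An}\times\mathbb{A}^{An}\times\{F^{\bullet}\}$ carries \emph{separate} conjectures about group A's and group B's strategies; the analogy pooling is across a given opponent group's nodes (by parity), not across opponent groups. The weighted KL objective in Definition \ref{def:EZ} therefore separates into one term in $\hat{a}_A$ (weight $1-p_B$) and one in $\hat{a}_B$ (weight $p_B$), minimized independently, so the inferred beliefs do not vary with $p$: drop probability $2/K$ at every node for A opponents and for B opponents in the P2 role, and $0$ at B opponents' P1 nodes. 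This $p$-invariance is exactly why the paper's proof is a one-paragraph computation: subjective optimality is checked once (continuing gains $g$ with probability $(K-2)/K$ and loses $\ell$ with probability $2/K$, favorable under $g>\tfrac{2}{K-2}\ell$), and no neighborhood-robustness argument for the beliefs is needed before running the affine fitness comparison at every $p$.
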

Intuitively, $p_{B}^{*}$ reflects the fraction of society expected
to be analogy reasoners if long-run population changes are determined
by fitness. Under the maintained assumption $g>\frac{2}{K-2}\ell,$
the stable population share of misspecified agents is strictly more
than 50\%, and the share grows with more periods and a larger increase
in payoffs from continuation. The main intuition is that the misspecified
model has a higher conditional fitness than the rational model against
rational opponents. The former leads to many periods of continuation
and a high payoff for the biased agent when the rational agent eventually
drops, but the latter leads to 0 payoff from immediate dropping. On
the other hand, the misspecified model has a lower conditional fitness
than the rational model against misspecified opponents. For the two
groups to have the same expected fitness, there must be fewer rational
opponents (i.e., a smaller stable population share $p_{A}^{*}$) when
$g$ and $K$ are higher.

Note that, when payoffs are specified as above, two successive periods
of continuation lead to a strict Pareto improvement in payoffs. Consider
instead the so-called ``dollar game'' \cite{reny1993common} in
Figure \ref{fig:The_dollar_game}, a variant with a more ``competitive''
payoff structure, where an agent always gets zero when the opponent
plays Drop, at all parts of the game tree. Assume total payoff increases
by 1 in each round. If the first player stops immediately, payoffs
are (1, 0). If the second player continues at the final node $n^{K}$,
payoffs are $(K+2,0).$

\begin{figure}[h]
\begin{centering}
\includegraphics[scale=0.35]{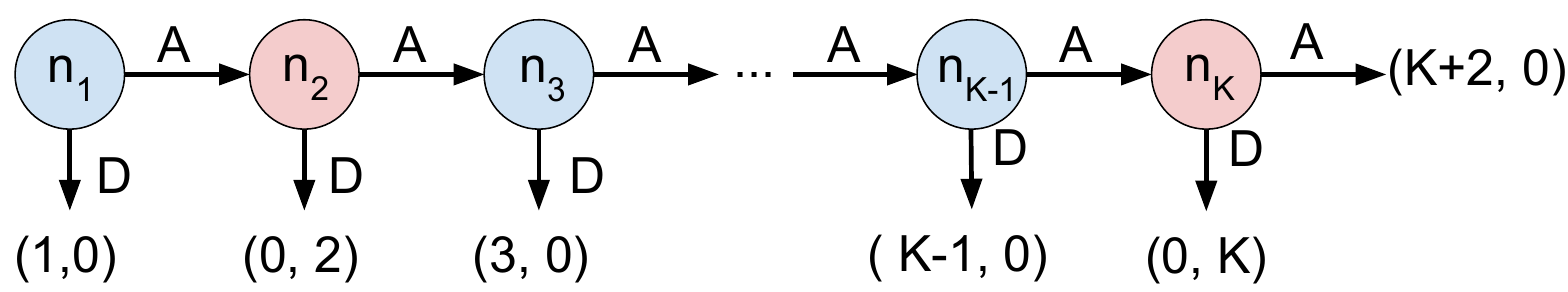}
\par\end{centering}
\vspace{-10bp}

\caption{The dollar game. P1 (blue) and P2 (red) alternate in choosing
Across (A) or Drop (D). Payoff profiles are shown at the terminal
nodes.}\label{fig:The_dollar_game}
\end{figure}

\begin{prop}
\textup{\label{prop:stable_pop_share_dollar} For every population
size $(p,1-p)$ with $p\in[0,1],$ }\emph{the maximal continuation
EZ}\textup{ is an EZ where the fitness of $\Theta^{\bullet}$ is strictly
higher than that of $\Theta^{An}$.} 
\end{prop}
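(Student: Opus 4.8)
The plan is to exhibit the maximal continuation EZ explicitly for the dollar game, verify it satisfies Definition \ref{def:EZ}, and then show that the resulting conditional fitnesses make $\Theta^{\bullet}$ strictly dominate $\Theta^{An}$ at both population extremes, concluding for interior $p$ by linearity. I would first record the prescribed strategies, which mirror the maximal continuation EZ of Proposition \ref{prop:stable_pop_share}: adherents of $\Theta^{An}$ play Across at every node except $n^{K}$, where they Drop; adherents of $\Theta^{\bullet}$ play Drop at $n^{1}$ against each other, while against a $\Theta^{An}$ opponent they continue and then Drop at $n^{K-1}$ in the P1 role and at $n^{K}$ in the P2 role. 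Since in the dollar game dropping awards the whole pot to the dropper and $0$ to the opponent, these rational responses are the exploitative best replies against an opponent who continues.

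The core of the argument is the EZ verification for the misspecified group. Because monitoring signals are off, each $\Theta^{An}$ adherent infers a single Drop probability per analogy class from the terminal-node data, via the KL-minimization in Definition \ref{def:EZ}. When in the P1 role the agent continues throughout, so every match (against either opponent type) ends with the opponent Dropping at $n^{K}$; fitting a uniform even-node Drop probability to this degenerate terminal distribution yields $\hat{d}_{\text{even}}=\tfrac{2}{K}$, independent of $p$. When in the P2 role, a fraction $p$ of matches end at $n^{K-1}$ (rational P1 Drops) and a fraction $1-p$ reach $n^{K}$ (misspecified P1 never Drops), and minimizing the weighted KL objective over a uniform odd-node Drop probability gives $\hat{d}_{\text{odd}}=\tfrac{2p}{K}$. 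I would then check that these beliefs make continuation subjectively optimal at all earlier nodes: at each node the subjective value of continuing equals the eventual pot times the probability the opponent never Drops before the agent's grabbing node, and a short backward induction shows this exceeds the current pot whenever $\hat{d}_{\text{even}},\hat{d}_{\text{odd}}\le\tfrac{2}{K}$, with equality only at the last pre-terminal decision and (for the P1 role) only when $K=4$. Finally I would confirm the rational adherents, being correctly specified, hold correct equilibrium beliefs about on-path opponent behavior and hence best-respond.

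With the EZ in hand, the fitness comparison is a direct computation of four symmetrized conditional payoffs: $\pi_{AA}=\tfrac{1}{2}$ (rational vs.\ rational unravel to $n^{1}$), $\pi_{AB}=\tfrac{1}{2}[(K-1)+K]=\tfrac{2K-1}{2}$ (the rational agent grabs $K-1$ as P1 and $K$ as P2 from a misspecified opponent), $\pi_{BA}=0$ (the continuing misspecified agent is always dropped on and earns $0$ against a rational opponent), and $\pi_{BB}=\tfrac{1}{2}[0+K]=\tfrac{K}{2}$ (within-group play reaches $n^{K}$). The fitness difference is then $W_{A}-W_{B}=\tfrac{p}{2}+(1-p)\tfrac{K-1}{2}$, a convex combination of the two strictly positive endpoint gaps $\pi_{AA}-\pi_{BA}=\tfrac{1}{2}$ and $\pi_{AB}-\pi_{BB}=\tfrac{K-1}{2}$ (positive since $K\ge 4$); hence $W_{A}>W_{B}$ for every $p\in[0,1]$.

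I expect the main obstacle to be the EZ verification rather than the fitness comparison. Specifically, I must pin down the KL-minimizing beliefs exactly (the P2-role inference depends on $p$) and then show the induced strategies remain best replies uniformly in $p$, carefully handling the boundary indifferences that arise at $p=1$ and at $K=4$, where continuation at the last pre-terminal node becomes only weakly optimal. The contrast with Proposition \ref{prop:stable_pop_share} is instructive: there the Pareto gains from continuation made $\pi_{BA}>\pi_{AA}$, enabling coexistence, whereas the competitive dollar-game payoffs flip both endpoint gaps in favor of $\Theta^{\bullet}$, ruling out any stable interior share.
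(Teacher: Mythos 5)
Your overall architecture matches the paper's: exhibit the maximal continuation EZ, verify the misspecified agents' subjective optimality under fitted analogy beliefs of the $2/K$ type, and then make the belief-free fitness comparison; your four conditional payoffs and the resulting gap $W_{A}-W_{B}=\frac{p}{2}+(1-p)\frac{K-1}{2}>0$ agree exactly with the paper's computation. But there is a genuine error in your inference step. In the model $\Theta^{An}=\mathbb{A}^{An}\times\mathbb{A}^{An}\times\{F^{\bullet}\}$ an agent carries two \emph{separate} conjectures, one about group-A opponents and one about group-B opponents (agents observe their opponent's group), and in Definition \ref{def:EZ} the weighted KL objective is additively separable across these two conjectures: $\hat{a}_{A}$ is fit only to data from matches against A, and $\hat{a}_{B}$ only to data from matches against B, for every $p$. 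The analogy coarsening pools nodes within a single opponent's strategy, not data across opponent groups. Consequently the KL-minimizing beliefs are $p$-independent: against rational opponents, $\hat{d}_{AB}^{k}=2/K$ at every $k$ (the odd-node estimate coming from P2-role matches ending at $z^{K-1}$, the even-node estimate from P1-role matches ending at $z^{K}$), and against misspecified opponents, $\hat{d}_{BB}^{k}=2/K$ at even $k$ and $0$ at odd $k$. Your pooled estimate $\hat{d}_{\text{odd}}=2p/K$, obtained by merging P2-role data across both opponent types, is not a KL minimizer in this framework and is not held in any EZ; the boundary delicacy at $p=1$ that you flag is an artifact of this mis-specification rather than a feature of the model.

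The error happens not to threaten the proposition: since your (incorrect) beliefs still satisfy $\hat{d}\le 2/K$, your monotone backward-induction check sustains exactly the prescribed strategies, and the fitness arithmetic---the actual content of the result---never references beliefs. Under the correct group-by-group inference the verification is in fact cleaner and uniform in $p$; the only genuine knife-edge is $K=4$, where continuation at $n^{K-1}$ is weakly optimal, which the argmax in Definition \ref{def:EZ} tolerates (the paper's strict inequality $\frac{K-4}{K}>0$ is stated for $K\ge 6$). One further small slip: rational agents matched with each other must play Drop at \emph{every} node, as in the paper's definition of the maximal continuation EZ, not merely at $n^{1}$; a strategy that drops at $n^{1}$ but continues at later nodes is not a best reply to itself, since against such an opponent deviating to Across at $n^{1}$ eventually reaches $z^{end}$ and pays $K+2>1$.
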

While maximal continuation remains an EZ, the rational model strictly
outperforms the misspecified model for all population shares. Provided
the maximal continuation EZ remains focal, we would expect no analogy
reasoners in the long run with this stage game. Intuitively, the payoffs
imply one player can only do better \emph{at the expense} of the opponent.
This implies the less cooperative strategy will be selected.

In a recent survey, \cite{jehiel2020survey} points out that the misspecified
Bayesian learning approach to analogy classes should aim for ``a
better understanding of how the subjective theories considered by
the players may be shaped by the objective characteristics of the
environment.''\footnote{\cite{jehiel2020survey} interprets ABEEs as players adopting the
``simplest'' explanations of observed aggregate statistics of play
with coarse feedback. An objectively coarse feedback structure can
lead agents to adopt the subjective belief that others behave in the
same way in all contingencies in the same coarse analogy class.} Taken together, our analysis in this section provides predictions
regarding when coarse reasoning should be more prevalent, specifically
when the payoff structure is ``less competitive.'' When this is
indeed the case, the bias becomes more prevalent with a longer horizon
and with faster payoff growth.
\section{\label{sec:Concluding-Discussion}Concluding Discussion}

We have introduced an evolutionary approach to predict the persistence  of misspecified models under Bayesian learning. We have emphasized the implications and significance of belief endogeneity for evolutionary stability and the viability of models. Our contributions are twofold. First, we show that belief endogeneity may confer strategic benefits in cases where dogmatic beliefs do not. This is because endogenous beliefs enable flexible commitments that are tailored to the realized situation. Second, we show that the endogeneity of  misspecified beliefs makes it difficult to extrapolate the performance of a fixed bias across environments. More broadly, we hope to have shown that incorporating inference enables the evolutionary approach to speak to new applications and patterns. 

We acknowledge that our framework does not account for which errors
appear in the first place. It is plausible that some first-stage filter prevents certain obvious misspecifications from ever reaching the stage that we study in the evolutionary framework. For this reason, the applications we focused on reflected misspecifications that seem psychologically plausible.

We have used an otherwise off-the-shelf framework to describe the selection of specifications. The goal of this paper is
not to identify suitable definitions of fitness to justify particular
errors (which is the focus for many of the papers that \citet{robson2011evolutionary}
survey). Rather, our goal has been to determine what evolutionary forces would suggest about the persistence of misspecified models. We have therefore focused more on the implications of belief endogeneity in an otherwise standard evolutionary setup.

\vspace{-15bp}

\begin{singlespace}
{\small{}{}{} \bibliographystyle{ecta}
\bibliography{misspec_and_welfare}
}{\small\par}
\end{singlespace}

\vspace{-10bp}

\newpage

\appendix

\begin{center}
\textbf{\Large{}{}{}Appendix}{\Large\par}
\par\end{center}

\vspace{-30bp}

\section{\label{sec:Proofs}Omitted Proofs from the Main Text}

\global\long\def\thedefn{A.\arabic{defn}}%
\global\long\def\thecor{A.\arabic{cor}}%
\global\long\def\theprop{A.\arabic{prop}}%
\global\long\def\thelem{A.\arabic{lem}}%
\global\long\def\theclaim{A.\arabic{claim}}%
\global\long\def\theassumption{A.\arabic{assumption}}%
 \setcounter{prop}{0} \setcounter{lem}{0} \setcounter{defn}{0}
\setcounter{assumption}{0}

\subsection{Proof of Proposition \ref{prop:cournot_example}}
\begin{proof}
We first note that for an agent $i$ who believes in the parameters
$r>0$ and $\beta$, the subjective expected utility from the strategy
profile $(a_{i},a_{-i})$ is $a_{i}\cdot(\beta-r(a_{i}+a_{-i})-c)$
(since $\varepsilon$ is mean zero). The second derivative in $a_{i}$
is $-2r<0$, so the maximizer is characterized by the first-order
condition. Taking FOC, we get the subjective best response $a_{i}=\frac{\beta-c-ra_{-i}}{2r}$. 

We also note that the correctly specified residents must have an equilibrium
belief that assigns probability 1 to $\beta=\beta^{\bullet}$ in every
EZ. This is because in an EZ where the residents play the strategy
$a_{AA}$, $\beta=\beta^{\bullet}$ has zero KL divergence whereas
any other value of $\beta$ has strictly positive KL divergence. 

Now we prove the three parts of the proposition. 

Part 1: The only Nash equilibrium of the game is for both players
to choose $\frac{\beta^{\bullet}-c}{3r^{\bullet}}$ (since the rational
best response function is linear). So, the fitness of the residents
is given by $\frac{\beta^{\bullet}-c}{3r^{\bullet}}\cdot(\beta^{\bullet}-r^{\bullet}\cdot2\cdot\frac{\beta^{\bullet}-c}{3r^{\bullet}}-c)$,
which simplifies to $\frac{(\beta^{\bullet}-c)^{2}}{9r^{\bullet}}.$ 

Part 2: In an equilibrium zeitgeist where the entrants use the strategy
$a_{BA}$, the correctly specified residents (who have correct beliefs
about all the parameters in equilibrium) best respond with $a_{AB}=\frac{\beta^{\bullet}-c-r^{\bullet}a_{BA}}{2r^{\bullet}}$.
So, the fitness of the entrant is given by $a_{BA}\cdot(\beta^{\bullet}-r^{\bullet}(a_{BA}+\frac{\beta^{\bullet}-c-r^{\bullet}a_{BA}}{2r^{\bullet}})-c).$
Simplifying we get $\frac{1}{2}[(a_{BA})\cdot(\beta^{\bullet}-c)-(a_{BA})^{2}\cdot(r^{\bullet})].$
This expression is quadratic in $a_{BA}$ and it must be maximized
at the Stackelberg strategy of the game, hence the fitness of the
entrants is strictly decreasing in the distance between $a_{BA}$
and the Stackelberg strategy. The Stackelberg strategy is found by
taking the first-order condition of the expression $\frac{1}{2}[(a_{BA})\cdot(\beta^{\bullet}-c)-(a_{BA})^{2}\cdot(r^{\bullet})]$,
which gives $a_{\text{stack}}=\frac{\beta^{\bullet}-c}{2r^{\bullet}}$. 

Part 3: In an EZ where entrants with slope misperception $\hat{r}$
play $a_{i}$ against residents who play $a_{-i}$, the distribution
of consequences has zero KL divergence only for the parameter $\hat{\beta}$
that solves $\beta^{\bullet}-r^{\bullet}A=\hat{\beta}-\hat{r}A$,
so entrants must infer $\hat{\beta}=\beta^{\bullet}+(a_{i}+a_{-i})(\hat{r}-r^{\bullet})$.
But if the entrants play $a_{i}$, in equilibrium the residents must
play the rational best response against it, which is $a_{-i}=\frac{\beta^{\bullet}-c-r^{\bullet}a_{i}}{2r^{\bullet}}$.
So the entrants infer $\hat{\beta}=\beta^{\bullet}+(a_{i}+\frac{\beta^{\bullet}-c-r^{\bullet}a_{i}}{2r^{\bullet}})(\hat{r}-r^{\bullet})$
in an EZ where they choose $a_{BA}=a_{i}$. Under the beliefs $(\hat{\beta},\hat{r}),$
the entrants' subjective best response to $\frac{\beta^{\bullet}-c-r^{\bullet}a_{i}}{2r^{\bullet}}$
is $\frac{\hat{\beta}-\hat{r}\frac{\beta^{\bullet}-c-r^{\bullet}a_{i}}{2r^{\bullet}}-c}{2\hat{r}}$,
and so we must have $a_{i}=\frac{\hat{\beta}-\hat{r}\frac{\beta^{\bullet}-c-r^{\bullet}a_{i}}{2r^{\bullet}}-c}{2\hat{r}}$.
Making the substitution that $\hat{\beta}=\beta^{\bullet}+(a_{i}+\frac{\beta^{\bullet}-c-r^{\bullet}a_{i}}{2r^{\bullet}})(\hat{r}-r^{\bullet})$
on the right-hand side, we get $a_{i}=\frac{\beta^{\bullet}+(a_{i}+\frac{\beta^{\bullet}-c-r^{\bullet}a_{i}}{2r^{\bullet}})\cdot(\hat{r}-r^{\bullet})-\hat{r}\frac{\beta^{\bullet}-c-r^{\bullet}a_{i}}{2r^{\bullet}}-c}{2\hat{r}}$.
Simplifying this linear equation in $a_{i}$ gives us the unique solution
$a_{i}=\frac{\beta^{\bullet}-c}{2\hat{r}+r^{\bullet}}$, which is
the unique EZ value of $a_{BA}$ when entrants have the slope misperception
$\hat{r}$. 
\end{proof}
\subsection{Proof of Theorem \ref{thm:theoryneeded}}

\textit{Part 1:} Suppose the hypotheses hold and let us construct
the misspecified model $\hat{\Theta}=\{F_{G}:G\in\mathcal{G}\}.$
Towards defining  the parameter $F_{G}$ for each situation $G$,  first consider $\tilde{F}_{G}$ where
$\tilde{F}_{G}(a_{i},a_{-i}):=F^{\bullet}(a_{i},\underline{\text{BR}}(a_{i},G),G)$
for every $a_{-i}\in\mathbb{A}$. Now for each $(a_{i},a_{-i},G)\in\mathbb{A}\times\mathbb{A}\times\mathcal{G}$,
define the full-support distribution $F_{G}(a_{i},a_{-i})\in\Delta(\mathbb{Y})$
as a sufficiently small perturbation of  $\tilde{F}_{G}(a_{i},a_{-i})$,
such that for every $a_{i},a_{-i}\in\mathbb{A}$ and every $G\in\mathcal{G}$,
$\min_{\hat{G}\in\mathcal{G}}KL(F^{\bullet}(a_{i},a_{-i},G)\parallel F_{\hat{G}}(a_{i},a_{-i}))$
has a unique solution. This can be done because there are finitely
many strategies and situations.

Consider any  EZ $\mathfrak{Z}$ where the resident  is the minimal correctly specified model and the entrant  is 
$\hat{\Theta}$. By situation identifiability and because we are in an environment of strategic certainty,
in $\mathfrak{Z}$ the correctly specified residents must believe
in the true $F^{\bullet}(\cdot,\cdot,G)$ in every situation $G$.
When the fraction of entrants $\epsilon>0$ is sufficiently small, the entrants cannot hold a mixed belief in any situation $G$, by the
construction of the parameters in $\hat{\Theta}$ to rule out ties in
KL divergence if entrants only use the consequences in their matches against the residents to make inferences. We show further that entrants must believe in $F_{G}$
in situation $G$ for $\epsilon$ small enough. This is because if they instead believed in $F_{G'}$
for some $G'\ne G$, then they must play $\bar{a}_{G'}$ as the Stackelberg
strategy is assumed to be unique. Let $a_{-i}$ be the rational best
response to $\bar{a}_{G'}$ in situation $G$ and $a_{-i}'$ be the
rational best response to $\bar{a}_{G'}$ in situation $G',$ both
unique by assumption. In their matches against the residents, the entrants' expected distribution of consequences
$F_{G'}(\bar{a}_{G'},a_{-i})$ is a perturbed version of $F^{\bullet}(\bar{a}_{G'},a_{-i}',G')$,
while the true distribution of consequences $F^{\bullet}(\bar{a}_{G'},a_{-i},G)$
is a perturbed version of $F_{G}(\bar{a}_{G'},a_{-i})$. We have $F^{\bullet}(\bar{a}_{G'},a_{-i}',G')\ne F^{\bullet}(\bar{a}_{G'},a_{-i},G)$
by Stackelberg identifiability, so $KL(F^{\bullet}(\bar{a}_{G'},a_{-i},G)\parallel F_{G}(\bar{a}_{G'},a_{-i}))<KL(F^{\bullet}(\bar{a}_{G'},a_{-i},G)\parallel F_{G'}(\bar{a}_{G'},a_{-i}))$
when the perturbations are sufficiently small. When $\epsilon>0$ is small enough, this contradicts the
entrants believing in $F_{G'}$ in situation $G$ as the parameter $F_{G}$
generates smaller weighted KL divergence across all of the entrant's data (since data from matches against entrants get weighted by $\epsilon$ and the full-support nature of all processes in the model implies that KL divergence of the data from such matches is bounded). So the entrants get the Stackelberg
payoff in each situation when playing the resident, which means they have higher fitness than
the residents in every  EZ for $\epsilon$ small enough, since $\bar{v}_{G}>v_{G}^{\text{NE}}$
for at least one situation and $q$ has full support. Finally, there
exists at least one  EZ: for $\epsilon>0$ small enough, it is an  EZ for the residents to believe
in $F^{\bullet}(\cdot,\cdot,G)$ in every situation $G$, to play
the symmetric Nash profile that results in $v_{G}^{\text{NE}}$ when
matched with other residents (this profile exists by hypothesis of
the theorem), and for the entrants to believe in $F_{G}$ and play
$(\bar{a}_{G},\underline{\text{BR}}(\bar{a}_{G},G))$ in matches against
residents in situation $G.$

\textit{Part 2:} Let $\mathcal{V}$ be the convex hull of $\{(v_{G}^{b})_{G\in\mathcal{G}}\mid b:\mathbb{A}\rightrightarrows\mathbb{A}\}$,
and let $\mathcal{U}=\{(u_{G})_{G\in\mathcal{G}}:u_{G}\le v_{G}\text{ for all }G\text{ for some }v\in\mathcal{V}\}.$
Note $\mathcal{U}$ is closed and convex (since $\mathcal{V}$ is
convex). By hypothesis, $v^{\text{NE}}$ is not in the interior or
on the boundary of $\mathcal{U}.$ So by the separating hyperplane
theorem, there exists a real number $c$ and a vector $q\in\mathbb{R}^{|\mathcal{G}|}$ with
$q_{G}\ne0$ for every $G,$ so that $q\cdot v^{\text{NE}}>c>q\cdot u$
for every $u\in\mathcal{U}.$ Furthermore, $q_{G}\ge0$ for every
$G.$ This is because if $q_{G'}<0$ for some $G',$ then since $\mathcal{U}$
contains vectors with arbitrarily negative values in the $G'$ dimension,
we cannot have $q\cdot v^{\text{NE}}\ge q\cdot u$ for every $u\in\mathcal{U}.$
We may then without loss view $q$ as a distribution on $\mathcal{G}$. In fact, we can take $q$ to be full support.  To see this,  note that since $|\mathcal{G}| < \infty$ and $\mathcal{U}$ is convex, we have
\begin{equation*} 
\lim_{\varepsilon \rightarrow 0} \max_{v \in \mathcal{U}} \left[(1- \varepsilon)q + \frac{\varepsilon}{|\mathcal{G}|}(1,1, \ldots,1) \right] \cdot v =  \max_{v \in \mathcal{U}} q \cdot v,
\end{equation*} 

\noindent by continuity of the support function of convex sets in $\mathbb{R}^{n}$ (given that the support function on $\mathcal{U}$ is bounded for all $q \geq 0$, since $v_{G}^{b}$ is bounded above for every $b$ and every $G$). Thus, setting $\tilde{q}(\varepsilon) = (1- \varepsilon)q + \frac{\varepsilon}{|\mathcal{G}|}(1,1,\ldots, 1)$, we have $\tilde{q}(\varepsilon)$ is a full support distribution with $\tilde{q}(\varepsilon) \cdot v^{NE} > c > \tilde{q}(\varepsilon) \cdot u$ whenever $\varepsilon$ is sufficiently small, since we have that these inequalities hold in the limit.

Now consider any singleton model $\mathcal{F}=\{F\}$, and let $b:\mathbb{A}\rightrightarrows\mathbb{A}$
be the subjective best-response correspondence that $F$ induces. If
$v_{G}^{b}\ne-\infty$ for every $G$, then, for each $G$ we can
find a strategy profile $(a_{i}^{G},a_{-i}^{G})$ where $a_{i}^{G}\in b(a_{-i}^{G}),$
$a_{-i}^{G}$ is a rational best response to $a_{i}^{G}$ in situation
$G$, and the strategy pair gives payoff $v_{G}^{b}$ to the first
player. For any population shares of the two models, there is an  EZ where the resident correctly specified agents
get $v_{G}^{\text{NE}}$ in situation $G$ when playing against each other, and the entrants with model
$\Theta$ play $(a_{i},a_{-i})$ in matches against the residents
and get utility $v_{G}^{b}$ in the same situation. Under the distribution
of situations $q$, as the fraction of the entrants approaches 0, the residents' fitness approaches $q\cdot v^{\text{NE}}$
while that of the entrants approaches $q\cdot v^{b}$, and the former is strictly
larger by construction of $q$ since $v^{b}\in\mathcal{U}$. This
EZ shows the correctly specified model is not evolutionarily fragile
against $\{F\}.$ Otherwise, if we have that $v_{G}^{b}=-\infty$
for some $G,$ then there are no  EZs, so the correctly specified
model is not evolutionarily fragile against $\{F\}$ by the emptiness
of the set of  EZs.

\subsection{Proof of Proposition \ref{prop:no_reversal}}
\begin{proof}
Let two singleton models $\Theta_{A},\Theta_{B}$ be given. By contradiction,
suppose they exhibit stability reversal. Let $\mathfrak{Z}=(\mu_{A},\mu_{B},p=(0,1),(a))$
be any EZ where $\Theta_{B}$ is resident. By the definition of EZ,
$\mathfrak{Z}^{'}=(\mu_{A},\mu_{B},p=(1,0),(a))$
is also an EZ where $\Theta_{A}$ is resident. Let $u_{g,g^{'}}$
be model $\Theta_{g}$'s conditional fitness against group $g^{'}$
in the EZ $\mathfrak{Z}^{'}$. Part (i) of the definition of stability
reversal requires that $u_{AA}>u_{BA}$ and $u_{AB}>u_{BB}$. These
conditional fitness levels remain the same in $\mathfrak{Z}$. This
means the fitness of $\Theta_{A}$ is strictly higher than that of
$\Theta_{B}$ in $\mathfrak{Z}$, a contradiction.
\end{proof}

\subsection{Proof of Proposition \ref{prop:reversal_inference_channel}}
\begin{proof}
To show the first claim, suppose $\mathfrak{Z}=(\mu_{A},\mu_{B},p=(1,0),(a_{AA},a_{AB},a_{BA},a_{BB}))$
is an EZ, and $\mathfrak{\tilde{Z}}=(\mu_{A},\mu_{B},p=(0,1),(\tilde{a}_{AA},\tilde{a}_{AB},\tilde{a}_{BA},\tilde{a}_{BB}))$
is another EZ where the adherents of $\Theta_{B}$ hold the same belief
$\mu_{B}$ (group A's belief cannot change as $\Theta_{A}$ is the
correctly specified singleton model). By the optimality of behavior
in $\mathfrak{Z}$, $a_{BA}$ best responds to $a_{AB}$ under the
belief $\mu_{B}$, and $a_{AB}$ best responds to $a_{BA}$ under
the belief $\mu_{A}$, therefore $\mathfrak{\tilde{Z}}^{'}=(\mu_{A},\mu_{B},p=(0,1),(\tilde{a}_{AA},a_{AB},a_{BA},\tilde{a}_{BB}))$
is another EZ. This holds because the distributions of observations
for the adherents of $\Theta_{B}$ are identical in $\mathfrak{\tilde{Z}}$
and $\mathfrak{\tilde{Z}}^{'}$, since they only face data generated
from the profile $(\tilde{a}_{BB},\tilde{a}_{BB}).$ At the same time,
since $\tilde{a}_{BB}$ best responds to itself under the belief $\mu_{B},$
we have that $\mathfrak{Z^{'}}=(\mu_{A},\mu_{B},p=(1,0),(a_{AA},a_{AB},a_{BA},\tilde{a}_{BB}))$
is an EZ. Part (i) of the definition of stability reversal applied
to $\mathfrak{Z^{'}}$ requires that $U^{\bullet}(a_{AB},a_{BA})>U^{\bullet}(\tilde{a}_{BB},\tilde{a}_{BB})$
(where $U^{\bullet}$ is the objective expected payoffs), but part
(ii) of the same definition applied to $\mathfrak{\tilde{Z}}^{'}$
requires $U^{\bullet}(\tilde{a}_{BB},\tilde{a}_{BB})\ge U^{\bullet}(a_{AB},a_{BA}),$
a contradiction.

To show the second claim, by way of contradiction suppose $\Theta_{B}$
is strategically independent and $\mathfrak{Z}=(\mu_{A},\mu_{B},p=(0,1),(a_{AA},a_{AB},a_{BA},a_{BB}))$
is an EZ. By strategic independence, the adherents of $\Theta_{B}$
find it optimal to play $a_{BB}$ against any opponent strategy under
the belief $\mu_{B}$. So, there exists another EZ of the form $\mathfrak{Z^{'}}=(\mu_{A},\mu_{B},p=(0,1),(a_{AA},a_{AB}^{'},a_{BB},a_{BB}))$,
where $a_{AB}^{'}$ is an objective best response to $a_{BB}$. The
belief $\mu_{B}$ is sustained because in both $\mathfrak{Z}$ and
$\mathfrak{Z^{'}}$, the adherents of $\Theta_{B}$ have the same
data: from the strategy profile $(a_{BB},a_{BB}).$ In $\mathfrak{Z^{'}}$,
$\Theta_{A}$'s fitness is $U^{\bullet}(a_{AB}^{'},a_{BB})$ and
$\Theta_{B}$'s fitness is $U^{\bullet}(a_{BB},a_{BB}).$ We have
$U^{\bullet}(a_{AB}^{'},a_{BB})\ge U^{\bullet}(a_{BB},a_{BB})$ since
$a_{AB}^{'}$ is an objective best response to $a_{BB},$ contradicting
the definition of stability reversal.
\end{proof}

\subsection{Details Behind Example \ref{exa:stability_reversal_example}} \label{App:ExDetails}

Let $b^{*}(a_{i},a_{-i})$ solve  
\[ \min_{b\in\mathbb{R}}D_{KL}(F^{\bullet}(a_{i},a_{-i})\parallel\hat{F}(a_{i},a_{-i};b,m))), \]
where $F^{\bullet}(a_{i},a_{-i})$ is the objective distribution over
consequences under the investment profile $(a_{i},a_{-i}),$ and $\hat{F}(a_{i},a_{-i};b,m)$
is the distribution under the same investment profile if
 productivity is given by $P=b(x_{i}+x_{-i})-m+\epsilon$. We
find that $b^{*}(a_{i},a_{-i})=b^{\bullet}+\frac{m}{a_{i}+a_{-i}}$. It is clear that $D_{KL}(F^{\bullet}(a_{i},a_{-i})\parallel\hat{F}(a_{i},a_{-i};b^{*}(a_{i},a_{-i}),m)))=0$,
while this KL divergence is strictly positive for any other choice
of $b$.

Now we show that Example \ref{exa:stability_reversal_example}  exhibits stability reversal. 
In every EZ with $p=(1,0),$ we must have $a_{AA}=a_{AB}=1.$
If $a_{BA}=2,$ then the adherents of $\Theta_{B}$ infer $b^{*}(1,2)=b^{\bullet}+\frac{m}{3}$.
With this inference, the biased agents expect $1\cdot(2(b^{\bullet}+\frac{m}{3})-m)=2b^{\bullet}-\frac{m}{3}$
from playing 1 against rival investment 1, and expect $2\cdot(3(b^{\bullet}+\frac{m}{3})-m)-c=6b^{\bullet}-c$
from playing 2 against rival investment 1. Since $4b^{\bullet}+\frac{m}{3}-c>0$
from Condition \ref{cond:large_misspec}, there is an EZ with $a_{BA}=2$
and $\mu_{B}$ puts probability 1 on $b^{\bullet}+\frac{m}{3}$. It
is impossible to have $a_{BA}=1$ in EZ. This is because $b^{*}(1,1)>b^{*}(1,2),$
and under the inference $b^{*}(1,2)$ we already have that the best
response to 1 is 2, so the same also holds under any higher belief
about complementarity. Also, we have $a_{BB}=2$, since 2 must best
respond to both 1 and 2. So in every such EZ, $\Theta_{A}$'s conditional
fitness against group A is $2b^{\bullet}$ and $\Theta_{B}$'s conditional
fitness against group A is $6b^{\bullet}-c$, with $2b^{\bullet}>6b^{\bullet}-c$
by Condition \ref{cond:medium_cost}. Also, $\Theta_{A}$'s conditional
fitness against group B is $3b^{\bullet}$, while $\Theta_{B}$'s
conditional fitness against group B is $8b^{\bullet}-c$. Again, $3b^{\bullet}>8b^{\bullet}-c$
by Condition \ref{cond:medium_cost}.

Next, we show $\Theta_{B}$ has strictly higher fitness than $\Theta_{A}$
in every EZ with $p_{B}=1.$ There is no EZ with $a_{BB}=1.$
This is because $b^{*}(1,1)=b^{\bullet}+\frac{m}{2}$. As discussed
before, under this inference the best response to 1 is 2, not 1. Now
suppose $a_{BB}=2.$ Then $\mu_{B}$ puts probability 1 on $b^{*}(2,2)=b^{\bullet}+\frac{m}{4}.$
With this inference, the biased agents expect $1\cdot(3(b^{\bullet}+\frac{m}{4})-m)=3b^{\bullet}-\frac{m}{4}$
from playing 1 against rival investment 2, and expect $2\cdot(4(b^{\bullet}+\frac{m}{4})-m)-c=8b^{\bullet}-c$
from playing 2 against rival investment 2. We have $5b^{\bullet}+\frac{m}{4}-c>0$
from Condition \ref{cond:large_misspec}, so 2 best responds to 2.
We must have $a_{AA}=a_{AB}=1.$ We conclude the unique EZ behavior
is $(a_{AA},a_{AB},a_{BA},a_{BB})=(1,1,1,2)$, since the biased agents
expect $1\cdot(2(b^{\bullet}+\frac{m}{4})-m)=2b^{\bullet}-\frac{m}{2}$
from playing 1 against rival investment 1, and expect $2\cdot(3(b^{\bullet}+\frac{m}{4})-m)-c=6b^{\bullet}-\frac{m}{2}-c$
from playing 2 against rival investment 1. We have $4b^{\bullet}-c<0$
from Condition \ref{cond:medium_cost}, so 1 best responds to 1. In
the unique EZ with $p=(0,1),$ the fitness of $\Theta_{A}$
is $2b^{\bullet}$ and the fitness of $\Theta_{B}$ is $8b^{\bullet}-c,$
where $8b^{\bullet}-c>2b^{\bullet}$ by Condition \ref{cond:medium_cost}.

\subsection{Proof of Proposition \ref{prop:abee}}
\begin{proof}
When $\Theta_{A}=\Theta_{B}=\Theta^{\bullet}$, with any $(p_{A},p_{B})$, we show adherents of both
models have 0 fitness in every  EZ. Suppose instead that the match
between groups $g$ and $g^{'}$ reach a terminal node other than
$z_{1}$ with positive probability. Let $n_{L}$ be the last non-terminal
node reached with positive probability, so we must have $L\ge2$,
and also that nodes $n_{1},...,n_{L-1}$ are also reached with positive
probability. So Drop must be played with probability 1 at $n_{L}.$
Since $n_{L}$ is reached with positive probability, correctly specified
agents hold correct beliefs about opponent's play at $n_{L}$, which
means at $n_{L-1}$ it cannot be optimal to play Across with positive
probability since this results in a loss of $\ell$ compared to playing
Drop, a contradiction.

Now let $\Theta_{A}=\Theta^{\bullet}$, $\Theta_{B}=\Theta^{An}$ and let $p_{B}\in(0,1).$ We claim there
is an EZ where $d_{AA}^{k}=1$ for every $k$, $d_{AB}^{k}=0$ for
every even $k$ with $k<K$, $d_{AB}^{k}=1$ for every other $k$,
$d_{BA}^{k}=0$ for every odd $k$ and $d_{BA}^{k}=1$ for every even
$k$, and $d_{BB}^{k}=0$ for every $k$ with $k<K,$ $d_{BB}^{K}=1.$
It is easy to see that the behavior $(d_{AA})$ is optimal under correct
belief about opponent's play. In the $\Theta_{A}$ vs. $\Theta_{B}$
matches, the conjecture about A's play $\hat{d}_{AB}^{k}=2/K$ for
$k$ even, $\hat{d}_{AB}^{k}=1$ for $k$ odd minimizes KL divergence
among all strategies in $\mathbb{A}^{An}$, given B's play. To see
this, note that when B has the role of P2, opponent Drops immediately.
When B has the role of P1, the outcome is always $z_{K}.$ So a conjecture
with $\hat{d}_{AB}^{k}=x$ for every even $k$ has the conditional
KL divergence of: 
\begin{align*}
 & \sum_{k\le K-1\text{ odd}}\underset{(1,z_{k})\text{ for }k\le K-1\text{ odd}}{\underbrace{0\cdot\ln\left(\frac{0}{0}\right)}}+\sum_{k\le K-1\text{ even}}\underset{(1,z_{k})\text{ for }k\le K-1\text{ even}}{\underbrace{0\cdot\ln\left(\frac{0}{(1/2)\cdot(1-x)^{(k/2)-1}\cdot x}\right)}}\\
 & +\underset{(1,z_{K})}{\underbrace{\frac{1}{2}\ln\left(\frac{1/2}{(1/2)\cdot(1-x)^{(K/2)-1}\cdot x}\right)}}+\underset{(1,z_{end})}{\underbrace{0\cdot\ln\left(\frac{0}{(1-x)^{(K/2)}}\right)}}
\end{align*}
when matched with an opponent from $\Theta_{A}$. Using $0\cdot\ln(0)=0,$
the expression simplifies to $\frac{1}{2}\ln\left(\frac{1}{(1-x)^{(K/2)-1}\cdot x}\right)$,
which is minimized among $x\in[0,1]$ by $x=2/K.$ Against this conjecture,
the difference in expected payoff at node $n_{K-1}$ from Across versus
Drop is $(1-2/K)(g)+(2/K)(-\ell).$ This is strictly positive when
$g>\frac{2}{K-2}\ell.$ This means the continuation value at $n_{K-1}$
is at least $g$ larger than the payoff of Dropping at $n_{K-3},$
so again Across has strictly higher expected payoff than Drop. Inductively,
$(d_{BA}^{k})$ is optimal given the belief $(\hat{d}_{AB}^{k}).$
Also, $(d_{AB}^{k})$ is optimal as it results in the highest possible
payoff. We can similarly show that the conjecture $\hat{d}_{BB}^{k}$
with $\hat{d}_{BB}^{k}=2/K$ for $k$ even, $\hat{d}_{BB}^{k}=0$
for $k$ odd minimizes KL divergence conditional on $\Theta_{B}$
opponent, and $(d_{BB}^{k})$ is optimal given this conjecture.

As $p_{B}\to0,$ we find an  EZ where adherents of A have fitness
approaching 0, whereas the adherents of B have fitness approaching at least $\frac{1}{2}(((K/2)-1)$$g-\ell)>0$
since $g>\frac{2}{K-2}\ell.$ This shows $\Theta_{A}$ is not evolutionarily
stable against $\Theta_{B}$.

But consider the same $(d_{AA},d_{AB},d_{BA})$ and suppose $d_{BB}^{k}=1$
for every $k$. Taking $p_{B}\to1,$ we find an
 EZ where adherents of B have fitness 0, adherents of A have fitness
$\frac{1}{2}\cdot((K/2)g+\ell)>0.$ This shows $\Theta_{B}$
is not evolutionarily stable against $\Theta_{A}$.
\end{proof}

\subsection{Proof of Proposition \ref{prop:stable_pop_share}}
\begin{proof}
Take $g>\frac{2}{K-2}\ell$ in the centipede game. The misspecified
agent thinks a group B agent in the role of P2 and a group A agent
in either role has a probability $2/K$ of stopping at every node.
Under this belief, choosing to continue instead of drop means there
is a $(K-2)/K$ chance of gaining $g$, but a $2/K$ chance of losing
$\ell$. Since we assume $g>\frac{2}{K-2}\ell$, it is strictly better
to continue. When $p$ fraction of the agents are correctly specified,
the fitness of $\Theta^{\bullet}$ is $p\cdot0+(1-p)\cdot(\frac{1}{2}\frac{g(K-2)}{2}+\frac{1}{2}(\frac{gK}{2}+\ell))$,
while the fitness of $\Theta^{An}$ is $p\cdot[\frac{1}{2}(\frac{g(K-2)}{2}-\ell)+\frac{1}{2}\frac{g(K-2)}{2}]+(1-p)[\frac{1}{2}(\frac{g(K-2)}{2}-\ell)+\frac{1}{2}(\frac{gK}{2}+\ell)]$.
The difference in fitness is $-p[\frac{1}{2}(\frac{g(K-2)}{2}-\ell)+\frac{1}{2}\frac{g(K-2)}{2}]+(1-p)\frac{1}{2}\ell.$
Simplifying, this is $\frac{1}{2}\ell-p\cdot\frac{g(K-2)}{2}$, a
strictly decreasing function in $p.$ When $p=\frac{\ell}{g(K-2)},$
which is a number strictly between 0 and 1/2 from the assumption $g>\frac{2}{K-2}\ell$
in the centipede game, the two models have the same fitness. Furthermore, since the payoff difference is linear in $p$ with a negative slope, the difference in fitness is negative when $p >\frac{\ell}{g(K-2)}$---so that $\Theta^{An}$ outperforms $\Theta^{\bullet}$ under these population shares---and conversely, the difference in fitness is positive when $p < \frac{\ell}{g(K-2)}$. Thus, we have this fraction of the population being correctly specified forms a stable population share.
\end{proof}

\subsection{Proof of Proposition \ref{prop:stable_pop_share_dollar}}
\begin{proof}
In the $\Theta^{An}$ vs. $\Theta^{An}$ match,
the adherents of $\Theta^{An}$ hold the belief that $\hat{d}_{BB}^{k}=2/K$
for every even $k$. In the role of P1, at node $k$ for $k\le K-3,$
stopping gives them $k$ but continuing gives them a $(K-2)/K$ chance
to get at least $k+2$, and we have $k\le\frac{K-2}{K}(k+2)\iff2k\le2K-4\iff k\le K-2$.
At node $K-1,$ the agent gets $K-1$ from dropping but expects $(K+2)\cdot\frac{K-2}{K}$
from continuing, and $(K+2)\cdot\frac{K-2}{K}-(K-1)=\frac{K^{2}-4-K^{2}+K}{K}=\frac{K-4}{K}>0$
since $K\ge6.$

In the $\Theta^{\bullet}$ vs. $\Theta^{An}$
match, the adherents of $\Theta^{An}$ hold the belief that $\hat{d}_{AB}^{k}=2/K$
for every $k.$ By the same arguments as before, the behavior of the
adherents of $\Theta^{An}$ are optimal given these beliefs. Also,
the adherents of $\Theta^{\bullet}$ have no profitable deviations
since they are best responding both as P1 and P2.

When $p$ fraction of the agents are correctly specified, in the dollar
game the fitness of $\Theta^{\bullet}$ is $p\cdot0.5+(1-p)\cdot(\frac{1}{2}(K-1)+\frac{1}{2}K)$,
while the fitness of $\Theta^{An}$ is $p\cdot0+(1-p)\cdot(\frac{1}{2}\cdot0+\frac{1}{2}K)$.
For any $p$, the fitness of $\Theta^{\bullet}$ is strictly
higher than that of $\Theta^{An}$.
\end{proof}

\section{Existence and Continuity of Equilibrium Zeitgeists}\label{sec:Existence-and-Continuity}

We provide a few technical results about the existence of EZ and the
upper-hemicontinuity of the set of EZs with respect to population
share. We suppose that $|\mathcal{G}|=1$ for simplicity, but analogous
results would hold for environments with multiple situations. Note
that the same belief endogeneity that generates new stability phenomena
in Section \ref{sec:new_stability_phenomena} also leads to some difficulty
in establishing existence and continuity results, as agents draw different
inferences with different rates of interactions with the various groups. 

We provide two sets of results. The first concerns environments where
the expected KL divergence of any parameter in the model is finite
under any strategy profile (for example, when every parameter conjectures
a full-support distribution over consequences in $\mathbb{Y}$ under
every strategy profile, and the support of the monitoring signal does
not vary with opponent's strategy). The second focuses on environments
with strategic certainty, so monitoring signals do not have full support
and instead perfectly reveal opponent's strategy. But, we impose the
same finite KL divergence requirement on the consequences. 

For each $g,g^{'}\in\{A,B\},$ define $K_{g,g^{'}}:\mathbb{A}^{2}\times\mathcal{G}\times\Theta_{g}\to\mathbb{R}$
by $K_{g,g^{'}}(a_{i},a_{-i},G;(a_{A},a_{B},F))=D_{KL}(F^{\bullet}(a_{i},a_{-i},G)\times\varphi^{\bullet}(a_{-i})\parallel F(a_{i},a_{g^{'}})\times\varphi^{\bullet}(a_{g'})).$
This is the KL divergence of the parameter $(a_{A},a_{B},F)\in\Theta_{g}$
in situation $G$ based on the data generated from the strategy profile
$(a_{i},a_{-i})$.

\subsection{Environments with Full-Support Monitoring Signals}\label{subsec:Environments-with-Full-Support}

Let two models, $\Theta_{A},\Theta_{B}$ be fixed. Also fix population
shares $p$$.$ For $g,g'\in\{A,B\},$ define $V_{g,g'}:\mathbb{A}^{2}\times\Theta_{g}\to\mathbb{R}$
to be $V_{g,g'}(a_{i},a_{-i},(\hat{a}_{A},\hat{a}_{B},F)):=\mathbb{E}_{y\sim F(a_{i},\hat{a}_{g'})}(\pi(y))$,
the expected payoff from choosing strategy $a_{i}$ under the parameter
$(\hat{a}_{A},\hat{a}_{B},F)$ when matched with an opponent from
group $g'$. Extend the domain of the third argument of $V_{g,g'}$
from $\Theta_{g}$ to $\Delta(\Theta_{g})$ by linearity. The $K_{g,g'}$ function defined above
specializes in the case   of $|\mathcal{G}|=1$ to be {$K_{g,g^{'}}(a_{i},a_{-i};(\hat{a}_{A},\hat{a}_{B},F))=D_{KL}(F^{\bullet}(a_{i},a_{-i})\times\varphi^{\bullet}(a_{-i})\parallel F(a_{i},\hat{a}_{g^{'}})\times\varphi^{\bullet}(\hat{a}_{g'})).$ }
\begin{assumption}
\label{assu:compact}$\mathbb{A},\Theta_{A},\Theta_{B}$ are compact
metrizable spaces.
\end{assumption}
\begin{assumption}
\label{assu:continuous_utility} For every $g,g'\in\{A,B\},$ $V_{g,g'}$
is continuous.
\end{assumption}
\begin{assumption}
\label{assu:finite_KL}For every $g,g'\in\{A,B\},$ $K_{g,g^{'}}$
is well-defined and finite on its domain $\mathbb{A}^{2}\times\Theta_{g}$. 
\end{assumption}
\selectlanguage{american}%
\selectlanguage{english}%
\begin{assumption}
\label{assu:continuous_KL} For every $g,g'\in\{A,B\},$ $K_{g,g^{'}}$
is continuous.
\end{assumption}
\begin{assumption}
\label{assu:quasiconcave}$\mathbb{A}$ is convex and, for $g,g'\in\{A,B\}$,
all $a_{-i}\in\mathbb{A}$ and all $\mu_{g}\in\Delta(\Theta_{g}),$
$a_{i}\mapsto V_{g,g'}(a_{i},a_{-i};\mu_{g})$ is quasiconcave. 
\end{assumption}
We show the existence of equilibrium zeitgeists using the Kakutani-Fan-Glicksberg
fixed point theorem, applied to the correspondence which maps strategy
profiles and beliefs over parameters into best replies and beliefs
over KL-divergence minimizing parameter. We start with a lemma.
\begin{lem}
\label{lem:inference_uhc}For $g\in\{A,B\}$, $a=(a_{AA},a_{AB},a_{BA},a_{BB})\in\mathbb{A}^{4},$
and $0\le p_{g}\le1$, let 
\[
\Theta_{g}^{*}(a,p_{g}):=\underset{\hat{\theta}\in\Theta_{g}}{\arg\min}\left\{ \begin{array}{c}
p_{g}\cdot K_{g,g}(a_{g,g},a_{g,g};\hat{\theta})+(1-p_{g})\cdot K_{g,-g}(a_{g,-g},a_{-g,g};\hat{\theta})\end{array}\right\} .
\]
Then, $\Theta_{g}^{*}$ is upper hemicontinuous in its arguments. 
\end{lem}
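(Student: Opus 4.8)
The plan is to recognize this as a textbook application of Berge's Maximum Theorem, once the objective is shown to be jointly continuous in all of its arguments. First I would package the weighted KL objective into a single function. Define $\Phi_{g}:\mathbb{A}^{4}\times[0,1]\times\Theta_{g}\to\mathbb{R}$ by
\[
\Phi_{g}(a,p_{g},\hat{\theta}):=p_{g}\cdot K_{g,g}(a_{g,g},a_{g,g};\hat{\theta})+(1-p_{g})\cdot K_{g,-g}(a_{g,-g},a_{-g,g};\hat{\theta}),
\]
so that by definition $\Theta_{g}^{*}(a,p_{g})=\arg\min_{\hat{\theta}\in\Theta_{g}}\Phi_{g}(a,p_{g},\hat{\theta})$. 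The goal is then to show that this argmin correspondence is upper hemicontinuous jointly in $(a,p_{g})$.

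The central step is to verify that $\Phi_{g}$ is jointly continuous on the product space $\mathbb{A}^{4}\times[0,1]\times\Theta_{g}$. By Assumption \ref{assu:continuous_KL}, each $K_{g,g'}$ is continuous on $\mathbb{A}^{2}\times\Theta_{g}$, and by Assumption \ref{assu:finite_KL} it is finite-valued, so no extended-real arithmetic is involved. Composing with the continuous coordinate projections $a\mapsto(a_{g,g},a_{g,g})$ and $a\mapsto(a_{g,-g},a_{-g,g})$, I obtain that $(a,\hat{\theta})\mapsto K_{g,g}(a_{g,g},a_{g,g};\hat{\theta})$ and $(a,\hat{\theta})\mapsto K_{g,-g}(a_{g,-g},a_{-g,g};\hat{\theta})$ are continuous. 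Multiplying each by the continuous scalar maps $(a,p_{g},\hat{\theta})\mapsto p_{g}$ and $(a,p_{g},\hat{\theta})\mapsto 1-p_{g}$ and summing yields joint continuity of $\Phi_{g}$, since sums and products of continuous real-valued functions are continuous.

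With $\Phi_{g}$ jointly continuous in hand, I would invoke Berge's Maximum Theorem applied to the maximization of $-\Phi_{g}$. The feasible correspondence here is the \emph{constant} correspondence $(a,p_{g})\mapsto\Theta_{g}$, which is compact-valued by Assumption \ref{assu:compact} and trivially both upper and lower hemicontinuous. Berge's theorem then delivers that the optimizer correspondence $\Theta_{g}^{*}$ is nonempty, compact-valued, and upper hemicontinuous in $(a,p_{g})$, which is exactly the conclusion of the lemma.

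The only point that requires care is the joint continuity of $\Phi_{g}$ as $p_{g}$ ranges over $[0,1]$ --- specifically, that scaling the KL terms by $p_{g}$ introduces no discontinuity at the endpoints $p_{g}\to 0$ or $p_{g}\to 1$. This is clean in the present full-support regime precisely because Assumption \ref{assu:finite_KL} rules out infinite KL values: continuity of $K_{g,g'}$ on the compact domain $\mathbb{A}^{2}\times\Theta_{g}$ (Assumption \ref{assu:compact}) makes it bounded, and a product of bounded continuous real functions is continuous, so no $0\cdot\infty$ convention is needed. This is exactly the feature that distinguishes this case from the strategic-certainty environments treated separately, where the monitoring KL terms can be infinite and the endpoint behavior must be argued by hand.
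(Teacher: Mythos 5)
Your proof is correct, and it rests on the same key fact as the paper's: joint continuity of the weighted objective $W(a,p_{g},\hat{\theta})=p_{g}\cdot K_{g,g}(a_{g,g},a_{g,g};\hat{\theta})+(1-p_{g})\cdot K_{g,-g}(a_{g,-g},a_{-g,g};\hat{\theta})$ on $\mathbb{A}^{4}\times[0,1]\times\Theta_{g}$, which follows from Assumptions \ref{assu:finite_KL} and \ref{assu:continuous_KL}. The difference is packaging. The paper proves upper hemicontinuity by hand: it takes sequences $(a^{(n)},p_{g}^{(n)})\to(a^{*},p_{g}^{*})$ and minimizers $\hat{\theta}^{(n)}\to\theta^{*}$, passes to the limit in the inequality $W(a^{(n)},p_{g}^{(n)},\hat{\theta}^{(n)})\le W(a^{(n)},p_{g}^{(n)},\theta')$, and concludes $W(a^{*},p_{g}^{*},\theta^{*})\le W(a^{*},p_{g}^{*},\theta')$ --- i.e., it verifies the closed-graph property directly, which together with compactness of $\Theta_{g}$ (Assumption \ref{assu:compact}) gives upper hemicontinuity. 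You instead invoke Berge's Maximum Theorem with the constant constraint correspondence $(a,p_{g})\mapsto\Theta_{g}$, which is trivially continuous and compact-valued. Your route is shorter and buys slightly more: Berge delivers nonemptiness and compact-valuedness of $\Theta_{g}^{*}$ in the same stroke, facts the paper establishes separately inside the proof of Proposition \ref{prop:existence} via Weierstrass. The paper's route is self-contained and makes transparent exactly which limit interchanges are being used; your route outsources them to a standard theorem. Both are sound.

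Two minor inaccuracies in your closing paragraph, neither of which affects validity. First, boundedness of $K_{g,g'}$ plays no role in the continuity of $p_{g}\cdot K_{g,g'}$: a product of two continuous real-valued functions is continuous regardless of boundedness, so Assumptions \ref{assu:finite_KL} and \ref{assu:continuous_KL} alone suffice; boundedness would only matter if $K_{g,g'}$ were extended-real-valued, where the $0\cdot\infty$ convention of Definition \ref{def:EZ} at $p_{g}\in\{0,1\}$ would intrude --- precisely what the finite-KL assumption excludes. Second, in the strategic-certainty appendix the paper does not argue endpoint behavior ``by hand'': it imposes the analogous finiteness assumption on the consequence divergences (Assumption \ref{assu:finite_KL-sc}) and gives a verbatim sequential proof (Lemma \ref{lem:inference_uhc-sc}); the infinite monitoring divergences are handled earlier, by reformulating beliefs as elements of $\Delta(\mathcal{F}_{g})$ so that monitoring terms never enter the minimization objective.
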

This lemma says the set of KL-minimizing parameters is upper hemicontinuous
in strategy profile and population share. This leads to the existence
result. 
\begin{prop}
\label{prop:existence}Under Assumptions \ref{assu:compact}, \ref{assu:continuous_utility},
\ref{assu:finite_KL}, \ref{assu:continuous_KL}, and \ref{assu:quasiconcave},
an equilibrium zeitgeist exists. 
\end{prop}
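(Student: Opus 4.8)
The plan is to recast an equilibrium zeitgeist as a fixed point of a correspondence on the product of belief simplices and strategy profiles, and to invoke the Kakutani--Fan--Glicksberg fixed point theorem. Set $X := \Delta(\Theta_A) \times \Delta(\Theta_B) \times \mathbb{A}^4$, a typical point being $(\mu_A, \mu_B, a)$ with $a = (a_{AA}, a_{AB}, a_{BA}, a_{BB})$. By Assumption \ref{assu:compact} each $\Theta_g$ is compact metrizable, so $\Delta(\Theta_g)$ with the weak-$*$ topology is compact, convex, and metrizable, sitting inside the locally convex space of signed measures; by Assumption \ref{assu:quasiconcave} (together with Assumption \ref{assu:compact}) $\mathbb{A}$ is a convex compact subset of a locally convex space, hence so is $\mathbb{A}^4$. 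Thus $X$ is a nonempty, compact, convex subset of a locally convex Hausdorff topological vector space.

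Define $\Phi: X \rightrightarrows X$ by
\[
\Phi(\mu_A, \mu_B, a) = \Delta\big(\Theta_A^*(a, p_A)\big) \times \Delta\big(\Theta_B^*(a, p_B)\big) \times \prod_{g, g' \in \{A,B\}} \mathrm{BR}_{g,g'}(\mu_g),
\]
where $\Theta_g^*(a, p_g)$ is the argmin correspondence of Lemma \ref{lem:inference_uhc} and $\mathrm{BR}_{g,g'}(\mu_g) := \arg\max_{\hat{a} \in \mathbb{A}} V_{g,g'}(\hat{a}, \cdot\,; \mu_g)$ is the subjective best-reply set, which by the formula defining $V_{g,g'}$ depends only on the belief $\mu_g$ and not on the realized opponent play. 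A fixed point $(\mu_A^*, \mu_B^*, a^*) \in \Phi(\mu_A^*, \mu_B^*, a^*)$ has each $\mu_g^*$ supported on the weighted-KL minimizers given $a^*$ and each $a_{g,g'}^*$ a subjective best reply given $\mu_g^*$; comparing with Definition \ref{def:EZ} specialized to $|\mathcal{G}|=1$, this is exactly an equilibrium zeitgeist.

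It then remains to verify the hypotheses of Kakutani--Fan--Glicksberg. For nonemptiness, the weighted KL objective is finite (Assumption \ref{assu:finite_KL}) and continuous (Assumption \ref{assu:continuous_KL}) on the compact $\Theta_g$, so by Weierstrass $\Theta_g^*(a,p_g)$ is nonempty and closed (hence Borel, so $\Delta(\Theta_g^*(a,p_g))$ is well defined and nonempty), while $V_{g,g'}(\cdot\,;\mu_g)$ is continuous (Assumption \ref{assu:continuous_utility}, extended to $\Delta(\Theta_g)$ by linearity) on the compact $\mathbb{A}$, giving $\mathrm{BR}_{g,g'}(\mu_g) \neq \emptyset$. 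Convex-valuedness holds because $\Delta$ of a fixed measurable set is convex and, by quasiconcavity of $\hat{a}\mapsto V_{g,g'}(\hat{a},\cdot\,;\mu_g)$ on the convex $\mathbb{A}$ (Assumption \ref{assu:quasiconcave}), each $\mathrm{BR}_{g,g'}(\mu_g)$ is convex. For upper hemicontinuity with closed values I split the two blocks. The best-reply block $\mu_g \mapsto \mathrm{BR}_{g,g'}(\mu_g)$ is UHC with compact values by Berge's maximum theorem, using that $(\hat{a}, \mu_g) \mapsto \int V_{g,g'}(\hat{a}, \cdot\,;\theta)\, d\mu_g(\theta)$ is jointly continuous (continuity and boundedness of $V_{g,g'}$ on the compact product give weak-$*$ continuity in $\mu_g$). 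The belief block carries the real content: Lemma \ref{lem:inference_uhc} delivers UHC of $a \mapsto \Theta_g^*(a,p_g)$ at the fixed $p_g$, and I then invoke the standard fact that $K \mapsto \Delta(K)$ preserves upper hemicontinuity — concretely, if $a^n \to a$ and $\mu_g^n \in \Delta(\Theta_g^*(a^n,p_g))$ converge weak-$*$ to $\mu_g$, then since the supports eventually lie in any neighborhood of the closed limit set $\Theta_g^*(a,p_g)$, the limit $\mu_g$ is supported on $\Theta_g^*(a,p_g)$. As a finite product of UHC correspondences with nonempty compact convex values, $\Phi$ is UHC with nonempty compact convex values on the compact convex $X$, so Kakutani--Fan--Glicksberg produces a fixed point. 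The only non-routine step — and precisely the manifestation of belief endogeneity flagged before Lemma \ref{lem:inference_uhc} — is the upper hemicontinuity of the belief block; it is handled by that lemma together with the measure-valued continuity argument, and I would take care that the joint continuity needed for Berge genuinely uses weak-$*$ continuity of the linear extension of $V_{g,g'}$, not merely continuity in $\hat{a}$.
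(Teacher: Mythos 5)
Your proposal is correct and follows essentially the same route as the paper's proof: both recast an equilibrium zeitgeist as a fixed point of the product correspondence $(\mu_A,\mu_B,a)\rightrightarrows \Delta(\Theta_A^*(a))\times\Delta(\Theta_B^*(a))\times\prod_{g,g'}\mathrm{BR}_{g,g'}(\mu_g)$, use Lemma \ref{lem:inference_uhc} for upper hemicontinuity of the KL-minimizer block, and verify the Kakutani--Fan--Glicksberg hypotheses (nonemptiness via Weierstrass, convexity via quasiconcavity and convexity of $\Delta(\cdot)$, closed graph via continuity of $V_{g,g'}$). The only cosmetic differences are that you invoke Berge's theorem and prove by hand the fact that $K\mapsto\Delta(K)$ preserves upper hemicontinuity, where the paper cites Theorem 17.13 of \cite{aliprantis_infinite_2006}.
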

Next, upper hemicontinuity in $p_{g}$ in Lemma \ref{lem:inference_uhc}
allows us to deduce the upper hemicontinuity of the EZ correspondence
in population shares. 
\begin{prop}
\label{prop:uhc}Fix two models $\Theta_{A},\Theta_{B}$. The set
of equilibrium zeitgeists is an upper hemicontinuous correspondence
in $p_{B}$ under Assumptions \ref{assu:compact}, \ref{assu:continuous_utility},
\ref{assu:finite_KL}, and \ref{assu:continuous_KL}. 
\end{prop}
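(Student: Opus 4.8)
The plan is to exploit the compactness of the space of candidate zeitgeists to reduce upper hemicontinuity to a closed-graph property. Under Assumption \ref{assu:compact}, $\mathbb{A}$, $\Theta_A$, and $\Theta_B$ are compact metrizable, so $\Delta(\Theta_A)$ and $\Delta(\Theta_B)$ are compact metrizable in the weak-$*$ topology and the set of candidate zeitgeists $\Delta(\Theta_A)\times\Delta(\Theta_B)\times\mathbb{A}^4$ (with $p$ pinned down by $p_B$) is compact. For a correspondence into a compact metric space, upper hemicontinuity is equivalent to having a closed graph, so it suffices to fix a sequence $p_B^n\to p_B$, EZs $\mathfrak{Z}^n=(\mu_A^n,\mu_B^n,(1-p_B^n,p_B^n),a^n)$ converging to some $\mathfrak{Z}=(\mu_A,\mu_B,(1-p_B,p_B),a)$, and verify that $\mathfrak{Z}$ is itself an EZ at share $(1-p_B,p_B)$. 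There are two conditions to check in the limit: optimality of behavior and best-fit of beliefs.

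For optimality of behavior I would invoke Berge's maximum theorem. Extending $V_{g,g'}$ to beliefs by linearity, the subjective objective $(\hat a,\mu_g)\mapsto V_{g,g'}(\hat a,\cdot\,;\mu_g)$ (note the actual opponent strategy is irrelevant, as only the parameter's conjecture enters) is jointly continuous: continuity in $\mu_g$ under weak-$*$ convergence holds because $\theta\mapsto V_{g,g'}(\hat a,\cdot,\theta)$ is continuous and bounded on the compact domain by Assumption \ref{assu:continuous_utility}, and joint continuity then follows from uniform continuity of $V_{g,g'}$ on that compact domain. Hence $\mu_g\mapsto\arg\max_{\hat a}V_{g,g'}(\hat a,\cdot\,;\mu_g)$ has closed graph, and since $a_{g,g'}^n$ is a best response to $\mu_g^n$ for each $n$ and everything converges, $a_{g,g'}$ is a best response to $\mu_g$.

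The best-fit condition is where the real work lies, and it is exactly what Lemma \ref{lem:inference_uhc} is designed to handle. Write $C:=\Theta_g^*(a,p_g)$ for the limiting set of weighted-KL minimizers, which is compact. Each $\mu_g^n$ is supported on $\Theta_g^*(a^n,p_g^n)$, and by Lemma \ref{lem:inference_uhc} the map $\Theta_g^*$ is upper hemicontinuous jointly in the strategy profile and the population share; since $(a^n,p_g^n)\to(a,p_g)$, for every open $U\supseteq C$ we have $\Theta_g^*(a^n,p_g^n)\subseteq U$ for all large $n$. Taking $U_k=\{\theta:d(\theta,C)<1/k\}$, this gives $\mu_g^n(\overline{U_k})=1$ eventually, so the Portmanteau theorem (closed-set form) yields $\mu_g(\overline{U_k})\ge\limsup_n\mu_g^n(\overline{U_k})=1$. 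Letting $k\to\infty$ and using continuity of the probability measure $\mu_g$ along the decreasing closed sets $\overline{U_k}\downarrow C$, I conclude $\mu_g(C)=1$, i.e., $\mu_g$ is supported on $\Theta_g^*(a,p_g)$, as required.

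Combining the two pieces shows $\mathfrak{Z}$ satisfies both EZ conditions at share $(1-p_B,p_B)$, establishing the closed graph and hence upper hemicontinuity. I expect the measure-theoretic transfer of the last paragraph --- moving from support on the \emph{moving} minimizer sets $\Theta_g^*(a^n,p_g^n)$ to support on the limit set $C$ --- to be the delicate step, since it is precisely here that the finiteness and continuity of the KL objective (Assumptions \ref{assu:finite_KL} and \ref{assu:continuous_KL}), funneled through Lemma \ref{lem:inference_uhc}, are needed to prevent belief mass from escaping to parameters that fail to be best-fitting in the limit.
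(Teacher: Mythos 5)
Your proof is correct and follows essentially the same route as the paper's: reduce upper hemicontinuity to the closed-graph property via compactness of $\mathbb{A}^{4}\times\Delta(\Theta_{A})\times\Delta(\Theta_{B})$, pass best responses to the limit by continuity of $V_{g,g'}$, and handle the best-fit condition through Lemma \ref{lem:inference_uhc}. The only difference is cosmetic: where you transfer upper hemicontinuity of $\Theta_{g}^{*}$ to the belief level by hand (Portmanteau on the closed sets $\overline{U_{k}}$ and continuity of the measure along $\overline{U_{k}}\downarrow C$), the paper simply cites Theorem 17.13 of Aliprantis and Border for upper hemicontinuity of $(a,p_{g})\mapsto\Delta(\Theta_{g}^{*}(a,p_{g}))$ --- your paragraph is in effect a self-contained proof of that cited result.
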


\subsection{Proofs of Results in Appendix \ref{subsec:Environments-with-Full-Support}}

\subsubsection{Proof of Lemma \ref{lem:inference_uhc}}
\begin{proof}
Write the minimization objective as 
\[
W(a,p_{g},\hat{\theta}):=p_{g}\cdot K_{g,g}(a_{g,g},a_{g,g};\hat{\theta})+(1-p_{g})\cdot K_{g,-g}(a_{g,-g},a_{-g,g};\hat{\theta}),
\]
a continuous function of $(a,p_{g},\hat{\theta})$ by Assumption \ref{assu:continuous_KL}.
Suppose we have a sequence $(a^{(n)},p_{g}^{(n)})\to(a^{*},p_{g}^{*})\in\mathbb{A}^{4}\times[0,1]$
and let $\hat{\theta}^{(n)}\in\Theta_{g}^{*}(a^{(n)},p_{g}^{(n)})$
for each $n,$ with $\theta^{(n)}\to\theta^{*}\in\Theta_{g}.$ For
any other $\theta'\in\Theta_{g},$ note that $W(a^{*},p_{g}^{*},\theta')=\lim_{n\to\infty}W(a^{(n)},p_{g}^{(n)},\theta')$
by continuity. But also by continuity, $W(a^{*},p_{g}^{*},\theta^{*})=\lim_{n\to\infty}W(a^{(n)},p_{g}^{(n)},\hat{\theta}^{(n)})$
and $W(a^{(n)},p_{g}^{(n)},\hat{\theta}^{(n)})\le W(a^{(n)},p_{g}^{(n)},\theta')$
for every $n.$ It therefore follows $W(a^{*},p_{g}^{*},\theta^{*})\le W(a^{*},p_{g}^{*},\theta').$ 
\end{proof}

\subsubsection{Proof of Proposition \ref{prop:existence}}
\begin{proof}
Consider the correspondence $\Gamma:\mathbb{A}^{4}\times\Delta(\Theta_{A})\times\Delta(\Theta_{B})\rightrightarrows\mathbb{A}^{4}\times\Delta(\Theta_{A})\times\Delta(\Theta_{B}),$
\begin{align*}
 & \Gamma(a_{AA},a_{AB},a_{BA},a_{BB},\mu_{A},\mu_{B}):=\\
 & (\text{BR}_{AA}(\mu_{A}),\text{BR}_{AB}(\mu_{A}),\text{BR}_{BA}(\mu_{B}),\text{BR}_{BB}(\mu_{B}),\Delta(\Theta_{A}^{*}(a)),\Delta(\Theta_{B}^{*}(a))),
\end{align*}
where $\text{BR}_{gg'}(\mu_{g}):=\underset{a_{i}\in\mathbb{A}}{\arg\max}V_{g,g'}(a_{i},a_{-i};\mu_{g})$
(this is well-defined because $V_{g,g'}$ does not depend on its second
argument) and, for each $g\in\{A,B\},$ we have omitted the dependence
of the correspondence $\Theta_{g}^{*}$ on $p_{g}$. It is clear that
fixed points of $\Gamma$ are equilibrium zeitgeists.

We apply the Kakutani-Fan-Glicksberg theorem (see, e.g, Corollary
17.55 in \cite{aliprantis_infinite_2006}). By Assumptions \ref{assu:compact}
and \ref{assu:quasiconcave}, $\mathbb{A}$ is a compact and convex
metric space, and each $\Theta_{g}$ is a compact metric space, so
it follows the domain of $\Gamma$ is a nonempty, compact and convex
metric space. We need only verify that $\Gamma$ has closed graph,
non-empty values, and convex values.

To see that $\Gamma$ has closed graph, the previous lemma shows the
upper hemicontinuity of $\Theta_{A}^{*}(a)$ and $\Theta_{B}^{*}(a)$
in $a,$ and Theorem 17.13 of \cite{aliprantis_infinite_2006} then
implies $\Delta(\Theta_{A}^{*}(a))$ and $\Delta(\Theta_{B}^{*}(a))$
are also upper hemicontinuous in $a.$ It is a standard argument that
since Assumption \ref{assu:continuous_utility} supposes $V_{AA},V_{AB},V_{BA,}V_{BB}$
are continuous, it implies the best-response correspondences $\text{BR}_{AA}(\mu_{A}),$
$\text{BR}_{AB}(\mu_{A}),$ $\text{BR}_{BA}(\mu_{B}),$ $\text{BR}_{BB}(\mu_{B})$
have closed graphs.

To see that $\Gamma$ is non-empty, recall that each $a_{i}\mapsto V_{g,g'}(a_{i},a_{-i};\mu_{g})$
is a continuous function on a compact domain, so it must attain a
maximum on $\mathbb{A}.$ Similarly, the minimization problem that
defines each $\Theta_{g}^{*}(a)$ is a continuous function of the
parameter over a compact domain of possible parameters, so it attains
a minimum. Thus each $\Delta(\Theta_{g}^{*}(a))$ is the set of distributions
over a non-empty set.

To see that $\Gamma$ is convex valued, clearly $\Delta(\Theta_{A}^{*}(a))$
and $\Delta(\Theta_{B}^{*}(a))$ are convex valued by definition.
Also, $a_{i}\mapsto V_{AA}(a_{i},a_{-i};\mu_{A})$ is quasiconcave
by Assumption \ref{assu:quasiconcave}. That means if $a_{i}^{'},a_{i}^{''}\in\text{BR}_{AA}(\mu_{A}),$
then for any convex combination $\tilde{a}_{i}$ of $a_{i}^{'},a_{i}^{''},$
we have{\small{} $V_{AA}(\tilde{a}_{i},a_{-i};\mu_{A})\ge\min(V_{AA}(a_{i}^{'},a_{-i};\mu_{A}),$}
$V_{AA}(a_{i}^{''},a_{-i};\mu_{A}))=\max_{a_{i}\in\mathbb{A}}V_{AA}(a_{i},a_{-i};\mu_{A})$.
Therefore, $\text{BR}_{AA}(\mu_{A})$ is convex. For similar reasons,
$\text{BR}_{AB}(\mu_{A}),$ $\text{BR}_{BA}(\mu_{B}),$ $\text{BR}_{BB}(\mu_{B})$
are convex. 
\end{proof}

\subsubsection{Proof of Proposition \ref{prop:uhc}}
\begin{proof}
Since $\mathbb{A}^{4}\times\Delta(\Theta_{A})\times\Delta(\Theta_{B})$
is compact by Assumption \ref{assu:compact}, we need only show that
for every sequence $(p_{B}^{(k)})_{k\ge1}$ and $(a^{(k)},\mu^{(k)})_{k\ge1}=(a_{AA}^{(k)},a_{AB}^{(k)},a_{BA}^{(k)},a_{BB}^{(k)},\mu_{A}^{(k)},\mu_{B}^{(k)})_{k\ge1}$
such that for every $k$, $(a^{(k)},\mu^{(k)})$ is an EZ with $p=(1-p_{B}^{(k)},p_{B}^{(k)})$,
$p_{B}^{(k)}\to p_{B}^{*}$, and $(a^{(k)},\mu^{(k)})\to(a^{*},\mu^{*})$,
then $(a^{*},\mu^{*})$ is an EZ with $p=(1-p_{B}^{*},p_{B}^{*})$.

We first show for all $g,g^{'}\in\{A,B\},$ $a_{g,g^{'}}^{*}$ is
optimal under the belief $\mu_{g}^{*}.$ By Assumption \ref{assu:continuous_utility},
$V_{g,g'}(a_{i},a_{-i};\mu_{g})$ is continuous, so by the property of
convergence in distribution, $V_{g,g'}(a_{g,g^{'}}^{(k)},a_{g^{'},g}^{(k)};\mu_{g}^{(k)})\to V_{g,g'}(a_{g,g^{'}}^{*},a_{g^{'},g}^{*};\mu_{g}^{*})$.
For any other $a_{i}'\in\mathbb{A},$ $V_{g,g'}(a_{i}',a_{g^{'},g}^{(k)};\mu_{g}^{(k)})\to V_{g,g'}(a_{i}',a_{g^{'},g}^{*};\mu_{g}^{*})$
and for every $k,$ $V_{g,g'}(a_{g,g^{'}}^{(k)},a_{g^{'},g}^{(k)};\mu_{g}^{(k)})\ge V_{g,g'}(a_{i}',a_{g^{'},g}^{(k)};\mu_{g}^{(k)}).$
Therefore $a_{g,g^{'}}^{*}$ best responds to the belief $\mu_{g}^{*}.$

Next, we show parameters in the support of $\mu_{g}^{*}$ minimize
weighted KL divergence for group $g.$ Since $\Theta_{g}^{*}(a,p_{g})$
represents the minimizers of a continuous function on a compact domain
(by Lemma \ref{lem:inference_uhc}), it is non-empty and closed. By
Theorem 17.13 of \cite{aliprantis_infinite_2006}, the correspondence
$\tilde{H}:\mathbb{A}^{4}\times[0,1]\rightrightarrows\Delta(\Theta_{g})$
defined so that $\tilde{H}(a,p_{g}):=\Delta(\Theta_{g}^{*}(a,p_{g}))$
is also upper hemicontinuous. For every $k,$ $\mu_{g}^{(k)}\in\tilde{H}(a^{(k)},p_{g}^{(k)})$,
and $\mu_{g}^{(k)}\to\mu_{g}^{*}$, $a^{(k)}\to a^{*},p_{g}^{(k)}\to p_{g}^{*}.$
Therefore, $\mu_{g}^{*}\in\tilde{H}(a^{*},p_{g}^{*}),$ that is to
say $\mu_{g}^{*}$ is supported on the minimizers of weighted KL divergence. 
\end{proof}

\subsection{Environments with Strategic Certainty}\label{subsec:Environments-with-Strategic-Cer}

Let two models, $\Theta_{A},\Theta_{B}$ be fixed. Suppose we are
in an environment with strategic certainty, so each $\Theta_{g}$
has the form $\mathbb{A}^{2}\times\mathcal{F}_{g}$, $\mathbb{M}=\mathbb{A},$
and for every $a_{-i}\in\mathbb{A}$, $\varphi^{\bullet}(a_{-i})$
puts probability 1 on $a_{-i}$. Fix population shares $p$$.$ As
discussed in Section \ref{subsec:EZDef}, \foreignlanguage{american}{we
omit the part of the parameter that corresponds to conjectures about
others' strategies and simply view beliefs as elements in $\Delta(\mathcal{F}_{g}).$ }

For $g\in\{A,B\},$ let $U_{g}:\mathbb{A}^{2}\times\mathcal{F}_{g}\to\mathbb{R}$
be defined by $U_{g}(a_{i},a_{-i};F)=U_{i}(a_{i},a_{-i};\delta_{F})$
with $U_{i}:\mathbb{A}^{2}\times(\Delta(\mathcal{F}_{A})\cup\Delta(\mathcal{F}_{B}))$
as defined before. Extend the domain of the third argument of $U_{g}$
from $\mathcal{F}_{g}$ to $\Delta(\mathcal{F}_{g})$ by linearity.
Also,  the $K(a_{i},a_{-i};F)$ function specializes in the case of  $|\mathcal{G}|=1$ to be {$K(a_{i},a_{-i};F)=D_{KL}(F^{\bullet}(a_{i},a_{-i})\parallel F(a_{i},a_{-i})).$ }
\begin{assumption}
\label{assu:compact-sc}$\mathbb{A},\mathcal{F}_{A},\mathcal{F}_{B}$
are compact metrizable spaces.
\end{assumption}
\begin{assumption}
\label{assu:continuous_utility-sc} For each $g\in\{A,B\}$$,$ $U_{g}$
is continuous.
\end{assumption}
\begin{assumption}
\label{assu:finite_KL-sc} $K$ is well-defined and finite on its
domain $\mathbb{A}^{2}\times(\mathcal{F}_{A}\cup\mathcal{F}_{B})$. 
\end{assumption}
Under Assumption \ref{assu:finite_KL-sc}, we can define $K_{A}:\mathbb{A}^{2}\times\mathcal{F}_{A}\to\mathbb{R}$
and $K_{B}:\mathbb{A}^{2}\times\mathcal{F}_{B}\to\mathbb{R}$, with
$K_{g}(a_{i},a_{-i};F)=K(a_{i},a_{-i};F)$ for each $g\in\{A,B\},$
$a_{i},a_{-i}\in\mathbb{A},$ and $F\in\mathcal{F}_{g}.$ 

\selectlanguage{american}%
\selectlanguage{english}%
\begin{assumption}
\label{assu:continuous_KL-sc} For each $g\in\{A,B\},$ $K_{g}$ is
continuous.
\end{assumption}
\begin{assumption}
\label{assu:quasiconcave-sc}$\mathbb{A}$ is convex and, for $g\in\{A,B\}$,
all $a_{-i}\in\mathbb{A}$ and all $\mu_{g}\in\Delta(\mathcal{F}_{g}),$
$a_{i}\mapsto U_{g}(a_{i},a_{-i};\mu_{g})$ is quasiconcave. 
\end{assumption}
We show the existence of equilibrium zeitgeists using the Kakutani-Fan-Glicksberg
fixed point theorem, applied to the correspondence which maps strategy
profiles and beliefs over parameters into best replies and beliefs
over KL-divergence minimizing parameter. We start with a lemma.
\begin{lem}
\label{lem:inference_uhc-sc}For $g\in\{A,B\}$, $a=(a_{AA},a_{AB},a_{BA},a_{BB})\in\mathbb{A}^{4},$
and $0\le p_{g}\le1$, let 
\[
\Theta_{g}^{*}(a,p_{g}):=\underset{\hat{F}\in\mathcal{F}_{g}}{\arg\min}\left\{ \begin{array}{c}
p_{g}\cdot K_{g}(a_{g,g},a_{g,g};\hat{F})+(1-p_{g})\cdot K_{g,}(a_{g,-g},a_{-g,g};\hat{F})\end{array}\right\} .
\]
Then, $\Theta_{g}^{*}$ is upper hemicontinuous in its arguments. 
\end{lem}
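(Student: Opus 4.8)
The plan is to repeat verbatim the argument used to prove Lemma \ref{lem:inference_uhc}, now in the strategic-certainty formulation where beliefs range over $\mathcal{F}_g$ rather than $\Theta_g$ and the relevant divergence is $K_g$ in place of $K_{g,g'}$. First I would introduce the weighted-divergence objective
\[
W(a,p_g,\hat F):=p_g\cdot K_g(a_{g,g},a_{g,g};\hat F)+(1-p_g)\cdot K_g(a_{g,-g},a_{-g,g};\hat F),
\]
regarded as a function on $\mathbb{A}^4\times[0,1]\times\mathcal{F}_g$, and argue that it is jointly continuous. By Assumption \ref{assu:continuous_KL-sc}, each $K_g$ is continuous on $\mathbb{A}^2\times\mathcal{F}_g$; since $W$ is built from these two continuous maps by precomposing with the continuous coordinate projections $a\mapsto(a_{g,g},a_{g,g})$ and $a\mapsto(a_{g,-g},a_{-g,g})$ and then forming the convex combination with weight $p_g$, the map $W$ is continuous in $(a,p_g,\hat F)$.

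Second, because $\mathbb{A}$ and $\mathcal{F}_g$ are compact metrizable by Assumption \ref{assu:compact-sc}, I would verify upper hemicontinuity through its sequential characterization (closed graph into a compact codomain). Take any sequence $(a^{(n)},p_g^{(n)})\to(a^*,p_g^*)$ in $\mathbb{A}^4\times[0,1]$ together with selections $\hat F^{(n)}\in\Theta_g^*(a^{(n)},p_g^{(n)})$ converging to some $\hat F^*\in\mathcal{F}_g$ (compactness of $\mathcal{F}_g$ lets us always pass to such a convergent subsequence). For an arbitrary competitor $\hat F'\in\mathcal{F}_g$, continuity of $W$ gives $W(a^{(n)},p_g^{(n)},\hat F')\to W(a^*,p_g^*,\hat F')$ and $W(a^{(n)},p_g^{(n)},\hat F^{(n)})\to W(a^*,p_g^*,\hat F^*)$, while the optimality of $\hat F^{(n)}$ yields $W(a^{(n)},p_g^{(n)},\hat F^{(n)})\le W(a^{(n)},p_g^{(n)},\hat F')$ for every $n$. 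Passing to the limit gives $W(a^*,p_g^*,\hat F^*)\le W(a^*,p_g^*,\hat F')$, and since $\hat F'$ was arbitrary we conclude $\hat F^*\in\Theta_g^*(a^*,p_g^*)$, which is exactly the closed-graph property that defines upper hemicontinuity here.

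There is no genuine obstacle beyond this routine limiting argument; the substantive work has already been done by the hypotheses, namely that $K_g$ is finite (Assumption \ref{assu:finite_KL-sc}) and continuous (Assumption \ref{assu:continuous_KL-sc}), which is precisely what makes $W$ a well-defined continuous objective so that the argmin-form of Berge's theorem applies. The only point meriting any care is the \emph{joint} continuity of $W$ in the population share $p_g$ alongside the strategy profile $a$ and the conjecture $\hat F$ --- but this is immediate because the weights $p_g$ and $1-p_g$ enter linearly and multiply continuous functions, exactly as in the proof of Lemma \ref{lem:inference_uhc}.
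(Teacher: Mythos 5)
Your proposal is correct and follows essentially the same route as the paper's own proof: define the weighted objective $W(a,p_g,\hat F)$, note its joint continuity via Assumption \ref{assu:continuous_KL-sc}, and run the standard sequential closed-graph argument, passing limits through the optimality inequality $W(a^{(n)},p_g^{(n)},\hat F^{(n)})\le W(a^{(n)},p_g^{(n)},\hat F')$. Your added remarks (compactness of $\mathcal{F}_g$ justifying the sequential characterization of upper hemicontinuity, and the linearity of the weights ensuring joint continuity in $p_g$) are accurate elaborations of steps the paper leaves implicit.
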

This lemma says the set of KL-minimizing parameters is upper hemicontinuous
in strategy profile and population share. This leads to the existence
result. 
\begin{prop}
\label{prop:existence-sc}Under Assumptions \ref{assu:compact-sc},
\ref{assu:continuous_utility-sc}, \ref{assu:finite_KL-sc}, \ref{assu:continuous_KL-sc},
and \ref{assu:quasiconcave-sc}, an equilibrium zeitgeist exists. 
\end{prop}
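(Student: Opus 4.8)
The plan is to mirror the proof of Proposition \ref{prop:existence}, replacing the full-support objects $(V_{g,g'},K_{g,g'},\Theta_{g})$ by the strategic-certainty objects $(U_{g},K_{g},\mathcal{F}_{g})$, and applying the Kakutani-Fan-Glicksberg fixed point theorem (e.g., Corollary 17.55 in \cite{aliprantis_infinite_2006}). Throughout I would treat beliefs as elements of $\Delta(\mathcal{F}_{g})$, which is justified by the reduction from Section \ref{subsec:EZDef}: under strategic certainty perfect monitoring forces any EZ belief to place weight only on parameters whose conjecture about the opponent's strategy is correct (all other conjectures yield infinite weighted KL divergence), so the inference objective collapses to $K_{g}$ and beliefs live in $\Delta(\mathcal{F}_{g})$.

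First I would define the correspondence
\[
\Gamma:\mathbb{A}^{4}\times\Delta(\mathcal{F}_{A})\times\Delta(\mathcal{F}_{B})\rightrightarrows\mathbb{A}^{4}\times\Delta(\mathcal{F}_{A})\times\Delta(\mathcal{F}_{B})
\]
by
\[
\Gamma(a_{AA},a_{AB},a_{BA},a_{BB},\mu_{A},\mu_{B}):=\bigl(\text{BR}_{A}(a_{AA};\mu_{A}),\text{BR}_{A}(a_{BA};\mu_{A}),\text{BR}_{B}(a_{AB};\mu_{B}),\text{BR}_{B}(a_{BB};\mu_{B}),\Delta(\Theta_{A}^{*}(a)),\Delta(\Theta_{B}^{*}(a))\bigr),
\]
where $\text{BR}_{g}(a_{-i};\mu_{g}):=\underset{a_{i}\in\mathbb{A}}{\arg\max}\,U_{g}(a_{i},a_{-i};\mu_{g})$ and $\Theta_{g}^{*}(a)$ is the KL-minimizing set from Lemma \ref{lem:inference_uhc-sc} (suppressing its dependence on $p_{g}$). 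A fixed point of $\Gamma$ is exactly an equilibrium zeitgeist: the strategy coordinates encode subjective optimality against the correctly known opponent strategies, and the belief coordinates encode KL-minimization. One structural difference from the full-support proof is worth flagging: because agents under strategic certainty correctly know the opponent's strategy, $U_{g}$ genuinely depends on the realized opponent action, so $\text{BR}_{g}$ is parametrized by $a_{-i}$ as well as $\mu_{g}$ (unlike $\text{BR}_{gg'}$ in Proposition \ref{prop:existence}, which depended only on the belief).

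Then I would verify the Kakutani-Fan-Glicksberg hypotheses. The domain is nonempty, compact, and convex by Assumptions \ref{assu:compact-sc} and \ref{assu:quasiconcave-sc}. For closed graph, Lemma \ref{lem:inference_uhc-sc} gives upper hemicontinuity of $\Theta_{g}^{*}$ in $a$, and Theorem 17.13 of \cite{aliprantis_infinite_2006} lifts this to upper hemicontinuity of $a\mapsto\Delta(\Theta_{g}^{*}(a))$; continuity of $U_{g}$ (Assumption \ref{assu:continuous_utility-sc}) together with the maximum theorem gives that each $(a_{-i},\mu_{g})\mapsto\text{BR}_{g}(a_{-i};\mu_{g})$ is upper hemicontinuous, hence closed-graphed. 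Non-emptiness holds because each $U_{g}(\cdot,a_{-i};\mu_{g})$ attains a maximum on the compact $\mathbb{A}$ and each weighted-KL objective is continuous on a compact domain, so minimizers exist. Convexity of the belief coordinates is immediate, and convexity of the best-reply coordinates follows from the quasiconcavity of $a_{i}\mapsto U_{g}(a_{i},a_{-i};\mu_{g})$ in Assumption \ref{assu:quasiconcave-sc}.

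I do not expect a genuine obstacle here, since the argument is a faithful adaptation of Proposition \ref{prop:existence} with Lemma \ref{lem:inference_uhc-sc} supplying the key upper-hemicontinuity input. The only point requiring slightly more care than the full-support case is the dependence of best replies on the opponent's (correctly known) action: I would invoke the maximum theorem for joint upper hemicontinuity of $\text{BR}_{g}$ in $(a_{-i},\mu_{g})$, in place of the simpler observation in Proposition \ref{prop:existence} that best replies depend only on the belief. The remaining verifications are routine given that every relevant assumption is stated directly in terms of $U_{g}$, $K_{g}$, and $\mathcal{F}_{g}$.
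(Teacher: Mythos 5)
Your proposal is correct and takes essentially the same route as the paper's own proof: the paper defines the identical correspondence $\Gamma$ on $\mathbb{A}^{4}\times\Delta(\mathcal{F}_{A})\times\Delta(\mathcal{F}_{B})$ with best replies $\text{BR}(a_{-i},\mu_{g})$ parametrized by the opponent's action, uses Lemma \ref{lem:inference_uhc-sc} together with Theorem 17.13 of \cite{aliprantis_infinite_2006} for the belief coordinates, and verifies the Kakutani--Fan--Glicksberg hypotheses (compact convex domain, closed graph, non-empty and convex values) exactly as you do. The structural point you flag---that under strategic certainty best replies genuinely depend on the correctly known opponent action, unlike the belief-only $\text{BR}_{gg'}$ in Proposition \ref{prop:existence}---is handled by the paper in the same way, via the standard closed-graph argument for $\text{BR}$ from the continuity of $U_{g}$ in Assumption \ref{assu:continuous_utility-sc}.
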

Next, upper hemicontinuity in $p_{g}$ in Lemma \ref{lem:inference_uhc-sc}
allows us to deduce the upper hemicontinuity of the EZ correspondence
in population shares. 
\begin{prop}
\label{prop:uhc-sc}Fix two models $\Theta_{A},\Theta_{B}$. The set
of equilibrium zeitgeists is an upper hemicontinuous correspondence
in $p_{B}$ under Assumptions \ref{assu:compact-sc}, \ref{assu:continuous_utility-sc},
\ref{assu:finite_KL-sc}, and \ref{assu:continuous_KL-sc}. 
\end{prop}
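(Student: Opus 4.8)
The plan is to follow the proof of Proposition \ref{prop:uhc} essentially verbatim, substituting the strategic-certainty objects: beliefs now live in $\Delta(\mathcal{F}_g)$ rather than $\Delta(\Theta_g)$, and the relevant payoff object is the single function $U_g$ in place of the four functions $V_{g,g'}$. Because $\mathbb{A}^{4}\times\Delta(\mathcal{F}_{A})\times\Delta(\mathcal{F}_{B})$ is compact metrizable by Assumption \ref{assu:compact-sc}, upper hemicontinuity in $p_B$ reduces to the following sequential closure claim: whenever $p_B^{(k)}\to p_B^{*}$ and, for each $k$, $(a^{(k)},\mu^{(k)})$ is an EZ with population share $(1-p_B^{(k)},p_B^{(k)})$ and $(a^{(k)},\mu^{(k)})\to(a^{*},\mu^{*})$, then $(a^{*},\mu^{*})$ is an EZ with share $(1-p_B^{*},p_B^{*})$.

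First I would verify optimality of behavior in the limit. For each $g,g'\in\{A,B\}$, continuity of $U_g$ (Assumption \ref{assu:continuous_utility-sc}) together with convergence in distribution of $\mu_g^{(k)}\to\mu_g^{*}$ gives $U_g(a_{g,g'}^{(k)},a_{g',g}^{(k)};\mu_g^{(k)})\to U_g(a_{g,g'}^{*},a_{g',g}^{*};\mu_g^{*})$ and, for any fixed deviation $a_i'\in\mathbb{A}$, $U_g(a_i',a_{g',g}^{(k)};\mu_g^{(k)})\to U_g(a_i',a_{g',g}^{*};\mu_g^{*})$. Passing the best-response inequality $U_g(a_{g,g'}^{(k)},a_{g',g}^{(k)};\mu_g^{(k)})\ge U_g(a_i',a_{g',g}^{(k)};\mu_g^{(k)})$, valid for every $k$, to the limit shows that $a_{g,g'}^{*}$ best responds to $\mu_g^{*}$.

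Next I would verify the belief condition. By Lemma \ref{lem:inference_uhc-sc}, the KL-minimizer correspondence $\Theta_g^{*}(a,p_g)$ is upper hemicontinuous, and it is nonempty and closed, being the argmin of a continuous function over the compact set $\mathcal{F}_g$ (using Assumptions \ref{assu:finite_KL-sc} and \ref{assu:continuous_KL-sc}). Theorem 17.13 of \cite{aliprantis_infinite_2006} then upgrades this to upper hemicontinuity of $\tilde{H}(a,p_g):=\Delta(\Theta_g^{*}(a,p_g))$. Since $\mu_g^{(k)}\in\tilde{H}(a^{(k)},p_g^{(k)})$ for each $k$ with $(a^{(k)},p_g^{(k)})\to(a^{*},p_g^{*})$ and $\mu_g^{(k)}\to\mu_g^{*}$, upper hemicontinuity yields $\mu_g^{*}\in\tilde{H}(a^{*},p_g^{*})$; that is, $\mu_g^{*}$ is supported on the weighted-KL minimizers for share $p_g^{*}$. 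Combined with the previous paragraph, this certifies that $(a^{*},\mu^{*})$ is an EZ.

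The argument is parallel to the full-support case, so the main obstacle is conceptual bookkeeping rather than a new difficulty. The chief point requiring care is the justification for viewing beliefs as elements of $\Delta(\mathcal{F}_g)$ rather than of $\Delta(\Theta_g)$: this is licensed by the discussion in Section \ref{subsec:EZDef}, since under strategic certainty the perfectly informative monitoring signal forces any EZ belief to place correct conjectures on opponents' strategies, so dropping the strategy-conjecture coordinate loses no information and the reduced objective $K_g$ faithfully captures the inference problem. The only other subtlety is that the whole argument leans, via Lemma \ref{lem:inference_uhc-sc}, on the finiteness of $K_g$ from Assumption \ref{assu:finite_KL-sc}; note in contrast that Assumption \ref{assu:quasiconcave-sc} is invoked only for existence in Proposition \ref{prop:existence-sc} and plays no role here.
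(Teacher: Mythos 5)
Your proof is correct and follows the paper's own argument essentially verbatim: the same reduction to sequential closure via compactness of $\mathbb{A}^{4}\times\Delta(\mathcal{F}_{A})\times\Delta(\mathcal{F}_{B})$, the same limit-passing of the best-response inequality using continuity of $U_{g}$, and the same use of Lemma \ref{lem:inference_uhc-sc} with Theorem 17.13 of \cite{aliprantis_infinite_2006} to conclude $\mu_{g}^{*}\in\Delta(\Theta_{g}^{*}(a^{*},p_{g}^{*}))$. Your closing remarks are also accurate: the reduction of beliefs to $\Delta(\mathcal{F}_{g})$ is exactly what Section \ref{subsec:EZDef} licenses under strategic certainty, and Assumption \ref{assu:quasiconcave-sc} is indeed invoked only for existence, not here.
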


\subsection{Proofs of Results in Appendix \ref{subsec:Environments-with-Strategic-Cer}}

\subsubsection{Proof of Lemma \ref{lem:inference_uhc-sc}}
\begin{proof}
Write the minimization objective as 
\[
W(a,p_{g},\hat{F}):=p_{g}\cdot K_{g}(a_{g,g},a_{g,g};\hat{F})+(1-p_{g})\cdot K_{g,}(a_{g,-g},a_{-g,g};\hat{F}),
\]
a continuous function of $(a,p_{g},\hat{F})$ by Assumption \ref{assu:continuous_KL-sc}.
Suppose we have a sequence $(a^{(n)},p_{g}^{(n)})\to(a^{*},p_{g}^{*})\in\mathbb{A}^{4}\times[0,1]$
and let $F^{(n)}\in\Theta_{g}^{*}(a^{(n)},p_{g}^{(n)})$ for each
$n,$ with $F^{(n)}\to F^{*}\in\mathcal{F}_{g}.$ For any other $F'\in\mathcal{F}_{g},$
note that $W(a^{*},p_{g}^{*},F')=\lim_{n\to\infty}W(a^{(n)},p_{g}^{(n)},F')$
by continuity. But also by continuity, $W(a^{*},p_{g}^{*},F^{*})=\lim_{n\to\infty}W(a^{(n)},p_{g}^{(n)},F^{(n)})$
and $W(a^{(n)},p_{g}^{(n)},F^{(n)})\le W(a^{(n)},p_{g}^{(n)},F')$
for every $n.$ It therefore follows $W(a^{*},p_{g}^{*},F^{*})\le W(a^{*},p_{g}^{*},F').$ 
\end{proof}

\subsubsection{Proof of Proposition \ref{prop:existence-sc}}
\begin{proof}
Consider the correspondence $\Gamma:\mathbb{A}^{4}\times\Delta(\mathcal{F}_{A})\times\Delta(\mathcal{F}_{B})\rightrightarrows\mathbb{A}^{4}\times\Delta(\mathcal{F}_{A})\times\Delta(\mathcal{F}_{B}),$
\begin{align*}
 & \Gamma(a_{AA},a_{AB},a_{BA},a_{BB},\mu_{A},\mu_{B}):=\\
 & (\text{BR}(a_{AA},\mu_{A}),\text{BR}(a_{BA},\mu_{A}),\text{BR}(a_{AB},\mu_{B}),\text{BR}(a_{BB},\mu_{B}),\Delta(\Theta_{A}^{*}(a)),\Delta(\Theta_{B}^{*}(a))),
\end{align*}
where $\text{BR}(a_{-i},\mu_{g}):=\underset{\hat{a}_{i}\in\mathbb{A}}{\arg\max}U_{g}(\hat{a}_{i},a_{-i};\mu_{g})$
and, for each $g\in\{A,B\},$ we have omitted the dependence of the
correspondence $\Theta_{g}^{*}$ on $p_{g}$. It is clear that fixed
points of $\Gamma$ are equilibrium zeitgeists.

We apply the Kakutani-Fan-Glicksberg theorem (see, e.g., Corollary
17.55 in \cite{aliprantis_infinite_2006}). By Assumptions \ref{assu:compact-sc}
and \ref{assu:quasiconcave-sc}, $\mathbb{A}$ is a compact and convex
metric space, and each $\mathcal{F}_{g}$ is a compact metric space,
so it follows the domain of $\Gamma$ is a nonempty, compact and convex
metric space. We need only verify that $\Gamma$ has closed graph,
non-empty values, and convex values.

To see that $\Gamma$ has closed graph, the previous lemma shows the
upper hemicontinuity of $\Theta_{A}^{*}(a)$ and $\Theta_{B}^{*}(a)$
in $a,$ and Theorem 17.13 of \cite{aliprantis_infinite_2006} then
implies $\Delta(\Theta_{A}^{*}(a))$ and $\Delta(\Theta_{B}^{*}(a))$
are also upper hemicontinuous in $a.$ It is a standard argument that
since Assumption \ref{assu:continuous_utility-sc} supposes $U_{A},U_{B}$
are continuous, it implies the best-response correspondence $\text{BR}$
has closed graph.

To see that $\Gamma$ is non-empty, recall that each $a_{i}\mapsto U_{g}(a_{i},a_{-i};\mu_{g})$
is a continuous function on a compact domain, so it must attain a
maximum on $\mathbb{A}.$ Similarly, the minimization problem that
defines each $\Theta_{g}^{*}(a)$ is a continuous function of the
parameter over a compact domain of possible parameters, so it attains
a minimum. Thus each $\Delta(\Theta_{g}^{*}(a))$ is the set of distributions
over a non-empty set.

To see that $\Gamma$ is convex valued, clearly $\Delta(\Theta_{A}^{*}(a))$
and $\Delta(\Theta_{B}^{*}(a))$ are convex valued by definition.
Also, $a_{i}\mapsto U_{g}(a_{i},a_{-i};\mu_{g})$ is quasiconcave
by Assumption \ref{assu:quasiconcave-sc}. That means if $a_{i}^{'},a_{i}^{''}\in\text{BR}(a_{-i},\mu_{g}),$
then for any convex combination $\tilde{a}_{i}$ of $a_{i}^{'},a_{i}^{''},$
we have{\small{} $U_{g}(\tilde{a}_{i},a_{-i};\mu_{g})\ge\min(U_{g}(a_{i}^{'},a_{-i};\mu_{g}),$}
$U_{g}(a_{i}^{''},a_{-i};\mu_{g}))=\max_{a_{i}\in\mathbb{A}}U_{g}(a_{i},a_{-i};\mu_{g})$.
Therefore, $\text{BR}(a_{-i},\mu_{g})$ is convex. 
\end{proof}

\subsubsection{Proof of Proposition \ref{prop:uhc-sc}}
\begin{proof}
Since $\mathbb{A}^{4}\times\Delta(\mathcal{F}_{A})\times\Delta(\mathcal{F}_{B})$
is compact by Assumption \ref{assu:compact-sc}, we need only show
that for every sequence $(p_{B}^{(k)})_{k\ge1}$ and $(a^{(k)},\mu^{(k)})_{k\ge1}=(a_{AA}^{(k)},a_{AB}^{(k)},a_{BA}^{(k)},a_{BB}^{(k)},\mu_{A}^{(k)},\mu_{B}^{(k)})_{k\ge1}$
such that for every $k$, $(a^{(k)},\mu^{(k)})$ is an EZ with $p=(1-p_{B}^{(k)},p_{B}^{(k)})$,
$p_{B}^{(k)}\to p_{B}^{*}$, and $(a^{(k)},\mu^{(k)})\to(a^{*},\mu^{*})$,
then $(a^{*},\mu^{*})$ is an EZ with $p=(1-p_{B}^{*},p_{B}^{*})$.

We first show for all $g,g^{'}\in\{A,B\},$ $a_{g,g^{'}}^{*}$ is
optimal under the belief $\mu_{g}^{*}.$ By Assumption \ref{assu:continuous_utility-sc},
$U_{g}(a_{i},a_{-i};\mu_{g})$ is continuous, so by property of convergence
in distribution, $U_{g}(a_{g,g^{'}}^{(k)},a_{g^{'},g}^{(k)};\mu_{g}^{(k)})\to U_{g}(a_{g,g^{'}}^{*},a_{g^{'},g}^{*};\mu_{g}^{*})$.
For any other $a_{i}'\in\mathbb{A},$ $U_{g}(a_{i}',a_{g^{'},g}^{(k)};\mu_{g}^{(k)})\to U_{g}(a_{i}',a_{g^{'},g}^{*};\mu_{g}^{*})$
and for every $k,$ $U_{g}(a_{g,g^{'}}^{(k)},a_{g^{'},g}^{(k)};\mu_{g}^{(k)})\ge U_{g}(a_{i}',a_{g^{'},g}^{(k)};\mu_{g}^{(k)}).$
Therefore $a_{g,g^{'}}^{*}$ best responds to the belief $\mu_{g}^{*}.$

Next, we show parameters in the support of $\mu_{g}^{*}$ minimize
weighted KL divergence for group $g.$ Since $\Theta_{g}^{*}(a,p_{g})$
represents the minimizers of a continuous function on a compact domain
(by Lemma \ref{lem:inference_uhc-sc}), it is non-empty and closed.
By Theorem 17.13 of \cite{aliprantis_infinite_2006}, the correspondence
$\tilde{H}:\mathbb{A}^{4}\times[0,1]\rightrightarrows\Delta(\mathcal{F}_{g})$
defined so that $\tilde{H}(a,p_{g}):=\Delta(\Theta_{g}^{*}(a,p_{g}))$
is also upper hemicontinuous. For every $k,$ $\mu_{g}^{(k)}\in\tilde{H}(a^{(k)},p_{g}^{(k)})$,
and $\mu_{g}^{(k)}\to\mu_{g}^{*}$, $a^{(k)}\to a^{*},p_{g}^{(k)}\to p_{g}^{*}.$
Therefore, $\mu_{g}^{*}\in\tilde{H}(a^{*},p_{g}^{*}),$ that is to
say $\mu_{g}^{*}$ is supported on the minimizers of weighted KL divergence. \selectlanguage{american}%
\end{proof}

\section{Learning Foundation of Equilibrium Zeitgeists}\label{sec:Learning-Foundation}

We provide a foundation for equilibrium zeitgeists as the steady state
of a learning system. This foundation considers a world where agents
start with prior beliefs over parameters in a model. As in our framework
from Section \ref{sec:Environment-and-Stability}, these parameters
correspond to conjectures about the stage game and about others' strategies. 

At the end of every match, each agent observes their consequence and
a monitoring signal. We show that under any asymptotically myopic
policy, if behavior and beliefs converge, then the limit steady state
must be an EZ. If the models allow agents to make rich enough inferences
about opponents' strategies, then sufficiently accurate monitoring
signals about opponent's play imply that agents must hold correct
beliefs about others' strategies in the limit steady state. In particular,
in environments that approach strategic certainty (that is, the monitoring
signals are full support but they almost perfectly reveal opponent's
strategy), limit steady state beliefs about others' strategies must
be correct. Finally, if the true situation is redrawn every $T$ periods
and the agents reset their beliefs over parameters to their prior
belief when the situation is redrawn, then their average payoffs approach
their fitness in the EZ when $T$ is large.

\subsection{Regularity Assumptions}\label{subsec:Regularity-Assumptions-learning}

We make some regularity assumptions on the objective environments
and on the two models $\Theta_{A},\Theta_{B}$. These are similar
to the regularity assumptions from Appendix \ref{sec:Existence-and-Continuity}.

Suppose the strategy set $\mathbb{A}$ and the space of monitoring
signals $\mathbb{M}$ are finite. Suppose the marginals of the 
models $\Theta_{A},\Theta_{B}$ on the dimension of fundamental uncertainty,
denoted as $\mathcal{F}_{A},\mathcal{F}_{B}$, are compact and metrizable
spaces. Endow $\Theta_{A}$ and $\Theta_{B}$ with the product metric.
Suppose that every $(a_{A},a_{B},F)\in\Theta_{A}\cup\Theta_{B}$ is such that for every $(a_{i},a_{-i})\in\mathbb{A}^{2}$ and every situation
$G,$ whenever $f^{\bullet}(a_{i},a_{-i},G)(y)>0$, we also get $f(a_{i},a_{A})(y)>0$
and $f(a_{i},a_{B})(y)>0$, where $f$ is the density or probability
mass function for $F$. Suppose the monitoring signal has full support
on $\mathbb{M}$ for every $a_{-i}\in\mathbb{A}.$ 

For each $g,g^{'}\in\{A,B\},$ recall that we defined $K_{g,g^{'}}:\mathbb{A}^{2}\times\mathcal{G}\times\Theta_{g}\to\mathbb{R}$
in Appendix \ref{sec:Existence-and-Continuity}  by $K_{g,g^{'}}(a_{i},a_{-i},G;(a_{A},a_{B},F))=D_{KL}(F^{\bullet}(a_{i},a_{-i},G)\times\varphi^{\bullet}(a_{-i})\parallel F(a_{i},a_{g^{'}})\times\varphi^{\bullet}(a_{g'})).$
Suppose each $K_{g,g^{'}}$ is well defined and
a continuous function of the parameter $(a_{A},a_{B},F)$. 

For $g\in\{A,B\}$, $F\in\mathcal{F}_{g}$, let $U_{g}(a_{i},a_{-i};F)$
be the expected payoffs of the strategy profile $(a_{i},a_{-i})$
for $i$ when consequences are drawn according to $F.$ Assume $U_{A},U_{B}$
are continuous.

Suppose for every model $\Theta_{g}$ and every $(a_{A},a_{B},F)\in\Theta_{g}$
and $\epsilon>0,$ there exists an open neighborhood $V\subseteq\Theta_{g}$
of $(a_{A},a_{B},F)$, so that for every $(\hat{a}_{A},\hat{a}_{B},\hat{F})\in V$,
$1-\epsilon\le[f(a_{i},a_{A})(y)\cdot\varphi^{\bullet}(a_{A})(m)]/[\hat{f}(a_{i},\hat{a}_{A})(y)\cdot\varphi^{\bullet}(\hat{a}_{A})(m)]\le1+\epsilon$
and $1-\epsilon\le[f(a_{i},a_{B})(y)\cdot\varphi^{\bullet}(a_{B})(m)]/[\hat{f}(a_{i},\hat{a}_{B})(y)\cdot\varphi^{\bullet}(\hat{a}_{B})(m)]\le1+\epsilon$
for all $a_{i}\in\mathbb{A},y\in\mathbb{Y},m\in\mathbb{M}$. Also
suppose there is some $C>0$ so that $\ln(f(a_{i},a_{A})(y)\cdot\varphi^{\bullet}(a_{A})(m))$
and $\ln(f(a_{i},a_{B})(y)\cdot\varphi^{\bullet}(a_{B})(m))$ are
bounded in $[-C,C]$ for all $(a_{A},a_{B},F)\in\Theta_{g}$, $a_{i},a_{-i}\in\mathbb{A},y\in\mathbb{Y},m\in\mathbb{M}$.

\subsection{Learning Environment}

We first consider an environment with only one true situation, $|\mathcal{G}|=1.$
Time is discrete and infinite, $t=0,1,2,...$ A unit mass of agents,
$i\in[0,1]$, enter the society at time 0. A $p_{A}\in(0,1)$ measure
of them are assigned to model $A$ and the rest are assigned to model
$B$. Each agent born into model $g$ starts with the same full support
prior over the model, $\mu_{g}^{(0)}\in\Delta(\Theta_{g})$, and believes
there is some $(a_{A},a_{B},F)\in\Theta_{g}$ so that every group
$g$ opponent always plays $a_{g}$ and the consequences are always
generated by $F$.

In each period $t$, agents are matched up uniformly at random to
play the stage game. Each person in group $g$ has $p_{g}$ chance
of matching with someone from group $g,$ and matches with someone
from group $-g$ with the complementary chance. Each agent $i$ observes
their opponent's group membership and chooses a strategy $a_{i}^{(t)}\in\mathbb{A}$.
At the end of the match, the agent observes own consequence $y_{i}^{(t)}$
and a monitoring signal $m_{i}^{(t)}\in\mathbb{M}$ about the opponent's
play, where $m_{i}^{(t)}$ is drawn from the distribution $\varphi^{\bullet}(a_{-i})$
if their opponent uses strategy $a_{-i}$. One example of this would
be $\mathbb{M}=\mathbb{A}$ and $m_{i}^{(t)}$ is equal to the opponent's
strategy with probability $\tau\in[0,1)$ and is uniformly random
on $\mathbb{M}$ with the complementary probability. Our results for
the case when $\tau$ is close enough to 1 and each model has the
form $\Theta_{g}=\mathbb{A}^{2}\times\mathcal{F}_{g}$ provide a foundation
for EZs in environments with strategic certainty. 

The space of histories from one period is $\{A,B\}\times\mathbb{A}\times\mathbb{Y}\times\mathbb{M}$,
with typical element $(g_{i}^{(t)},a_{i}^{(t)},y_{i}^{(t)},m_{i}^{(t)})$.
It records the group membership of $i$'s opponent $g_{i}^{(t)}$,
$i$'s strategy $a_{i}^{(t)},$  $i$'s consequence $y_{i}^{(t)}$,
and $i$'s monitoring signal about the matched opponent's strategy,
$m_{i}^{(t)}$. Let $\mathbb{H}$ denote the space of all finite-length
histories.

Given the assumption on the two models, there is a well-defined Bayesian
belief operator for each model $g,$ $\mu_{g}:\mathbb{H}\to\Delta(\Theta_{g}),$
mapping every finite-length history into a belief over parameters
in $\Theta_{g}$, starting with the prior $\mu_{g}^{(0)}.$

We also take as exogenously given policy functions for choosing strategies
after each history. That is, $\mathfrak{a}_{g,g^{'}}:\mathbb{H}\to\mathbb{A}$
for every $g,g^{'}\in\{A,B\}$ gives the strategy that a group $g$
agent uses against a group $g^{'}$ opponent after every history.
Assume these policy functions are asymptotically myopic. \begin{assumption}
\label{assu:asymptotic_myopia}For every $\epsilon>0,$ there exists
$N$ so that for any history $h$ containing at least $N$ matches
against opponents of each group, $\mathfrak{a}_{g,g^{'}}(h)$ is an
$\epsilon$-best response to the Bayesian belief $\mu_{g}(h)$. \end{assumption}
From the perspective of each agent $i$ in group $g,$ $i$'s play
against groups A and B, as well as $i$'s belief over $\Theta_{g},$
is a stochastic process $(\tilde{a}_{iA}^{(t)},\tilde{a}_{iB}^{(t)},\tilde{\mu}_{i}^{(t)})_{t\ge0}$
valued in $\mathbb{A}\times\mathbb{A}\times\Delta(\Theta_{g}).$ The
randomness is over the groups of opponents matched with in different
periods, the strategies they play, and the random consequences and
monitoring signals drawn at the end of the matches. Since there is
a continuum of agents, the distribution over histories within each
population in each period is deterministic. As such, there is a deterministic
sequence $(\alpha_{AA}^{(t)},\alpha_{AB}^{(t)},\alpha_{BA}^{(t)},\alpha_{BA}^{(t)},\nu_{A}^{(t)},\nu_{B}^{(t)})\in\Delta(\mathbb{A})^{4}\times\Delta(\Delta(\Theta_{A}))\times\Delta(\Delta(\Theta_{B}))$
that describes the distributions of play and beliefs that prevail
in the two sub-populations in every period $t.$

\subsection{Steady State Limits are Equilibrium Zeitgeists}

We state and prove the learning foundation of EZs. For $(\alpha^{(t)})_{t}$
a sequence valued in $\Delta(\mathbb{A})$ and $a^{*}\in\mathbb{A},$
$\alpha^{(t)}\to a^{*}$ means $\mathbb{E}_{\hat{a}\sim\alpha^{(t)}}\parallel\hat{a}-a^{*}\parallel\to0$
as $t\to\infty$. For $(\nu^{(t)})_{t}$ a sequence valued in $\Delta(\Delta(\Theta_{g}))$
and $\mu^{*}\in\Delta(\Theta_{g}),$ $\nu^{(t)}\to\mu^{*}$ means
$\mathbb{E}_{\hat{\mu}\sim\nu^{(t)}}\parallel\hat{\mu}-\mu^{*}\parallel\to0$
as $t\to\infty.$ 
\begin{prop}
\label{prop:learning}Suppose the regularity assumptions in Appendix
\ref{subsec:Regularity-Assumptions-learning} hold, and suppose Assumption
\ref{assu:asymptotic_myopia} holds. Suppose there exists $(a_{AA}^{*},a_{AB}^{*},a_{BA}^{*},a_{BB}^{*},\mu_{A}^{*},\mu_{B}^{*})\in\mathbb{A}^{4}\times\Delta(\Theta_{A})\times\Delta(\Theta_{B})$
so that $(\alpha_{AA}^{(t)},\alpha_{AB}^{(t)},\alpha_{BA}^{(t)},\alpha_{BA}^{(t)},\nu_{A}^{(t)},\nu_{B}^{(t)})\to(a_{AA}^{*},a_{AB}^{*},a_{BA}^{*},a_{BB}^{*},\mu_{A}^{*},\mu_{B}^{*})$
and for each agent $i$ in group $g,$ almost surely $(\tilde{a}_{iA}^{(t)},\tilde{a}_{iB}^{(t)},\tilde{\mu}_{i}^{(t)})\to(a_{gA}^{*},a_{gB}^{*},\mu_{g}^{*})$.
Then, $(a_{AA}^{*},a_{AB}^{*},a_{BA}^{*},a_{BB}^{*},\mu_{A}^{*},\mu_{B}^{*})$
is an equilibrium zeitgeist.

Suppose further that for each $g,$ the model $\Theta_{g}$ has the
form $\mathbb{A}^{2}\times\mathcal{F}_{g}$. There exists some $\underline{\tau}<1$
so that for every $\tau\in(\underline{\tau},1)$ and $(a_{AA}^{*},a_{AB}^{*},a_{BA}^{*},a_{BB}^{*},\mu_{A}^{*},\mu_{B}^{*})$
satisfying the above conditions, we have that $\mu_{A}^{*}$ puts
probability 1 on $(a_{AA}^{*},a_{AB}^{*})$, $\mu_{B}^{*}$ puts probability
1 on $(a_{BA}^{*},a_{BB}^{*})$. 
\end{prop}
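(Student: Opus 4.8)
The plan is to verify the two requirements of Definition \ref{def:EZ} at the limit point $(a_{AA}^*,a_{AB}^*,a_{BA}^*,a_{BB}^*,\mu_A^*,\mu_B^*)$ --- optimality of behavior and support on the weighted KL minimizers --- and then append a monitoring argument for the second statement. \emph{Optimality of behavior} is the easy half. Fix an agent $i$ in group $g$ and a rival group $g'$. Since $p_A\in(0,1)$, matches against each group occur infinitely often almost surely, so for any $\epsilon>0$ the history eventually contains at least the $N=N(\epsilon)$ matches against each group required by Assumption \ref{assu:asymptotic_myopia}; hence for all large $t$ the play $\tilde a_{ig'}^{(t)}$ is an $\epsilon$-best response to $\tilde\mu_i^{(t)}$. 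Using the assumed almost-sure convergence $(\tilde a_{ig'}^{(t)},\tilde\mu_i^{(t)})\to(a^*_{gg'},\mu_g^*)$ together with continuity of expected payoffs in strategy and belief (regularity assumptions of Appendix \ref{subsec:Regularity-Assumptions-learning}), I would pass to the limit to conclude that $a^*_{gg'}$ is an $\epsilon$-best response to $\mu_g^*$; letting $\epsilon\downarrow0$ makes it an exact best response.

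\emph{Belief concentration on KL minimizers} is the substantive step and follows the logic of \citet{berk1966limiting}. For $\theta=(a_A,a_B,F)\in\Theta_g$, decompose the accumulated log-likelihood $\ln L_t(\theta)$ of agent $i$'s data into its predictable part $\sum_{s\le t}\mathbb{E}[\,\cdot\mid\mathcal F_{s-1}]$ and a martingale remainder. The per-period increments are uniformly bounded by the constant $C$ from Appendix \ref{subsec:Regularity-Assumptions-learning}, so the remainder over $t$ vanishes almost surely by a martingale strong law. The predictable increment in period $s$ is a $\theta$-independent entropy term minus a convex combination, with weights given by the match probabilities, of $K_{g,g}$ and $K_{g,-g}$ evaluated at the period-$s$ strategies. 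Because the aggregate play concentrates on the limiting profile and each $K_{g,g'}$ is continuous, these increments converge, so by Cesàro averaging $\tfrac1t\ln L_t(\theta)$ converges almost surely to a $\theta$-independent constant minus the weighted objective $p_g K_{g,g}(a^*_{g,g},a^*_{g,g};\theta)+(1-p_g)K_{g,-g}(a^*_{g,-g},a^*_{-g,g};\theta)$ of Definition \ref{def:EZ}. If some $\theta^*\in\mathrm{supp}(\mu_g^*)$ were not a minimizer, I would pick a minimizer $\theta^\dagger$, use the uniform likelihood-ratio bounds of Appendix \ref{subsec:Regularity-Assumptions-learning} to choose neighborhoods of $\theta^*$ and $\theta^\dagger$ on which the objective is separated by some $\delta>0$, and integrate against the full-support prior; the posterior mass near $\theta^*$ then decays exponentially relative to the mass near $\theta^\dagger$, contradicting $\theta^*\in\mathrm{supp}(\mu_g^*)$.

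\emph{Correct strategic beliefs under near-perfect monitoring.} For the second statement, specialize to $\Theta_g=\mathbb{A}^2\times\mathcal F_g$ with monitoring that reports the rival's action with probability $\tau$ and is uniform otherwise. The weighted objective splits into a consequence part, uniformly bounded by the finite-KL regularity, and a monitoring part. For any conjecture component $\hat a$ disagreeing with the rival's limiting play in a match of positive weight, the monitoring divergence contains the term $\bigl(\tau+\tfrac{1-\tau}{|\mathbb A|}\bigr)\ln\frac{\tau+(1-\tau)/|\mathbb A|}{(1-\tau)/|\mathbb A|}$, which diverges to $+\infty$ as $\tau\uparrow1$. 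Since $\mathbb A$ is finite there are finitely many possible mismatches, so a single threshold $\underline\tau<1$ renders every mismatched conjecture strictly worse than the matched one (whose monitoring divergence is zero); as both weights $p_g,1-p_g$ are positive, any minimizer must set the own-group conjecture to $a^*_{g,g}$ and the cross-group conjecture to $a^*_{-g,g}$. Combined with the concentration result, $\mu_g^*$ places probability one on this correct strategic conjecture.

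\emph{Main obstacle.} The delicate part is the belief-concentration step: the data are \emph{not} i.i.d., because strategies drift over time and the agent's own action is history dependent, so I must combine a martingale strong law for the bounded likelihood increments with Cesàro convergence of the drifting conditional means, and then use the uniform equicontinuity and boundedness conditions of Appendix \ref{subsec:Regularity-Assumptions-learning} to upgrade pointwise log-likelihood asymptotics into control of the posterior over neighborhoods of the continuum parameter space $\Theta_g$.
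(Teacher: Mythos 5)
Your proposal is correct, and two of its three steps essentially coincide with the paper's proof. The optimality-of-behavior step is the paper's argument run forward instead of by contradiction (the paper fixes $\epsilon_{1},\epsilon_{2}>0$ via continuity of $U_{g}$ and derives $\tilde{a}_{iA}^{(t)}\not\to a_{AA}^{*}$; you pass to the limit in the $\epsilon$-best-response inequality and send $\epsilon\downarrow0$). The near-perfect-monitoring step is the paper's almost verbatim: it takes $\bar{K}<\infty$ as a uniform bound on the consequence-divergence (compactness of the parameter space plus continuity of $K_{g,g'}$, with $\mathbb{A}$ finite) and picks $\underline{\tau}$ so that $\min(p_{A},p_{B})\cdot D_{KL}(\varphi_{\tau}^{\bullet}(a_{-i})\parallel\varphi_{\tau}^{\bullet}(a_{-i}'))>\bar{K}$ for every pair $a_{-i}\ne a_{-i}'$ --- exactly your finitely-many-mismatches threshold. (Incidentally, your conclusion that the minimizer sets the own-group conjecture to $a_{g,g}^{*}$ and the cross-group conjecture to $a_{-g,g}^{*}$ matches the paper's proof body; the proposition statement's ``$\mu_{A}^{*}$ puts probability 1 on $(a_{AA}^{*},a_{AB}^{*})$'' is evidently a typo for $(a_{AA}^{*},a_{BA}^{*})$.)

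Where you genuinely diverge is the belief-concentration step, and your route is valid. The paper runs no martingale decomposition: it chooses $T_{1}$ so that aggregate play puts mass at least $1-\delta$ on the limit strategies, conditions on the event that the agent plays exactly the EZ strategies after $T_{2}$ (whose probability tends to one as $T_{2}\to\infty$ since $\mathbb{A}$ is finite and play converges a.s.), and dominates the averaged log-likelihood $\ell_{s}(\theta,\omega)$ over the neighborhoods $V^{*},V^{opt}$ by an auxiliary i.i.d.\ sequence $(\xi_{k})$ --- worst-case value $C$ with probability $\delta$, limit-profile log-likelihood otherwise --- so the classical strong LLN yields only the one-sided bounds $\limsup_{s}\sup_{V^{*}}\ell_{s}\le\epsilon+\delta C+(1-\delta)WL(\theta^{*})$ and $\liminf_{s}\inf_{V^{opt}}\ell_{s}\ge-\epsilon-\delta C+(1-\delta)WL(\theta^{opt})$, which suffice. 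Your martingale-SLLN-plus-Ces\`aro argument attacks the non-i.i.d.\ data directly and delivers the full a.s.\ limit of $\tfrac{1}{t}\ln L_{t}(\theta)$, which is more than needed but goes through under the same assumptions: increments are bounded by $C$, own play is eventually constant a.s.\ (finite $\mathbb{A}$), aggregate play converges, and $K_{g,g'}$ is continuous, so the drifting conditional means converge pathwise (note the entropy term in your predictable increment is $\theta$-independent but strategy-dependent, so its Ces\`aro limit also rests on convergence of play --- worth a sentence, not a gap). Both routes then close identically, using the uniform $1\pm\epsilon$ likelihood-ratio bounds to make pointwise asymptotics uniform on neighborhoods, the full-support prior to keep mass near the minimizer, and the fact that $\theta^{*}\in\mathrm{supp}(\mu_{g}^{*})$ forces $\liminf_{t}\tilde{\mu}_{i}^{(t)}(V^{*})>0$, contradicting the posterior decay. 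The trade-off: the paper's coupling avoids martingale machinery at the cost of the $\delta,\epsilon$ bookkeeping and the two-stage $T_{1},T_{2}$ limiting argument for own-play deviations; yours handles own-action drift without conditioning and yields exponential decay rates, at the cost of invoking a martingale strong law.
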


\begin{proof}
For $\mu$ a belief and $g\in\{A,B\},$ let $u^{\mu}(a_{i};g)$ represent
subjective expected payoff from playing $a_{i}$ against group $g$.
Suppose $a_{AA}^{*}\notin\text{argmax}_{\hat{a}\in\mathbb{A}}u^{\mu_{A}^{*}}(\hat{a};A)$
(the other cases are analogous). By the continuity assumptions on
$U_{A}$ (which is also bounded because $\mathcal{F}_{A}$ is bounded),
there are some $\epsilon_{1},\epsilon_{2}>0$ so that whenever $\mu_{i}\in\Delta(\Theta_{A})$
with $\parallel\mu_{i}-\mu_{A}^{*}\parallel<\epsilon_{1}$, we also
have $u^{\mu_{i}}(a_{AA}^{*};A)<\max_{\hat{a}\in\mathbb{A}}u^{\mu_{i}}(\hat{a};A)-\epsilon_{2}.$
By the definition of asymptotically empirical best responses, find
$N$ so that $\mathfrak{a}_{A,A}(h)$ must be a myopic $\epsilon_{2}$-best
response when there are at least $N$ periods of matches against A
and B. Agent $i$ has a strictly positive chance to match with groups
A and B in every period. So, at all except a null set of points in
the probability space, $i$'s history eventually records at least
$N$ periods of play by groups A and B. Also, by assumption, almost
surely $\tilde{\mu}_{i}^{(t)}\to\mu_{A}^{*}.$ This shows that by
asymptotically myopic best responses, almost surely $\tilde{a}_{iA}^{(k)}\not\to a_{AA}^{*},$
a contradiction.

Now suppose some $\theta_{A}^{*}=(a_{A}^{*},a_{B}^{*},f^{*})$ in
the support of $\mu_{A}^{*}$ does not minimize the weighted KL divergence
in the definition of EZ (the case of a parameter $\theta_{B}^{*}$
in the support of $\mu_{B}^{*}$ not minimizing is similar). Then
we have 

\begin{align*}
\theta_{A}^{*}\notin\underset{\hat{\theta}\in\Theta_{A}}{\text{argmin}}\left\{ \begin{array}{c}
(p_{A})\cdot D_{KL}(F^{\bullet}(a_{AA}^{*},a_{AA}^{*})\times\varphi^{\bullet}(a_{AA}^{*})\parallel\hat{F}(a_{AA}^{*},\hat{a}_{A})\times\varphi^{\bullet}(\hat{a}_{A}))\\
+(1-p_{A})\cdot D_{KL}(F^{\bullet}(a_{AB}^{*},a_{BA}^{*})\times\varphi^{\bullet}(a_{BA}^{*})\parallel\hat{F}(a_{AB}^{*},\hat{a}_{B})\times\varphi^{\bullet}(\hat{a}_{B}))
\end{array}\right\} 
\end{align*}
where $\hat{\theta}=(\hat{a}_{A},\hat{a}_{B},\hat{F}).$

This is equivalent to: 

\[
\theta_{A}^{*}\notin\underset{\hat{\theta}\in\Theta_{A}}{\text{argmin}}\left[\begin{array}{c}
(p_{A})\cdot\mathbb{E}_{(y,m)\sim F^{\bullet}(a_{AA}^{*},a_{AA}^{*})\times\varphi^{\bullet}(a_{AA}^{*})}\ln(\hat{f}(a_{AA}^{*},\hat{a}_{A})(y)\cdot\varphi^{\bullet}(\hat{a}_{A})(m))\\
+(1-p_{A})\cdot\mathbb{E}_{(y,m)\sim F^{\bullet}(a_{AB}^{*},a_{BA}^{*})\times\varphi^{\bullet}(a_{BA}^{*})}\ln(\hat{f}(a_{AB}^{*},\hat{a}_{B})(y)\cdot\varphi^{\bullet}(\hat{a}_{B})(m))
\end{array}\right]
\]

Let this objective, as a function of $\hat{\theta}$, be denoted $WL(\hat{\theta}).$
There exists $\theta_{A}^{opt}=(a_{A}^{opt},a_{B}^{opt},f^{opt})\in\Theta_{A}$
and $\delta,\epsilon>0$ so that $(1-\delta)WL(\theta_{A}^{opt})-2\delta C-3\epsilon>(1-\delta)WL(\theta_{A}^{*}).$
By assumption on the primitives, find open neighborhoods $V^{opt}$
and $V^{*}$ of $\theta_{A}^{opt},\theta_{A}^{*}$ respectively, so
that for all $a_{i}\in\mathbb{A},$ $g\in\{A,B\},$ $y\in\mathbb{Y}$,
$m\in\mathbb{M}$, $1-\epsilon\le[f^{opt}(a_{i},a_{g}^{opt})(y)\cdot\varphi^{\bullet}(a_{g}^{opt})(m)]/[\hat{f}(a_{i},\hat{a}_{g})(y)\cdot\varphi^{\bullet}(\hat{a}_{g})(m)]\le1+\epsilon$,
for all $\hat{\theta}=(\hat{a}_{A},\hat{a}_{B},\hat{f})\in V^{opt}$,
and also $1-\epsilon\le[f^{*}(a_{i},a_{g}^{*})(y)\cdot\varphi^{\bullet}(a_{g}^{*})(m)]/[\hat{f}(a_{i},\hat{a}_{g})(y)\cdot\varphi^{\bullet}(\hat{a}_{g})(m)]\le1+\epsilon$
for all $\hat{\theta}=(\hat{a}_{A},\hat{a}_{B},\hat{f})\in V^{*}$.
Also, by convergence of play in the populations, find $T_{1}$ so
that in all periods $t\ge T_{1},$ $\alpha_{AA}^{(t)}(a_{AA}^{*})\ge1-\delta$
and $\alpha_{BA}^{(t)}(a_{BA}^{*})\ge1-\delta$.

Consider a probability space defined by $\Omega:=(\{A,B\}\times\mathbb{A}^{2}\times(\mathbb{Y})^{\mathbb{A}^{2}}\times\mathbb{M}^{\mathbb{A}})^{\infty}$
that describes the randomness in an agent's learning process. For
a point $\omega\in\Omega$ and each period $t\ge1$, $\omega_{t}=(g,a_{-i,A},a_{-i,B},(y_{a_{i},a_{-i}})_{(a_{i},a_{-i})\in\mathbb{A}^{2}},(m_{a_{-i}})_{a_{-i}\in\mathbb{A}})$
specifies the group $g$ of the matched opponent, the play $a_{-i,A},a_{-i,B}$
of hypothetical opponents from groups A and B, the hypothetical consequence
$y_{a_{i},a_{-i}}$ that would be generated for every pair of strategies
$(a_{i},a_{-i})$ played, and the hypothetical monitoring signal $m_{a_{-i}}$
that would be generated for every opponent strategy $a_{-i}$. As
notation, let $o(\omega,t)$, $a_{-i,A}(\omega,t),$ $a_{-i,B}(\omega,t)$,
$y_{a_{i},a_{-i}}(\omega,t)$, $m_{a_{-i}}(\omega,t)$ denote the
corresponding components of $\omega_{t}.$ For $T_{2}\ge T_{1},$
define $\mathbb{P}_{T_{2}}$ over $\Omega_{T_{2}}^{\infty}:=\times_{t=T_{2}}^{\infty}(\{A,B\}\times\mathbb{A}^{2}\times(\mathbb{Y})^{\mathbb{A}^{2}}\times\mathbb{M}^{\mathbb{A}})$
in the natural way. That is, it is independent across periods, and
within each period, the density (or probability mass function if $\mathbb{Y}$
is finite) of $\omega_{t}=(g,a_{-i,A},a_{-i,B},(y_{a_{i},a_{-i}})_{(a_{i},a_{-i})\in\mathbb{A}^{2}},(m_{a_{-i}})_{a_{-i}\in\mathbb{A}})$
is 
\[
p_{g}\cdot\alpha_{AA}^{(t)}(a_{-i,A})\alpha_{BA}^{(t)}(a_{-i,B})\cdot\prod_{(a_{i},a_{-i})\in\mathbb{A}^{2}}f^{\bullet}(a_{i},a_{-i})(y_{a_{i},a_{-i}})\cdot\prod_{a_{-i}\in\mathbb{A}}\varphi^{\bullet}(a_{-i})(m_{a_{-i}}).
\]

For $\theta=(a_{A}^{\theta},a_{B}^{\theta},F^{\theta})\in\Theta_{A}$
with $f^{\theta}$ the density of $F^{\theta}$, $\omega\in\Omega_{T_{2}}^{\infty},$
consider the process in $s=1,2,3,...$
\begin{multline*} 
\ell_{s}(\theta,\omega):=\frac{1}{s}\sum_{t=T_{2}+1}^{T_{2}+s}\ln[f^{\theta}(a_{A,o(\omega,t)}^{*},a_{o(\omega,t)}^{\theta})(y_{a_{A,o(\omega,t)}^{*},a_{-i,o(\omega,t)}(\omega,t)}(\omega,t))\\ \cdot\varphi^{\bullet}(a_{-i,o(\omega,t)}(\omega,t))(m_{a_{-i,o(\omega,t)}(\omega,t)}(\omega,t))].
\end{multline*}
By choice of the neighborhood $V^{*},$ for every $s$, 
\begin{align*}
\sup_{\theta_{A}\in V^{*}}\ell_{s}(\theta_{A},\omega) & \le\epsilon+\frac{1}{s}\sum_{t=T_{2}+1}^{T_{2}+s}\ln[f^{\theta}(a_{A,o(\omega,t)}^{*},a_{o(\omega,t)}^{*})(y_{a_{A,o(\omega,t)}^{*},a_{-i,o(\omega,t)}(\omega,t)}(\omega,t)) \\ & \hspace{50mm} \cdot\varphi^{\bullet}(a_{-i,o(\omega,t)}(\omega,t))(m_{a_{-i,o(\omega,t)}(\omega,t)}(\omega,t))]\\
   &   \le\epsilon+\frac{1}{s}\sum_{t=T_{2}+1}^{T_{2}+s}
1_{\{a_{-i,o(\omega,t)}(\omega,t)=a_{o(\omega,t),A}^{*}\}}\cdot\ln[f^{\theta}(a_{A,o(\omega,t)}^{*},a_{o(\omega,t)}^{*})(y_{a_{A,o(\omega,t)}^{*},a_{o(\omega,t),A}^{*}}(\omega,t)) \\ & \hspace{50mm} \cdot \varphi^{\bullet}(a_{o(\omega,t),A}^{*})(m_{a_{o(\omega,t),A}^{*}}(\omega,t))]\\ & \hspace{26mm}
+ (1-1_{\{a_{-i,o(\omega,t)}(\omega,t)=a_{o(\omega,t),A}^{*}\}})\cdot C.
\end{align*} 
Since $T_{2}\ge T_{1},$ in every period $t,$ $\mathbb{P}_{T_{2}}(a_{-i,o(\omega,t)}(\omega,t)=a_{o(\omega,t),A}^{*})\ge1-\delta$.
Let $(\xi_{k})_{k\ge1}$ a related stochastic process: it is i.i.d.
such that each $\xi_{k}$ has $\delta$ chance to be equal to $C,$
$(1-\delta)p_{A}$ chance to be distributed according to $\ln(f^{*}(a_{AA}^{*},a_{A}^{*})(y)\cdot\varphi^{\bullet}(a_{A}^{*})(m))$
where $y\sim f^{\bullet}(a_{AA}^{*},a_{AA}^{*})$ and $m\sim\varphi^{\bullet}(a_{AA}^{*}),$
and $(1-\delta)p_{B}$ chance to be distributed according to $\ln(f^{*}(a_{AB}^{*},a_{B}^{*})(y)\cdot\varphi^{\bullet}(a_{B}^{*})(m))$
where $y\sim f^{\bullet}(a_{AB}^{*},a_{BA}^{*})$ and $m\sim\varphi^{\bullet}(a_{BA}^{*}).$
By law of large numbers, $\frac{1}{s}\sum_{k=1}^{s}\xi_{k}$ converges
almost surely to $\delta C+(1-\delta)WL(\theta_{A}^{*}).$ By this
comparison, $\limsup_{s}\sup_{\theta_{A}\in V^{*}}\ell_{s}(\theta_{A},\omega)\le\epsilon+\delta C+(1-\delta)WL(\theta_{A}^{*})$
$\mathbb{P}_{T_{2}}$-almost surely. By a similar argument, $\liminf_{s}\inf_{\theta_{A}\in V^{opt}}\ell_{s}(\theta_{A},\omega)\ge-\epsilon-\delta C+(1-\delta)WL(\theta_{A}^{opt})$
$\mathbb{P}_{T_{2}}$-almost surely.

Along any $\omega$ where we have both $\limsup_{s}\sup_{\theta_{A}\in V^{*}}\ell_{s}(\theta_{A},\omega)\le\epsilon+\delta C+(1-\delta)WL(\theta_{A}^{*})$
and $\liminf_{s}\inf_{\theta_{A}\in V^{opt}}\ell_{s}(\theta_{A},\omega)\ge-\epsilon-\delta C+(1-\delta)WL(\theta_{A}^{opt})$,
if $\omega$ also leads to $i$ always playing $a_{AA}^{*}$ against
group A and $a_{AB}^{*}$ against group B in all periods starting
with $T_{2}+1,$ then the posterior belief assigns to $V^{*}$ must
tend to 0, hence $\tilde{\mu}_{i}^{(t)}\not\to\mu_{A}^{*}.$ Starting
from any length $T_{2}$ history $h,$ there exists a subset $\hat{\Omega}_{h}\subseteq\Omega_{T_{2}}^{\infty}$
that leads to $i$ not playing the EZ strategy in at least one period
starting with $T_{2}+1.$ So conditional on $h,$ the probability
of $\tilde{\mu}_{i}^{(t)}\to\mu_{A}^{*}$ is no larger than $1-\mathbb{P}_{T_{2}}(\hat{\Omega}_{h}).$
The unconditional probability is therefore no larger than $\mathbb{E}_{h}[1-\mathbb{P}_{T_{2}}(\hat{\Omega}_{h})],$
where $\mathbb{E}_{h}$ is taken with respect to the distribution
of period $T_{2}$ histories for $i.$ But this term is also the probability
of $i$ playing non-EZ action at least once starting with period $T_{2}.$
Since there are finitely many actions and $(\tilde{a}_{iA}^{(t)},\tilde{a}_{iB}^{(t)})\to(a_{AA}^{*},a_{AB}^{*})$
almost surely, $\mathbb{E}_{h}[1-\mathbb{P}_{T_{2}}(\hat{\Omega}_{h})]$
tends to 0 as $T_{2}\to\infty.$ We have a contradiction as this shows
$\tilde{\mu}_{i}^{(t)}\not\to\mu_{A}^{*}$ with probability 1.

Now we prove the second part of this proposition. Let $\bar{K}<\infty$
be an upper bound on $D_{KL}(F^{\bullet}(a_{i},a_{-i})\parallel\hat{F}(a_{i},\hat{a}_{-i}))$
across all $a_{i},a_{-i}\in\mathbb{A},$ $(\hat{a}_{A},\hat{a}_{B},\hat{F})\in\Theta_{A}\cup\Theta_{B}$.
Here $\bar{K}$ is finite because $\mathbb{A}$ is finite and $K_{g,g^{'}}$
is continuous in the parameter, which is from a compact domain. Let
$\varphi_{\tau}^{\bullet}(a_{-i})$ be the distribution over $\mathbb{M}$
when opponent plays $a_{-i}$ and the monitoring structure is such
that the monitoring signal matches the opponent's strategy with probability
$\tau$ and is uniformly random on $\mathbb{M}$ with the complementary
probability. It is clear that there exists some $\underline{\tau}<1$
so that for any $a_{-i}\ne a_{-i}^{'}$, $\tau\in(\underline{\tau},1),$
we get $\min(p_{A},p_{B})\cdot D_{KL}(\varphi_{\tau}^{\bullet}(a_{-i})\parallel\varphi_{\tau}^{\bullet}(a_{-i}'))>\bar{K}.$
Therefore, given any $(a_{AA}^{*},a_{AB}^{*},a_{BA}^{*})\in\mathbb{A}^{3},$
the solution to 
\[
\underset{\hat{\theta}\in\Theta_{A}}{\min}\left[\begin{array}{c}
(p_{A})\cdot D_{KL}(F^{\bullet}(a_{AA}^{*},a_{AA}^{*})\times\varphi^{\bullet}(a_{AA}^{*})\parallel\hat{F}(a_{AA}^{*},\hat{a}_{A})\times\varphi^{\bullet}(\hat{a}_{A}))\\
+(1-p_{A})\cdot D_{KL}(F^{\bullet}(a_{AB}^{*},a_{BA}^{*})\times\varphi^{\bullet}(a_{BA}^{*})\parallel\hat{F}(a_{AB}^{*},\hat{a}_{B})\times\varphi^{\bullet}(\hat{a}_{B}))
\end{array}\right]
\]
must satisfy $\hat{a}_{A}=a_{AA}^{*},$ $\hat{a}_{B}=a_{BA}^{*}$,
because $(a_{AA}^{*},a_{BA}^{*},F)$ for any $F\in\Theta_{A}$ has
a KL divergence no larger than $\bar{K}$. On the other hand, any
$(\hat{a}_{A},\hat{a}_{B},\hat{F})$ with either $\hat{a}_{A}\ne a_{AA}^{*}$
or $\hat{a}_{B}\ne a_{BA}^{*}$ has KL divergence strictly larger
than $\bar{K}$ by the choice of $\tau$. 
\end{proof}

\subsection{Multiple Situations}

Now suppose there are multiple situations $G\in\mathcal{G}$ and a
distribution $q\in\Delta(\mathcal{G})$, with $\mathcal{G}$ finite.
At the start of period $t=1,$ Nature draws a situation $G^{(1)}$
from $\mathcal{G}$ according to $q$, and consequences are generated
according to $F^{\bullet}(\cdot,\cdot,G^{(1)})$ until period $t=T+1.$
In period $T+1,$ Nature again draws a situation $G^{(2)}$ from $\mathcal{G}$
according to $q$, and consequences are generated according to $F^{\bullet}(\cdot,\cdot,G^{(2)})$
until period $t=2T+1,$ and so forth. Agents start with a prior over parameters in
their group's  model, $\mu_{g}^{(0)}\in\Delta(\Theta_{g})$.
In periods $T+1,2T+1,...$ agents reset their belief to $\mu_{g}^{(0)},$
and their belief in each period over the  parameters in their
 model only use histories since the last reset. This belief
corresponds to agents thinking that the data-generating process is
redrawn according to $\mu_{g}^{(0)}$ every $T$ periods.

Suppose for every $G\in\mathcal{G},$ the hypotheses of Proposition
\ref{prop:learning} hold in a society where $G$ is the only true
situation. Denote $(a_{AA}^{*}(G),a_{AB}^{*}(G),a_{BA}^{*}(G),a_{BB}^{*}(G),\mu_{A}^{*}(G),\mu_{B}^{*}(G))$
as the limit of the agents' behavior and beliefs with situation $G.$
Then it is straightforward to see that in a society with the situation
redrawn every $T$ periods, the expected undiscounted average payoff
of an agent in group $g$ approaches the fitness of $g$ in the EZ
characterized by the behavior and beliefs $(a_{AA}^{*}(G),a_{AB}^{*}(G),a_{BA}^{*}(G),a_{BB}^{*}(G),\mu_{A}^{*}(G),\mu_{B}^{*}(G))_{G\in\mathcal{G}}$
with the distribution $q$ over situations, as $T\to\infty$. This
provides a foundation for fitness in EZ as the agents' objective payoffs
when the true situation changes sufficiently slowly. 
\end{document}

\section{The Single-Agent Case \label{OA:SingleAgent}}

This section records an observation related to our stability concepts
when applied to the single-agent case.  Specifically,  situation $G$
is a \emph{decision problem} if $(a_{i},a_{-i})\mapsto F^{\bullet}(a_{i},a_{-i},G)$
only depends on $a_{i}.$ If every situation is a  decision problem,
then the correctly specified model is evolutionarily stable against
\emph{any} other model, except when there are identification issues.
We adapt the notion of strong identification from \citet{esponda2016berk}.
\begin{defn}
Model $\Theta_{A}$ is \emph{strongly identified }in EZ $\mathfrak{Z}=(\mu_{A}(G),\mu_{B}(G),p,a(G))_{G\in\mathcal{G}}$
if in every situation $G$, whenever $F',F''\in\Theta_{A}$ both solve
\begin{align*}
\min_{F\in\Theta_{A}}\left\{ p_{A}\cdot K(F;a_{AA},a_{AA},G)+(1-p_{A})\cdot K(F;a_{AB},a_{BA},G)\right\} ,
\end{align*}
we have $F^{'}(a_{i},a_{AA})=F^{''}(a_{i},a_{AA})$ and $F^{'}(a_{i},a_{BA})=F^{''}(a_{i},a_{BA})$
for all $a_{i}\in\mathbb{A}$.
\end{defn}
\begin{prop}
\label{prop:decision_problem}Suppose every situation is a decision
problem. Let two models $\Theta_{A},\Theta_{B}$
be given, where $\Theta_{A}$ is correctly specified. Suppose there
exists at least one  EZ with $p_{A}=1$, and $\Theta_{A}$ is strongly
identified in all such equilibria. Then $\Theta_{A}$ evolutionarily
stable under against $\Theta_{B}$.
\end{prop}
\begin{proof}
In any  EZ, let $F\in\text{supp}(\mu_{A}(G))$ and note that $F^{\bullet}(\cdot,\cdot,G)\in\Theta_{A}$
since $\Theta_{A}$ is correctly specified. Both $F$ and $F^{\bullet}(\cdot,\cdot,G)$
solve the weighted minimization problem, the former because it is
in the support of $\mu_{A}$, the latter because it attains the lowest
minimization objective of 0. By strong identification, the set of
best responses to $a_{AA}(G)$ and $a_{BA}(G)$ under the belief $\mu_{A}$
is the same as set of actions that maximize payoffs in the decision
problem given by $F^{\bullet}(\cdot,\cdot,G)$. Therefore, adherents
of $\Theta_{A}$ obtain the highest possible objective payoffs in
the stage game in situation $G$. This applies to every situation,
so $\Theta_{A}$ has weakly higher fitness than $\Theta_{B}$ in the
 EZ.
\end{proof}
The result that a resident correct specification is immune to invasions
from misspecifications echoes related results in \citet{FL_mutation}
and \citet*{FII_welfare_based}. We primarily focus on stage games
where multiple agents' actions jointly determine their payoffs and
characterize which misspecifications can invade a rational society
in which environments. 

\noindent